\documentclass[draft,twoside]{article}

%--From utdiss----------------------------------------

\usepackage{amsmath,amsthm,amsfonts,amscd,mathrsfs,pifont} 
\usepackage{eucal}  % Euler fonts
\usepackage{verbatim}     % I need the verbatim package here.
\usepackage[dvips]{graphicx}
\usepackage{hyperref}

\newtheorem{thm}{Theorem}[section]
\newtheorem{cor}[thm]{Corollary}

\newtheorem{lemma}[thm]{Lemma}

\newtheorem{thm*}{Theorem}[]
\newtheorem{cor*}[thm*]{Corollary}
\newtheorem{claim*}[thm*]{Claim}
\newtheorem{lemma*}[thm*]{Lemma}
\newtheorem{prop*}[thm*]{Proposition}
\newtheorem{conj*}[thm*]{Conjecture}

\theoremstyle{definition}

\newtheorem{question*}{Question}[section]

\newtheorem{defn*}{Definition}

\theoremstyle{remark}
\newtheorem*{rem}{Remark}

\newcommand{\BB}[1]{\ensuremath{\mathbb{#1}}}

\newcommand{\R}{\ensuremath{\BB{R}}}
\newcommand{\Z}{\ensuremath{\BB{Z}}}
\newcommand{\C}{\ensuremath{\BB{C}}}

\newcommand{\bs}{\ensuremath{\boldsymbol}}
\newcommand{\mf}{\ensuremath{\mathfrak}}
\newcommand{\wt}{\ensuremath{\widetilde}}

\newcommand{\la}{\ensuremath{\langle}}
\newcommand{\ra}{\ensuremath{\rangle}}

\newcommand{\CnoR}{\ensuremath{\C_{\ast}}}
\newcommand{\CnoRsub}{\ensuremath{\C_{\ast}}}
\newcommand{\transpose}{\ensuremath{\mathsf{T}}}

\newcommand{\ip}[1]{\mathrm{Im}(#1)}
\newcommand{\rp}[1]{\mathrm{Re}(#1)}

\DeclareMathOperator{\sgn}{sgn}

\DeclareMathOperator{\erf}{erf}
\DeclareMathOperator{\erfc}{erfc}

\DeclareMathOperator{\Tr}{Tr}

\DeclareMathOperator{\Pf}{Pf}

\setlength{\oddsidemargin}{.5in}
\setlength{\evensidemargin}{.5in}
\setlength{\topmargin}{0in}
\setlength{\headsep}{.6in}
\setlength{\footskip}{.5in}
\setlength{\textheight}{8in}
\setlength{\textwidth}{5.5in}

\numberwithin{equation}{section} 

\pagestyle{myheadings} \markboth{\hfill Correlation Functions of
  Asymmetric Real Matrices \hfill}{\hfill A.~Borodin, C.D.~Sinclair  \hfill}

\numberwithin{equation}{section}

\bibliographystyle{plain}

\begin{document}
\title{\bf Correlation Functions of Ensembles of Asymmetric Real
  Matrices}  
\author{\sc A.~Borodin, C.D.~Sinclair}
\maketitle

\begin{abstract}
We give a closed form for the correlation functions of
ensembles of asymmetric real matrices in terms of the Pfaffian of an
antisymmetric matrix formed from a $2 \times 2$ matrix kernel
associated to the ensemble.  We also derive closed forms for the
matrix kernel and correlation functions for Ginibre's real ensemble.  
\end{abstract}

\noindent Difficulties arise in the study of ensembles of asymmetric random
matrices which do not occur in the more commonly studied ensembles of
random matrices ({\it e.g.} Hermitian, unitary, complex asymmetric,
etc.). These difficulties stem from the basic fact that the
eigenvalues of real matrices can either be real or members of complex
conjugate pairs. Due to this additional complexity, our understanding
of ensembles of real asymmetric matrices lags far behind our
understanding of the Hermitian ensembles. In spite of these
complications many important quantities such as a closed form for
ensemble averages and correlation functions can be expressed using
analogous structures for their Hermitian counterparts.

Here we will show that the correlation functions for an ensemble of
asymmetric real matrices can be expressed as the Pfaffian of a block
matrix whose entries are expressed in terms of a $2 \times 2$ matrix
kernel associated to the ensemble. We find much inspiration from Tracy
and Widom's paper on correlation and cluster functions of Hermitian
and related ensembles \cite{MR1657844}. However, instead of using
properties of the Fredholm determinant to calculate the correlation
functions via the cluster functions, we use the notion of the Fredholm
Pfaffian to determine the correlation functions directly. A Pfaffian
analog of the Cauchy-Binet formula introduced by Rains [10] lies at
the heart of our proof. For completeness we will include Rains' proof
here. In place of the identities of de Bruijn \cite{MR0079647} used by
Tracy and Widom we will use an identity of the second author
\cite{sinclair-2007} to compute the correlation functions for
ensembles of asymmetric matrices. Rains' Cauchy-Binet formula has
applicability in a wider context than just the determination of the
correlation function for ensembles of asymmetric matrices, and we will
use it to give a simplified proof of the correlation function of
Hermitian ensembles of random matrices when $\beta = 1$ and $\beta=
4$.

Next we will apply our theory to Ginibre's ensemble of real asymmetric
matrices.  Matrices in this ensemble are square (say $N \times N$),
asymmetric with iid Gaussian entries.  By scaling the eigenvalues by
$\sqrt{N}$, as $N \rightarrow \infty$, the support of the eigenvalues
converges to the unit disk.  This scaling allows us to analyze the
asymptotics of the matrix kernel in the bulk, and we will discover a
new universality class of matrix models.  This will yield a closed
form for the limiting correlations of points in the bulk.  Two
different kernels emerge: one when expanding about points on the real
axis and another when expanding about point in the (non-real) complex
bulk.  In the latter case we find that the limiting (large $N$)
correlation functions of Ginibre's real ensemble in the bulk off the
real line are identical to the limiting correlation functions of
Ginibre's complex ensemble in the bulk.  

\vspace{.5cm}

\noindent{\it Acknowledgments.} We thank Percy Deift
for helpful early discussions and for introducing the authors to each
other.  The second author would also like to thank Brian Rider for
many helpful discussions regarding Ginibre's real ensemble.  Finally,
we would like to thank Eric Rains for allowing us to include his proof
of the the Pfaffian Cauchy-Binet formula (\ref{app:B}). 

\section{Point Processes on the Space of Eigenvalues of $\R^{N
    \times N}$}
\label{sec:point-proc-eigenv}
We start rather generally since the results in this manuscript can be
used to describe not only the statistics of eigenvalues of ensembles
of real matrices but also the statistics of roots of certain ensembles
of real polynomials.  We begin with random point processes
on two-component systems. 

Let $X$ be the set of finite multisets of the closed upper half plane
$\overline{H} \subset \C$.  An element $\xi \in X$ is called a {\em
  configuration}, and $X$ is called the {\em configuration space} of
$\overline{H}$.  Given a Borel set $A$ of $\overline{H}$, we define
the function $\mathcal{N}_A: X \rightarrow \Z^{\geq 0}$ by specifying
that $\mathcal{N}_A(\xi)$ be the cardinality (as a multiset) of $(\xi
\cap A)$. We define $\Sigma$ to be the sigma-algebra on $X$
generated by $\{ \mathcal{N}_A: A \subseteq \overline{H} \mbox{
  Borel}\}$.  A probability measure $\mathcal{P}$ defined on $\Sigma$
is called a {\em random point process} on $X$.

For each pair of non-negative integers $(L,M)$ we define $X_{L,M}$ to
be the subset of $X$ consisting of those configurations $\xi$ which
consist of exactly $L$ real points and $M$ points in the open upper
half plane $H$.  That is,
\[
X_{L,M} := \big\{ \xi \in X : \mathcal{N}_{\R}(\xi) = L \mbox{ and }
\mathcal{N}_H(\xi) = M \big\}.  
\]
Clearly, $X_{L,M}$ is measurable and $X$ can be written as the disjoint
union 
\[
X = \bigcup_{L \geq 0, M \geq 0} X_{L,M}.
\]
Given a point process $\mathcal{P}$ on $X$, we may define the measure
$\mathcal{P}_{L,M}$ on $X$ by $\mathcal{P}_{L,M}(B) := \mathcal{P}(B
\cap X_{L,M} )$ for each $B \in \Sigma$ . The measure
$\mathcal{P}_{L,M}$ induces a measure on $X_{L,M}$ (and we will also
use the symbol $\mathcal{P}_{L,M}$ for this measure). 

Given a matrix $\mathbf{Y} \in \R^{N \times N}$ there must be a pair of
non-negative integers $(L,M)$ with $L + 2M = N$ such that, counting
multiplicities, $\mathbf{Y}$ has $L$ real eigenvalues and $M$ non-real complex
conjugate pairs of eigenvalues.  By representing each pair of complex
conjugate eigenvalues by its representative in $H$, we may
identify all possible multisets of eigenvalues of 
matrices in $\R^{N \times N}$ with the disjoint union
\begin{equation}
\label{eq:1}
X_N := \bigcup_{(L,M) \atop L + 2M = N} X_{L,M}.
\end{equation}
Similarly, we may identify all possible multisets of roots of degree
$N$ real polynomials with this disjoint union.  Thus, when studying
the statistics of eigenvalues of ensembles of asymmetric real matrices
(respectively of the roots of degree $N$ real polynomials) we may
restrict ourselves to random point processes on $H$ which are
supported on the disjoint union given in (\ref{eq:1}).  That is, the
eigenvalue statistics of an ensemble of real matrices is determined by
a set of finite measures $\mathcal{P}_{L,M}$ on $X_{L,M}$ for each
pair of non-negative integers $(L,M)$ with $L + 2M = N$.  In this
situation we will say that $\mathcal{P}$ is a random point process on
$X_N$ associated to the family of finite measures $\{\mathcal{P}_{L,M}
: L + 2M = N\}$.  

From here forward we will assume that $L, M$ and $N$ are non-negative
integers such that $L + 2M = N$, and a sum indexed over $(L, M )$ will
be taken to be a sum over all pairs satisfying this
condition. Moreover, when we refer to a point process we will always
mean a point process on $X_N$.

\section{Point Processes on $X_N$  Associated to Weights} 
\label{sec:point-processes-xn}

In this section we will introduce an important class of point
processes associated to Borel measures on $\C$.  We will be
particularly interested in measures which are a sum of two mutually
singular measures: one of which is absolutely continuous with respect
to Lebesgue measure on $\R$ and one which is absolutely continuous
with respect to Lebesgue measure on $\C$.  The corresponding densities
with respect to Lebesgue measure will allow us to construct a {\em
  weight function} which uniquely determines the associated point
process on $X_N$.  Point processes of this sort arise in the study of
asymmetric random matrices and the range of multiplicative functions
on polynomials with real coefficients.

It will be useful to distinguish non-real complex numbers and we
set $\CnoR = \C \setminus \R$.   

We start rather generally by constructing measures on $X_{L,M}$ from
measures on  $\R^L \times \CnoR^M$.  The benefit in
doing this is that it  allows us to express important
quantities associated to point processes (averages, correlation
functions, etc.) as rather pedestrian integrals over $\R^L \times
\CnoR^M$.   To each $(\bs{\alpha}, \bs{\beta}) \in \R^L \times
\CnoR^M$ we associate a configuration in $X_{L,M}$ given by
\[
\{\bs{\alpha}, \bs{\beta}\} := \{\alpha_1, \ldots, \alpha_L,
\widehat{\beta}_1, \ldots, \widehat{\beta}_M\}, \quad \mbox{where}
\quad \widehat{\beta_m} 
= \{ \beta_m, \overline{\beta_m} \} \cap H.
\]
A given configuration $\xi \in X_{L,M}$ may correspond to several
vectors in $\R^L \times \CnoR^M$ and we will call $\{ (\bs{\alpha},
\bs{\beta} ) : \{ \bs{\alpha}, \bs{\beta} \} = \xi \}$ the set of {\em
  configuration vectors} of $\xi$. 

A function $F$ on $X_{L,M}$ induces a function on $\R^L \times
\CnoR^M$ specified by $(\bs{\alpha}, \bs{\beta}) \mapsto
F\{\bs{\alpha}, \bs{\beta}\}$, and given a measure $\nu_{L,M}$ on
$\R^L \times \CnoR^M$ there exists a unique measure
$\mathcal{P}_{L,M}$ on $X_{L,M}$ specified by demanding that
\begin{equation}
\label{eq:3}
\int_{X_{L,M}} F(\xi) \, d\mathcal{P}_{L,M}(\xi) = \frac{1}{L! M! 2^M}
\int_{\R^L \times 
  \CnoRsub^M} F\{ \bs{\alpha}, \bs{\beta} \} \,
d\nu_{L,M}(\bs{\alpha}, \bs{\beta}),
\end{equation}
for every $\Sigma$-measurable function $F$ on $X$.  The normalization
constant $L! M! 2^M$ arises since a generic element $\xi \in X_{L,M}$
corresponds to $L! M! 2^M$ configuration vectors.  By specifying
measures $\nu_{L,M}$ on $\R^L \times \CnoR^M$ for all pairs $(L,M)$
and normalizing so that the total measure of $X_N$ is 1, we define a
point process on $X_N$.    

A very important class of point processes arises when we demand that
the various $\nu_{L,M}$ are all related to a single measure on $\C$.
Given a measure $\nu_1$ on $\R$ and a measure $\nu_2$ on $\CnoR$, we
set $\nu$ to be the measure $\nu_1 + \nu_2$ on $\C$.  We will write
$\nu_L$ for the product measure of $\nu_1$ on $\R^L$ and $\nu_{2M}$
will be the product measure of $\nu_2$ on $\C^M$.  By combining
$\nu_L$ and $\nu_{2M}$ with a certain Vandermonde determinant we will
arrive at the desired measures on $\R^L \times \CnoR^M$.  Given a
vector $\bs{\gamma} \in \C^N$ we define $V(\bs{\gamma})$ to be the $N
\times N$ matrix whose $n,n'$ entry is given by $\gamma_{n}^{n'-1}$.  We will denote the determinant of $V(\bs{\gamma})$ by
$\Delta(\bs{\gamma})$, and define the function $\Delta :
\R^L \times \C^M \rightarrow \C$ by
\[
\Delta(\bs{\alpha}, \bs{\beta}) := \Delta(\alpha_1, \ldots,
\alpha_L, \beta_1, \overline{\beta_1}, \ldots, \beta_M,
\overline{\beta_M}).  
\]

Using these definitions we set $\nu_{L,M}$ to be the measure on $\R^L
\times \CnoR^M$ given by
\begin{equation}
\label{eq:8}
\frac{d\nu_{L,M}}{d(\nu_L \times \nu_{2M}) }(\bs{\alpha}, \bs{\beta}) =  2^M |\Delta(\bs{\alpha}, \bs{\beta})|,
\end{equation}
and we will write $\mathcal{P}_{L,M}^{\nu}$ for the measure on
$X_{L,M}$ given specified by $\nu_{L,M}$ as in (\ref{eq:3}).  If
$\mathcal{P}^{\nu}_{L,M}$ is finite for each pair $(L,M)$, then we 
set 
\[
\mathcal{P}^{\nu} := \frac{1}{Z^{\nu}} \sum_{(L,M)} 
\mathcal{P}^{\nu}_{L,M} \qquad \mbox{where} \qquad Z^{\nu} :=
\sum_{(L,M)} \mathcal{P}_{L,M}^\nu(X_{L,M}).
\]
We will call $\mathcal{P}^{\nu}$ the point process on $X_N$ associated
to the {\em weight measure} $\nu$ and $Z^{\nu}$ will be referred to as
the {\em partition function} of $\mathcal{P}^{\nu}$.  

We set $\lambda_1$ and $\lambda_2$ to be Lebesgue measure on $\R$ and
$\C$ respectively and let $\lambda := \lambda_1 + \lambda_2$.  If
there exists a function $w: \C \rightarrow[0,\infty)$ such that $\nu =
w\lambda$ (by which we mean $d\nu/d\lambda = w$), then we will call
$w$ the {\em weight function} of $\mathcal{P}^{\nu}$.  In this situation we
set $\mathcal{P}_{L,M}^{w} := \mathcal{P}_{L,M}^{\nu}$ and define
$\mathcal{P}^w$ and $Z^w$ analogously.  If $P^{\nu}$ has weight function
$w$ then
\begin{equation}
\label{eq:14}
\frac{d
  \nu_{L,M}}{d(\lambda_L \times \lambda_{2M})}(\bs{\alpha},
\bs{\beta}) = 2^M
\bigg\{ \prod_{\ell=1}^L w(\alpha_{\ell}) \prod_{m=1}^M
w(\beta_m) \bigg\} \big| \Delta(\bs{\alpha},\bs{\beta}) \big|,
\end{equation}
and we define $\Omega_{L,M}: \R^L \times \C^M$ to be the function
given on the right hand side of (\ref{eq:14}).  The collection
$\{\Omega_{L,M} : L + 2M=N\}$ plays the role of the 
joint eigenvalue probability density function, and we will call
$\Omega_{L,M}$ the $L,M$-{\em partial} joint eigenvalue probability
density function of $\mathcal{P}^w$.  

\section{Examples of Point Processes Associated to Weights}

Point processes on $X_N$ associated to weights arise in a variety of
contexts. It is often the case that the weight function $w$ is
invariant under complex conjugation.  In this situation, there
necessarily exists some function $\rho: \C \rightarrow [0,\infty)$
such that
\[
w(\gamma) := \left\{
\begin{array}{ll}
\rho(\gamma) & \quad \mbox{if } \gamma \in \R; \\ & \\
\rho(\gamma) \rho(\overline{\gamma}) & \quad \mbox{if } \gamma \in
\CnoR. 
\end{array}
\right.
\]
When it exists, we will cal $\rho$ the {\em root} or {\em eigenvalue}
function of $\mathcal{P}^w$ (depending on whether $\mathcal{P}^w$
models the roots of random polynomials or eigenvalues of random
matrices).  The root/eigenvalue function is often a more natural
descriptor of $\mathcal{P}^w$ than $w$.  In this situation we will
write $\mathcal{P}^{[\rho]}$ and $Z^{[\rho]}$ for $\mathcal{P}^w$ and
$Z^w$.  

\subsection{Ginibre's Real Ensemble}

In 1965 J.~Ginibre introduced three ensembles of random matrices whose
entries were respectively chosen with Gaussian density from $\R$, $\C$
and Hamilton's quaternions \cite{MR0173726}.  Ginibre's real ensemble
is given by $\R^{N \times N}$ together with the probability measure 
$\eta$ given by
\[
d\eta(\mathbf{Y}) = (2 \pi)^{-N^2/2} e^{-\Tr(\mathbf{Y}^{\mathrm{T}} \mathbf{Y})/2} \, d\lambda_{N
  \times N}(\mathbf{Y}), 
\]
where $\lambda_{N \times N}$ is Lebesgue measure on $\R^{N \times N}$.
This ensemble has since been named GinOE due to certain similarities
with the Gaussian Orthogonal Ensemble.  Among Ginibre's original goals
was to produce a formula for the partial joint eigenvalue probability
density functions of GinOE.  He was only able to do this for the
subset of matrices with all real eigenvalues.  In the 1990s Lehmann
and Sommers \cite{1991PhRvL..67..941L} and later Edelman
\cite{MR1437734} proved that the $L,M$-partial eigenvalue probability
density function of GinOE is given by $\Omega_{L,M}$ for the
eigenvalue function 
\[
\mathrm{Gin}(\gamma) := \exp(-\gamma^2/2) \sqrt{\erfc(\sqrt{2} |\ip{\gamma}|)}.
\]
Consequently, the investigation of the eigenvalue statistics of GinOE
reduces to the study of $\mathcal{P}^{[\mathrm{Gin}]}$.  

\subsection{The Range of Mahler measure}

The {\em Mahler measure} of a polynomial 
\begin{equation}
\label{eq:2}
f(x) = \sum_{n=0}^N a_n x^{N-n} = a_0 \prod_{n=1}^N (x - \gamma_n)
\end{equation}
is given by
\[
\mu(f) = \exp\left\{
\int_0^1 \log\big| f(e^{2 \pi i \theta}) \big| \, d\theta
\right\} = |a_0| \prod_{n=1}^N \max\{1, |\gamma_n| \}.
\]
The second equality comes from Jensen's formula.  Mahler measure
arises in a number of contexts, {\it i.e.} ergodic theory, potential
theory and Diophantine geometry and approximation (a good reference
covering the many aspects of Mahler measure is \cite{MR1700272}).  One
problem in the context of the geometry of numbers is to estimate the
number of degree $N$ integer polynomials with Mahler measure bounded
by $T > 0$ as $T \rightarrow \infty$.  Chern and Vaaler produced such
an estimate in \cite{MR1868596} using the general principal that the
number of lattice points in a `reasonable' domain in $\R^{N+1}$ is
roughly equal to the volume (Lebesgue measure) of the domain.  That
is, if we identify degree $N$ polynomials with their vector of
coefficients in $\R^{N+1}$ and use the approximation
\begin{align*}
\#\{ g(x) \in \Z[x] : \deg g = N, \mu(g) \leq T \} &\approx
\lambda_{N+1}\{ g(x) \in \R[x] : \deg g =N, \mu(g) \leq T \} \\
&= T^{N+1} \lambda_{N+1}\{ g(x) \in \R[x] : \deg g =N, \mu(g) \leq 1 \},
\end{align*}
then the main term in the asymptotic estimate Chern and Vaaler were
interested in can be expressed in terms of the volume of the degree
$N$ {\em star body} of Mahler measure,
\begin{equation}
\label{eq:5}
\mathcal{U}_N = \{ g(x) \in \R[x] : \deg g = N, \mu(g) \leq 1 \}.
\end{equation}
The volume of this set is given by
\begin{equation}
\label{eq:6}
\frac{2}{N+1} Z^{[\psi]}
\end{equation}
where $\psi(\gamma) = \max\{1, |\gamma|\}^{-N-1}$ \cite{MR1868596}.
Consequently, the volume of the star body which leads to an asymptotic
estimate of interest in Diophantine geometry is essentially equals
the partition function for the random point process $\mathcal{P}^{[\psi]}$.

\subsection{The Range of Other Multiplicative Functions on
  Polynomials}
We can generalize Mahler measure by replacing the function $\gamma
\mapsto \max\{1, |\gamma|\}$ with other functions of $\gamma$.  Given
a continuous function $\phi: \C \rightarrow (0,\infty)$ which 
satisfies the asymptotic formula,
\begin{equation}
\label{eq:22}
\phi(\gamma) \sim |\gamma| \qquad \mbox{as} \qquad |\gamma|
\rightarrow \infty,
\end{equation}
we define the function $\Phi: \C[x] \rightarrow [0,\infty)$ by
\[
\Phi: a_0 \prod_{n=1}^N (x - \gamma_n) \mapsto |a_0| \prod_{n=1}^N
\phi(\gamma_n).
\]
The function $\Phi$ is known as a {\em multiplicative distance
  function} (so named because it is a distance functions in the
sense of the geometry of numbers on finite dimensional subspaces of
$\C[x]$).  The asymptotic condition in (\ref{eq:22}) ensures that
$\Phi$ is continuous on finite dimensional subspaces of $\C[x]$ (one
of the axioms of a distance functions). 

We define the degree $N$ starbody of $\Phi$ in analogy with 
(\ref{eq:5}):
\[
\mathcal{U}_N(\Phi) = \{ g(x) \in \R[x] : \deg g = N, \Phi(g) \leq 1 \}.
\]
In this situation the volume of $\mathcal{U}_N(\Phi)$ equals
\[
\lambda_{N+1}(\mathcal{U}_N(\Phi)) = \frac{2}{N+1} Z^{[\psi]}, \qquad
\mbox{where} \qquad \psi(\gamma) = \phi(\gamma)^{-N-1}.
\]

We may discover Further information about the range of values $\Phi$
takes on degree $N$ real polynomials by considering the point process
on $X_N$ corresponding to the root function $\psi(\gamma) =
\phi(\gamma)^{-\sigma}$ 
for $\sigma > N$.  The partition function of $\mathcal{P}^{[\psi]}$ is therefore a
function of $\sigma$, and $Z^{[\psi]}(\sigma)$ is known as the degree
$N$ {\em moment function} of $\Phi$.  In fact, we may extend the
domain of a real moment functions to a function of a complex variable $s$ on the
half-plane $\rp{s} > N$.  Moreover, in this domain $Z^{[\psi]}(s)$ is
analytic. Any analytic continuation beyond this half-plane gives
information about the range of values of $\Phi$ which may not be
realizable by other methods.  For instance, when $\psi(\gamma) =
\max\{1, |\gamma|\}^{-s}$, Chern and Vaaler discovered that
$Z^{[\psi]}(s)$ has an analytic continuation to a rational function of
$s$ with rational coefficients and poles at positive integers $\leq N$
\cite{MR1868596}.  Similar results have been found for moment
functions of other multiplicative distance functions; see
\cite{sinclair-2005}.  

\section{Correlation Measures and Functions}
Suppose $\ell$ and $m$ are non-negative integers, not both equal to
0.  
Then, given a function  $f: \R^{\ell} \times \CnoR^m \rightarrow \C$
we define the function 
$F_f: X_N \rightarrow \C$ by
\[
F_f(\xi) = \sum_{\{\mathbf{x}, \mathbf{z}\} \subseteq \xi}
f(\mathbf{x}, \mathbf{z}),
\]
where the sum is over all $(\mathbf{x}, \mathbf{z}) \in \R^{\ell}
\times \CnoR^m$ such that $\{  \mathbf{x}, \mathbf{z} \} \subseteq \xi$.
We take an empty sum to equal 0, and thus if $\xi \in X_{L,M}$ with $L
\leq \ell$ or $M \leq m$, then $F_f(\xi) = 0$.  

Given a point process $\mathcal{P}$ on $X_N$, if there exists a
measure $\rho_{\ell, m}$ on $\R^{\ell} \times \CnoR^m$ 
such that for all Borel measurable functions $f$,
\begin{equation}
\label{eq:7}
\int_{\R^{\ell} \times {\C_{\ast}^m}} f(\mathbf{x}, \mathbf{z}) \,
d\rho_{\ell,m}(\mathbf{x}, \mathbf{z}) = \int_{X_N} F_f(\xi) \, d\mathcal{P}(\xi),
\end{equation}
then we call $\rho_{\ell,m}$ the $(\ell, m)$--{\em correlation measure} of
$\mathcal{P}$.  Furthermore, if $\rho_{\ell,m}$ has a density with respect to $\lambda_{\ell} \times
\lambda_{2m}$ then we will call this density the $\ell,m$--{\em
  correlation function} of $\mathcal{P}$ and denote it by
$R_{\ell,m}$.  By convention we will take $R_{0,0}$ to be the contant
1. 
\begin{prop*}
\label{prop:1} ${\displaystyle \frac{d\rho_{\ell, m}}{d(\nu_{\ell} \times
  \nu_{2m})}(\mathbf{x}, \mathbf{z})}$ equals
\[\frac{2^m}{Z^{\nu}} \sum_{(L,M) \atop  L \geq  \ell,  M \geq m}
  \frac{1}{(L-\ell)! (M-m)!} \int\limits_{\R^{L-\ell}}  
\int\limits_{\C^{M-m}}   |\Delta(\mathbf{x} \! \vee \!
\bs{\alpha}, \mathbf{z} \! \vee \! \bs{\beta})| \,
d\nu_{L-\ell}(\bs{\alpha}) \, d\nu_{2(M-m)}(\bs{\beta}),
\]
where $\mathbf{x} \vee \bs{\alpha} \in \R^L$ is the vector formed by
concatenating the vectors $\mathbf{x} \in \R^{\ell}$ and $\bs{\alpha}
\in \R^{L-\ell}$ (and similarly for $\mathbf{z} \! \vee \! \bs{\beta} \in \C^M$).
\end{prop*}

\begin{cor*} 
If $\nu = w\lambda$, then $R_{\ell, m}(\mathbf{x}, \mathbf{z})$ equals
\[
\frac{1}{Z^{w}} \sum_{(L,M)
    \atop  L \geq \ell, M \geq
    m} \frac{1}{(L-\ell)! (M-m)! 2^{M-m}} 
\int\limits_{\R^{L-\ell}} \int\limits_{\C^{M-m}}
\Omega_{L,M}(\mathbf{x} \! \vee \! \bs{\alpha}, \mathbf{z} \! \vee \!
\bs{\beta}) \,  d\lambda_{L-\ell}(\bs{\alpha}) \,
d\lambda_{2(M-m)}(\bs{\beta}). 
\]
\end{cor*}

\begin{proof}[Proof of Proposition~\ref{prop:1}]
Assume that $L \geq \ell$ and $M \geq m$, from (\ref{eq:3}) and (\ref{eq:8}),  
\begin{equation}
\label{eq:11}
\int_{X_{L,M}} F_f(\xi) \, d\mathcal{P}^{\nu}_{L,M}(\xi) = \frac{1}{L! M!}
\int_{\R^L} \int_{\CnoRsub^M} F_f\{\bs{\alpha}, \bs{\beta}\} \, 
|\Delta(\bs{\alpha}, \bs{\beta})| \, d\nu_L(\bs{\alpha}) \,
d\nu_{2M}(\bs{\beta}),
\end{equation}
The function $(\bs{\alpha}, \bs{\beta}) \mapsto | \Delta(\bs{\alpha},
\bs{\beta}) |$ is invariant under any permutation of the coordinates
of $\bs{\alpha}$ and $\bs{\beta}$, since such a permutation merely
permutes the columns of the Vandermonde matrix.  Similarly, replacing
any of the $\beta$ with their complex conjugates merely transposes
pairs of columns of the Vandermonde matrix.  That is, if $(\bs{\alpha},
\bs{\beta})$ and $(\bs{\alpha'}, \bs{\beta}')$ are elements in $\R^L
\times \CnoR^M$ such that $\{\bs{\alpha}, \bs{\beta}\} =
\{\bs{\alpha}', \bs{\beta}'\}$, then $| \Delta(\bs{\alpha},
\bs{\beta}) | = | \Delta(\bs{\alpha}', \bs{\beta}') |$.  Moreover, if
any of the $\beta$ are real, then the Vandermonde matrix has two 
identical columns and is therefore zero.  We may thus replace the
domain of integration on the right hand side of (\ref{eq:11}) with
$\R^L \times \C^M$. In fact, we may assume that the domain of
integration on the right hand side is over the subset of $\R^L \times
\C^M$ consisting of those vectors with distinct coordinates.  

Now,
\[
F_f\{\bs{\alpha}, \bs{\beta}\} =\sum_{\{\mathbf{x}, \mathbf{z}\}
  \subseteq \{ \bs{\alpha},  \bs{\beta}\}} f(\mathbf{x}, \mathbf{z}).
\]
Assuming that the coordinates of $(\bs{\alpha}, \bs{\beta})$ are
distinct, and if $(\mathbf{x}, \mathbf{z}) \in \R^{\ell} \times \C^m$
is such that $\{ \mathbf{x}, \mathbf{z} \} \subseteq \{ \bs{\alpha},
\bs{\beta} \}$ we may find a vector
$(\mathbf{a}, \mathbf{b}) \in \R^{L - \ell} \times \C^{M - m}$ such
that $(\mathbf{x} \vee \mathbf{a}, \mathbf{z} \vee \mathbf{b})$ is given
by permuting the coordinates of $(\bs{\alpha}, \bs{\beta})$.  Clearly
$|\Delta(\bs{\alpha}, \bs{\beta})| =
|\Delta(\mathbf{x} \vee \mathbf{a}, \mathbf{z} \vee \mathbf{b})|$
and
\[
d\nu_L(\bs{\alpha}) \, d\nu_{2M}(\bs{\beta}) =
d\nu_{\ell}(\mathbf{x}) \, d\nu_{L-\ell}(\mathbf{a}) \,
d\nu_{m}(\mathbf{z}) \, d\nu_{M-m}(\mathbf{b}).
\]
These observations together with an application of Fubini's
Theorem imply that (\ref{eq:11}) can be written as 
\[
\frac{1}{L! M!} \int_{\R^{\ell}} \int_{\C^m} \sum_{\{\mathbf{x},
  \mathbf{z}\} } f(\mathbf{x}, \mathbf{z}) \Bigg\{
\int\limits_{\R^{L-\ell}} \int\limits_{\C^{M-m}}
|\Delta(\mathbf{x}\!\vee\!\mathbf{a}, \mathbf{z}\!\vee\!\mathbf{b})| \,
d\nu_{L-\ell}(\mathbf{a}) \, d\nu_{M-m}(\mathbf{b}) \Bigg\} \,
d\nu_{\ell}(\mathbf{x}) \, d\nu_{m}(\mathbf{z}).
\]
Now, it is easily seen that there are 
\[
2^m {L \choose \ell} \ell! {M \choose m} m!
\]
 vectors corresponding to each $\{\mathbf{x}, \mathbf{z}\}$, and thus we find 
\begin{align*}
&  \int_{X_{L,M}} F_f(\xi) \, d\mathcal{P}^{\nu}_{L,M}(\xi) =
  \int_{\R^{\ell}} \int_{\C^m} f(\mathbf{x}, \mathbf{z}) \\ & \qquad
  \times \Bigg\{ \frac{2^m}{(L-\ell)!(M-m)!}
  \int\limits_{\R^{L-\ell}} \int\limits_{\C^{M-m}}
  |\Delta(\mathbf{x}\!\vee\!\mathbf{a}, \mathbf{z}\!\vee\!\mathbf{b})| \,
  d\nu_{L-\ell}(\mathbf{a}) \, d\nu_{M-m}(\mathbf{b}) \Bigg\} \,
  d\nu_{\ell}(\mathbf{x}) \, d\nu_{m}(\mathbf{z}).
\end{align*}
The proposition follows since 
\[
\int_{X} F_f(\xi) \, d\mathcal{P}^{\nu}(\xi) = \frac{1}{Z^{\nu}} \sum_{(L,M)}
\int_{X_{L,M}} F_f(\xi) \, d\mathcal{P}^{\nu}_{L,M}(\xi). \qedhere
\]
\end{proof}

From here forward, and unless otherwise stated, $(\ell, m)$ will
represent an ordered pair of non-negative integers such that $\ell +
2m \leq N$.  

\section{A Matrix Kernel for Point Processes on $X_N$}
From here forward we will assume that $N$ is even.  

Let $\mathcal{P}^{\nu}$ be the point process on $X_N$ associated to the
weight $w$, and as before let $\nu = \nu_1 + \nu_2$.  We
define the operator $\epsilon_{\nu}$ on the Hilbert space $L^2(\nu)$ by 
\[
\epsilon_{\nu} g(\gamma) := \left\{
  \begin{array}{ll} {\displaystyle \frac{1}{2}\int_{\R} g(y) \sgn(y -
      \gamma) \, d\nu_1(y)} & \quad \mbox{if} \quad
    \gamma \in \R, \\ & \\
    {\displaystyle i g(\overline{\gamma}) \sgn(\ip{\gamma})} & \quad
    \mbox{if} \quad \gamma \in \CnoR,
\end{array}
\right.
\]
and we use this to define the skew-symmetric bilinear form on
$L^2(\nu)$ given by $\la \cdot | \cdot \ra_{\nu}$
\[
\la g | h \ra_{\nu} := \int_{\C} g(\gamma) \epsilon_{\nu} h(\gamma) -
\epsilon_{\nu} g(\gamma) h(\gamma) \, d\nu(\gamma).
\]
If $\nu = w\lambda$ then $\la g | h
\ra_{\nu} = \la \widetilde{g} | \widetilde{h} \ra_{\lambda}$ where,
for instance, $\widetilde{h}(\gamma) := h(\gamma)w(\gamma)$.  
\begin{thm*}
\label{thm:1}
Let $\mathbf{q} = \{ q_0(\gamma), q_2(\gamma), \ldots,
q_{N-1}(\gamma) \} \subseteq \R[\gamma]$ be such that each $q_n$ is
monic and $\deg q_n = n$.  Then,
\begin{equation}
\label{eq:4}
Z^{\nu} = \Pf \mathbf{U}^{\nu}_{\mathbf{q}},
\end{equation}
the Pfaffian of $\mathbf{U}^{\nu}_{\mathbf{q}}$, where 
\[
\mathbf{U}^{\nu}_{\mathbf{q}} := [\la q_n | q_{n'} \ra_{\nu}]; \qquad n,n'=0,1,\ldots,N-1.
\]
\end{thm*}
We will call $\mathbf{q}$ a {\em complete family} of monic
polynomials.
\begin{rem}
It is at this point that it is necessary that $N$ be even, since the
Pfaffian is only defined for antisymmetric matrices with an even
number of rows and columns.  A similar formula to (\ref{eq:4}) exists
for $Z^{\nu}$ in the case when $N$ is odd; see \cite{sinclair-2007}.
However, we have not pursued the subsequent 
analysis necessary to recover the correlation functions in this case. 
\end{rem}

Theorem~\ref{thm:1} follows from results proved in
\cite{sinclair-2007}.  In fact, \cite{sinclair-2007} gives a formula
for the average of a multiplicative class function over the point
process on $X_N$ determined by Ginibre's weight.  However, the
combinatorics necessary to arrive at such averages is independent of
any specific feature of ${\mathrm{Gin}}(\gamma)$ and the measure on
$\C$ specified by ${\mathrm{Gin}(\gamma)} d\lambda(\gamma)$ can
formally be replaced by any measure $\nu$.

In order to express the correlation functions for the point process
$\mathcal{P}^w$ associated to the weight $w$ we will define
$\eta$ to be a measure on $\C$ given by a linear combination of point
masses, and then use the definition of the partition function and
properties of Pfaffians to expand both sides of the equation
\begin{equation}
\label{eq:9}
\frac{Z^{w (\lambda + \eta)}}{Z^w} = \frac{1}{Z^w} \Pf
\mathbf{U}_{\mathbf{q}}^{w(\lambda + \eta)}  
\end{equation}
The coefficients in the linear combination defining $\eta$ appear
again in terms on both sides of the expanded equation, and after
identifying like coefficients on both sides of the expanded equation
we will be able to read off a closed form for the correlation
functions in terms of the Pfaffian of a matrix whose entries depend on
a $2 \times 2$ matrix kernel.  

In order to define the $2 \times 2$ matrix kernel for $\mathcal{P}^w$,
we let $\mathbf{q}$ be a complete family of monic polynomials, and we 
define 
\begin{equation}
\label{eq:10}
\widetilde{q}_n(\gamma) := q_n(\gamma) w(\gamma) \qquad n=0,1,\ldots,N-1.
\end{equation}
We then define
\[
S_N(\gamma, \gamma') := 2 \sum_{n=0}^{N-1} \sum_{n'=0}^{N-1}
\mu_{n,n'} \, \widetilde{q}_n(\gamma) \, \epsilon_{\lambda}
\widetilde{q}_{n'}(\gamma'),
\]
where we define $\mu_{n,n'}$ to be the $n,n'$ entry of
$(\mathbf{U}^w_{\mathbf{q}})^{-\mathsf{T}}$. 
Similarly we define,
\[
IS_N (\gamma, \gamma') := 2 \sum_{n=0}^{N-1} \sum_{n'=0}^{N-1}
\mu_{n,n'} \, \epsilon_{\lambda} \widetilde{q}_n(\gamma) \, 
\epsilon_{\lambda} \widetilde{q}_{n'}(\gamma')  
\]
and
\[
DS_N (\gamma, \gamma') := 2 \sum_{n=0}^{N-1} \sum_{n'=0}^{N-1}
\mu_{n,n'} \, \widetilde{q}_n(\gamma) \, \widetilde{q}_{n'}(\gamma').
\]
\begin{rem}
  The functions $S_N, IS_N$ and $DS_N$ can be shown to be independent
  of the family $\mathbf{q}$.  By setting $\mathbf{q}$ to be
  skew-orthogonal with respect to the bilinear form $\la \cdot | \cdot
  \ra_{\lambda}$ we arrive at particularly simple representations for
  these expressions.
\end{rem}

Finally, in order to define the matrix kernel $\mathcal{P}^w$ we
define the function \mbox{$\mathcal{E}: \C^2  \rightarrow
  \{-\frac{1}{2}, 0, \frac{1}{2}\}$} by 
\[
\mathcal{E}(\gamma, \gamma') := \left\{
\begin{array}{ll}
{\displaystyle \frac{1}{2} \sgn(\gamma - \gamma')} & \quad \mbox{if }
  \gamma, \gamma' \in \R; \\ & \\
0 & \quad \mbox{otherwise.}
\end{array}
\right.
\]
The matrix kernel of $\mathcal{P}^w$ is then given by
\begin{equation}
\label{eq:19}
K_N(\gamma, \gamma') := \begin{bmatrix}
DS_N(\gamma, \gamma') & S_N(\gamma, \gamma') \\
-S_N(\gamma', \gamma) & IS_N(\gamma, \gamma') + \mathcal{E}(\gamma,
\gamma')
\end{bmatrix}.
\end{equation}
\begin{rem}
The explicit $N$-dependence of $K_N$ and its constituents is
traditional, since one is often interested in the $N \rightarrow
\infty$ asymptotics of $K_N$.
\end{rem}

\section{Correlation Functions in Terms of the Matrix Kernel}

Finally we may state the main result of this manuscript.  
\begin{thm*}
\label{thm:2}

The $\ell,m$--correlation function of $\mathcal{P}^w$ is given by
\[
R_{\ell,m}(\mathbf{x}, \mathbf{z}) = \Pf \begin{bmatrix}
K_N(x_j, x_{j'}) & K_N(x_j, z_{k'} ) \\
K_N(z_k , x_{j'}) & K_N(z_k , z_{k'})
\end{bmatrix}; \qquad 
\begin{array}{ll}
j,j'=1,2,\ldots,\ell; \\
k,k'=1,2,\ldots,m.
\end{array}
\]
\end{thm*}
\begin{rem}
The matrix on the right hand side of this expression is composed of $2
\times 2$ blocks, so that, for instance, the first row of $2 \times 2$
blocks is given by
\[
\big[
K_N(x_1, x_1) \quad \ldots \quad K_N(x_1, x_{\ell}) \quad K_N(x_1, z_1)
\quad \ldots \quad K_N(x_1, z_m) 
\big].
\]
\end{rem}

We define the measure $\delta$ on $\C$ to be the measure with unit
point mass at 0.  Given $U$ real numbers $x_1, x_2, \ldots, x_U$ and
$V$ non-real complex numbers, $z_1, z_2, \ldots, z_V$, we define the
measure $\eta$ by
\[
d \eta(\gamma) := \sum_{u=1}^U a_u \, d\delta(\gamma - x_u) + \sum_{v=1}^V b_v
\,\big( d \delta(\gamma - z_v) + d\delta(\gamma - \overline{z_v}) \big),
\]
where $a_1, a_2, \ldots, a_U$ and $b_1, b_2, \ldots, b_V$ are
indeterminants.  It does no harm to assume that $U$ and $V$ are both
greater than $N$.  By reordering and renaming the $x, z, a$ and $b$ we
will also write
\[
d \eta(\gamma) := \sum_{t=1}^T c_t \, d \widehat{\delta}(\gamma - y_t),
\]
where we define
\[
d \widehat{\delta}(\gamma - y) = 
\left\{
\begin{array}{ll}
d\delta(\gamma - y) & \quad \mbox{if } y \in \R; \\
& \\
 d \delta(\gamma - y) + d\delta(\gamma - \overline{y}) & \quad
 \mbox{if } y \in \CnoR.  
\end{array}
\right.
\]
Clearly $T = U + V$.

As we alluded to previously, the proof of Theorem~\ref{thm:2} relies
on expanding  $Z^{w(\lambda+\eta)}/Z^w$ in two different ways
and then equating the coefficients of certain products of $c_1, c_2,
\ldots, c_T$.  One of the expansions of
$Z^{w(\lambda+\eta)}/Z^w$ comes from Theorem~\ref{thm:1},
while the other comes directly from the definition of the partition
function. 

\begin{lemma*}
\label{lemma:4}
\begin{equation}
\label{eq:23}
\frac{Z^{w(\lambda + \eta)}}{Z^w} = \Pf \big( \mathbf{J} + \big[ \sqrt{c_t
  c_{t'}} K_N(y_t, y_{t'}) \big] \big); \qquad t,t'=1,2,\ldots,T,
\end{equation}
where $\mathbf{J}$ is defined to be the $2T \times 2T$ matrix
consisting of $2 \times 2$ blocks given by
\[
\mathbf{J} := \left[
\delta_{t,t'} \begin{bmatrix}
0 & 1 \\
-1 & 0 
\end{bmatrix}
\right];
\qquad t,t = 1,2,\ldots, T.
\]
\end{lemma*}
\begin{rem}
The Pfaffian which appears in (\ref{eq:23}) is an example of a {\em
  Fredholm Pfaffian}.  This is the Pfaffian formulation of the notion
of a Fredholm determinant and is discussed in \cite{rains-2000}.
\end{rem}

We defer the proof of Lemma~\ref{lemma:4} and Lemma~\ref{lemma:5}
until after the proof of Theorem~\ref{thm:2}.  

For each $(\ell, m)$ and $(L,M)$ we define the $\ell,m,L,M$-{\em
  partial} correlation function of $\mathcal{P}^w$ to be
$R_{\ell,m,L,M}: \R^{\ell} \times \C^m \rightarrow [0,\infty)$ where
\begin{align*}
&R_{\ell,m,L,M}(\mathbf{x}, \mathbf{z}) := \\
& \qquad \frac{1}{(L-\ell)! (M-m)! 2^{M-m}}
\int\limits_{\R^{L-\ell}}\int\limits_{\C^{M-m}}
\Omega_{L,M}(\mathbf{x} \vee \bs{\alpha}, \mathbf{z} \vee \bs{\beta}) \,
d\lambda_{L-\ell}(\bs{\alpha}) \, d\lambda_{2(M-m)}(\bs{\beta}).
\end{align*}
When $\ell = m = 0$, we take $\mathbf{x} \vee \bs{\alpha} =
\bs{\alpha}$ and $\mathbf{z} \vee \bs{\beta} = \bs{\beta}$, so that
$R_{0,0,L,M}$ is a constant equal to $\mathcal{P}_{L,M}(X_{L,M})$.
The partial correlation functions are related to the correlations
functions by the formula 
\begin{equation}
\label{eq:24}
R_{\ell,m}(\mathbf{x}, \mathbf{z}) = \frac{1}{Z^w} \sum_{(L,M) \atop L
  \geq \ell, M \geq m} R_{\ell,m,L,M}(\mathbf{x}, \mathbf{z}),
\end{equation}
and thus the partial correlation functions are one path to the
correlation functions.  Equation~(\ref{eq:24}) is still valid when
$\ell = m =0$, though the constituent correlation `functions' are
actually constants.

The partial correlation functions of $\mathcal{P}^{[\mathrm{Gin}]}$ of
the special forms $R_{L,M,0,M}$ and $R_{0,N,0,n}$ have been studied by
Akemann and Kanzieper in \cite{akemann-2007} and
\cite{kanzieper-2005-95}.  For general weight function $w$, the
partial correlation functions of the form $R_{N,0,n,0}$ are (up to
normalization) equal to the correlation functions of the $\beta=1$
Hermitian ensemble with weight $w$.  This connection will be exploited
in \ref{app:A}.  

Before stating the next lemma we need a bit of notation.  Given
non-negative integers $n$ and $W$, we define $\mf{I}_n^W$ to be the
set of increasing functions from $\{1,2,\ldots,n\}$ into
$\{1,2,\ldots,W\}$.  Clearly if $W < n$ then $\mf{I}_n^W$ is empty.
Given a vector $\mathbf{a} \in \C^W$ and an element $\mf{u} \in
\mf{I}_n^W$, we define the vector $\mathbf{a}_{\mf{u}} \in \C^n$ by
$\mathbf{a}_{\mf{u}} = \{a_{\mf{u}(1)}, a_{\mf{u}(2)}, \ldots,
a_{\mf{u}(n)}\}$.  
\begin{lemma*}
\label{lemma:5}
For each pair $(L,M)$, 
\begin{align*}
&\frac{1}{L! M! 2^M} \int_{\R^L} \int_{\C^M} \Omega_{L,M}(\bs{\alpha},
\bs{\beta}) \, d(\lambda + \eta)_L(\bs{\alpha}) \, d(\lambda +
\eta)_{2M}(\bs{\beta}) = \\
& \hspace{3cm} \sum_{\ell=0}^L \sum_{m=0}^M \sum_{\mf{u} \in \mf{I}_{\ell}^U}
\sum_{\mf{v} \in \mf{I}_m^V} \bigg\{
\prod_{j=1}^{\ell} a_{\mf{u}(j)} \prod_{k=1}^m b_{\mf{v}(k)} 
\bigg\} R_{\ell,m,L,M}(\mathbf{x}_{\mf{u}}, \mathbf{z}_{\mf{v}}).
\end{align*}
\end{lemma*}
\begin{rem}
We will use the convention that 
\[
\sum_{\mf{u} \in \mf{I}_0^U} \prod_{j=1}^0 a_{\mf{u}(j)} = \sum_{\mf{v}
  \in \mf{I}_0^V} \prod_{k=1}^0 b_{\mf{v}(k)} = 1.
\]
This will allow us to keep from having to deal with the pathological
correlation `functions' $R_{0,0,L,M}$ and $R_{0,0}$ separately.  
\end{rem}
\begin{lemma*}
\label{lemma:6}
Suppose $K : \{1,2,\ldots,T\} \times \{1,2,\ldots,T\} \rightarrow \R^{2
\times 2}$ is such that
\[
K(t,t') = -K(t',t)^{\transpose},
\]
and define $\mathbf{K}$ to be the $2T \times 2T$ block antisymmetric
matrix whose $t,t'$ entry is given by $K(t,t')$.  Then,
\begin{equation}
\label{eq:25}
\Pf[ \mathbf{J} + \mathbf{K} ] = 1 + \sum_{S=1}^T \sum_{\mf{t} \in
  \mf{I}_S^T} \Pf \mathbf{K}_{\mf{t}},
\end{equation}
where for each $\mf{t} \in \mf{I}_S^T$, $\mathbf{K}_{\mf{t}}$ is the
$2S \times 2S$ antisymmetric matrix given by
\[
\mathbf{K}_{\mf{t}} = [ K(\mf{t}(s), \mf{t}(s'))]; \qquad s,s'=1,2,\ldots,S.
\]

\end{lemma*}
\begin{proof}
This is a special case of the formula for the Pfaffian of the sum of
two antisymmetric matrices.  See \cite{sinclair-2005} or
\cite{MR1069389} for a proof.
\end{proof}

Using these lemmas we may complete the proof of Theorem~\ref{thm:2}.
First notice that Lemma~\ref{lemma:5} and (\ref{eq:24}) imply that
\begin{equation}
\label{eq:26}
\frac{Z^{w(\lambda + \eta)}}{Z^w} = \sum_{(l,m)} \sum_{\mf{u} \in
  \mf{I}_{\ell}^U} \sum_{\mf{v} \in \mf{I}_{m}^V} \bigg\{
\prod_{j=1}^{\ell} a_{\mf{u}(j)} \prod_{k=1}^m b_{\mf{v}(k)}
\bigg\} R_{\ell, m}(\mathbf{x}_{\mf{u}}, \mathbf{z}_{\mf{v}}).
\end{equation}
This follows by summing both sides of the expression in
Lemma~\ref{lemma:6} over all $(L,M)$ and then reorganizing the sums
over $(L,M)$, $\ell$ and $m$.  

Given $\mf{u} \in \mf{I}_{\ell}^U$ and $\mf{v} \in \mf{I}_m^V$ we
define $(\mf{u}\!:\!\mf{v})$ to be the element in
$\mf{I}_{\ell+m}^{U+V}$ given by
\[
(\mf{u} \vee \mf{v})(n) := \left\{
\begin{array}{ll}
\mf{u}(n) & \quad \mbox{if } n \leq \ell \\
\ell + \mf{v}(n) & \quad \mbox{if } n > \ell.
\end{array}
\right.
\]
Notice that each $\mf{t} \in \mf{I}_S^T$ is equal to $(\mf{u} \vee \mf{v})$ for some $\mf{u} \in \mf{I}_{\ell}^U$ and
$\mf{v} \in \mf{I}_m^V$ with $\ell + m = S$.  (This does not preclude
the possibility that either $U$ or $V$ equals 0, in which case $\mf{t} =
\mf{v}$ or $\mf{t} = \mf{u}$).  It follows that we can rewrite
(\ref{eq:25}) as
\[
\Pf(\mathbf{J} + \mathbf{K}) = \sum_{(\ell,m) \atop \ell + m \leq T}
\sum_{\mf{u} \in   \mf{I}_{\ell}^U} \sum_{\mf{v} \in \mf{I}_{m}^V} \Pf
\mathbf{K}_{(\mf{u}\vee\mf{v})}.
\]
If we set $K_{t,t'}(t,t') = \sqrt{c_t  c_{t'}} \, K_N(y_t, y_{t'})$,
then 
\[
\Pf
\mathbf{K}_{(\mf{u}\vee\mf{v})} = \bigg\{
\prod_{j=1}^\ell a_{\mf{u}(j)} \prod_{k=1}^m b_{\mf{v}(k)} 
\bigg\} \Pf \begin{bmatrix}
K_N(x_{\mf{u}(k)}, x_{\mf{u}(k')}) & K_N(x_{\mf{u}(k)},
z_{\mf{v}(n')} ) \\ K_N(z_{\mf{v}(n)} , x_{\mf{u}(k')}) &
K_N(z_{\mf{v}(n)} , z_{\mf{v}(n')}) 
\end{bmatrix},
\]
where $k$ and $k'$ are indices that run from 1 to $\ell$ and $n$ and
$n'$ are indices which run from 1 to $m$.  Thus, $\Pf(\mathbf{J} +
\mathbf{K})$ equals 
\begin{equation}
\label{eq:17}
\sum_{(\ell,m) \atop \ell + m \leq T} \sum_{\mf{u} \in
  \mf{I}_{\ell}^U} \sum_{\mf{v} \in \mf{I}_{m}^V} \bigg\{
\prod_{j=1}^\ell a_{\mf{u}(j)} \prod_{k=1}^m b_{\mf{v}(k)} 
\bigg\} \Pf \begin{bmatrix}
K_N(x_{\mf{u}(k)}, x_{\mf{u}(k')}) & K_N(x_{\mf{u}(k)},
z_{\mf{v}(n')} ) \\ K_N(z_{\mf{v}(n)} , x_{\mf{u}(k')}) & K_N(z_{\mf{v}(n)} , z_{\mf{v}(n')})
\end{bmatrix},
\end{equation}
and Theorem~\ref{thm:2} follows by equating the coefficients of 
$\prod_{j=1}^\ell a_{j} \prod_{k=1}^m b_{k}$ in (\ref{eq:26}) and (\ref{eq:17}).

\section{Kernel Asymptotics for Ginibre's Ensemble}

We now turn to the large $N$ asymptotics of the matrix kernel for
Ginibre's ensemble of real matrices.  In fact, we maintain our
restriction to the case where $N = 2M$ is even, and consider the
asymptotics of $K_{2M}$ as $M \rightarrow \infty$.  Throughout this
section we will take $\phi$ to be the function given by
\[
\phi(\gamma) = \exp(-\gamma^2/4 -\overline{\gamma}^2/4) \sqrt{ \erfc
  \big(\sqrt{2} | \ip \gamma | \big) }.
\]
Notice that, since $\erfc(0) = 1$, when $\gamma \in \R$ this reduces
to $\exp(-\gamma^2/2)$.  

The skew-orthogonal polynomials for this weight are reported in
\cite{forrester-2007} to be 
\[
\pi_{2m}(\gamma) = \gamma^{2m} \qquad \pi_{2m+1}(\gamma) = \gamma^{2m+1} - 2m \gamma^{2m-1}
\]
with normalization $\la \pi_{2m} | \pi_{2m+1} \ra_{\nu} = 2 \sqrt{2
  \pi} (2m)!$.  The skew-orthogonality of these polynomials imply that  
\begin{align}
S_{2M}(\gamma,\gamma') &=  \frac{1}{\sqrt{2 \pi}} \sum_{m=0}^{M-1} \frac{\wt
  \pi_{2m}(\gamma) \epsilon_{\lambda} \wt \pi_{2m+1}(\gamma') -  \wt \pi_{2m+1}(\gamma)
  \epsilon_{\lambda}  \wt \pi_{2m}(\gamma')}{(2m)!}, \label{eq:37} \\
DS_{2M}(\gamma,\gamma') &= \frac{1}{\sqrt{2 \pi}} \sum_{m=0}^{M-1} \frac{\wt
  \pi_{2m}(\gamma) \wt \pi_{2m+1}(\gamma') -  \wt \pi_{2m+1}(\gamma) \wt
  \pi_{2m}(\gamma')}{(2m)!}, \label{eq:38} \\
IS_{2M}(\gamma,\gamma') &= \frac{1}{\sqrt{2 \pi}} \sum_{m=0}^{M-1}
\frac{\epsilon_{\lambda} \wt \pi_{2m}(\gamma) \epsilon_{\lambda} \wt \pi_{2m+1}(\gamma') -  \epsilon_{\lambda}
  \wt \pi_{2m+1}(\gamma) \epsilon_{\lambda} \wt
  \pi_{2m}(\gamma')}{(2m)!}. \label{eq:39}
\end{align}

The correlation functions are all of the form $\Pf \mathbf{K}$ for an
appropriate matrix $\mathbf{K}$ whose entries are given in terms of
(\ref{eq:37}), (\ref{eq:38}) and (\ref{eq:39}).  If $\mathbf{D}$ is a
square matrix such that the product $\mathbf{D} \mathbf{K}
\mathbf{D}^{\transpose}$ makes sense, then $\Pf(\mathbf{D} \mathbf{K}
\mathbf{D}^{\transpose}) = \Pf \mathbf K \cdot \det \mathbf D$.  And
thus, if $\det \mathbf{D} = 1$ we have $\Pf \mathbf K = \Pf(\mathbf{D}
\mathbf{K} \mathbf{D}^{\transpose})$.  That is, we may alter (and
potentially simplify) the presentation of the Pfaffian representation
of the correlation functions by modifying $\mathbf K$ in this manner
by a matrix with determinant 1.  When $\mathbf{D}$ is diagonal, the process
of modifying $\mathbf{K}$ by $\mathbf{D}$ preserves the block 
structure of $\mathbf{K}$.  That is, the effect of modifying
$\mathbf{K}$ by $\mathbf{D}$ affects changes at the kernel level and
the correlation functions can be represented as the Pfaffian of a
block matrix ({\it cf}.~Theorem~\ref{thm:2}) with respect to a new
matrix kernel $\wt K_N$ dependent on $K_N$ and $\mathbf{D}$.  This
will allow us to write the correlation functions of Ginibre's real
ensemble in the simplest manner possible by `factoring' unnecessary
terms out of the kernel.

It will be convenient to define $c_M, s_M$ and $e_M$ to be the degree
$2M-2$ Taylor polynomials for $\cosh$, $\sinh$ and $\exp$
respectively.  Explicitly,
\[
c_M(\gamma) := \sum_{m=0}^{M-1} \frac{\gamma^{2m}}{(2m)!}, \qquad 
s_M(\gamma) := \sum_{m=1}^{M-1} \frac{\gamma^{2m-1}}{(2m-1)!}
\]
and
\[
e_M(\gamma) := c_M(\gamma) + s_M(\gamma) = \sum_{m=0}^{2M-2} \frac{\gamma^m}{m!}.
\]
\begin{thm}
\label{thm:3}
The $\ell,m$--correlation function of Ginibre's real ensemble of $2M 
\times 2M$ matrices is given by
\[
R_{\ell,m}(\mathbf{x}, \mathbf{z}) = \Pf \begin{bmatrix}
\wt K_{2M}(x_j, x_{j'}) & \wt K_{2M}(x_j, z_{k'} ) \\
\wt K_{2M}(z_k , x_{j'}) & \wt K_{2M}(z_k , z_{k'})
\end{bmatrix}; \qquad 
\begin{array}{ll}
j,j'=1,2,\ldots,\ell; \\
k,k'=1,2,\ldots,m,
\end{array}
\]
where
\[
\wt K_{2M}(\gamma, \gamma') = \begin{bmatrix} 
\wt{DS}_{2M}(\gamma, \gamma') & \wt{S}_{2M}(\gamma, \gamma') \\
-\wt{S}_{2M}(\gamma', \gamma) & \wt{IS}_{2M}(\gamma, \gamma') +
\mathcal{E}(\gamma, \gamma') 
\end{bmatrix}
\]
is given as follows.  Let $x, x' \in \R$ and $z, z' \in \C^{\ast}$,
then:
\begin{enumerate}
\item  The entries in the real/real kernel, $\wt K_{2M}(x, x')$,
  are given by  
\begin{itemize}
\item ${\displaystyle \wt S_{2M}(x, x') =
    \frac{e^{-\frac{1}{2}(x-x')^2}}{\sqrt{2\pi}} e^{-x
      x'} e_M(x x') 
  + \frac{e^{-x^2/2}}{\sqrt{2\pi}} r_M(x, x'),}$  where 
\begin{equation}
\label{eq:31}
r_M(z,x') = \frac{2^{M -  3/2}}{(2M-2)!} \sgn(x') 
z^{2M-1} \cdot \gamma\!\left(M - \frac{1}{2}, \frac{x'^2}{2} \right);
\end{equation}
\item ${\displaystyle \wt{DS}_{2M}(x, x') =
    \frac{e^{-\frac{1}{2}(x-x')^2}}{\sqrt{2\pi}} (x' -
    x) e^{-x x'} e_M(x x')}$;  
\item $\wt{IS}_{2M}(x, x') =$
\begin{align*}  
    \frac{e^{-x^2/2}}{2\sqrt{\pi}} \sgn(x')
    \int_0^{x'^2/2} \frac{e^{-t}}{\sqrt{t}} c_M(x \sqrt{2 t})
    \, dt - \frac{e^{-x'^2/2}}{2\sqrt{\pi}} \sgn(x)
    \int_0^{x^2/2} \frac{e^{-t}}{\sqrt{t}} c_M(x' \sqrt{2 t})
    \, dt.
\end{align*}
\end{itemize}
\item  The entries in the complex/complex kernel, $\wt K_{2M}(z, z')$,
  are given by  
\begin{itemize}
\item ${\displaystyle \wt S_{2M}(z,z') = \frac{i e^{-\frac{1}{2}(z -
        \overline z')^2}}{\sqrt{2 \pi}} (\overline z' - z) \sqrt{
      \erfc \big(\sqrt{2}  \ip z  \big)  \erfc
      \big(\sqrt{2}  \ip{z'}  \big) }
     e^{-z \overline z'} e_M(z \overline z');
}$
\item ${\displaystyle \wt{DS}_{2M}(z,z') = \frac{e^{-\frac{1}{2}(z -
        z')^2}}{\sqrt{2 \pi}} (z' - z) \sqrt{
      \erfc \big(\sqrt{2}  \ip z  \big)  \erfc
      \big(\sqrt{2}  \ip{z'}  \big) }
     e^{-z z'} e_M(z z'); }$
\item ${\displaystyle \wt{IS}_{2M}(z,z') =
    \frac{-e^{-\frac{1}{2}(\overline z - 
        \overline z')^2}}{\sqrt{2 \pi}} (\overline z' \! - \overline z) \sqrt{
      \erfc \big(\sqrt{2} \ip z  \big) \! \erfc
      \big(\sqrt{2} \ip{z'}  \big) }
     e^{-\overline z  \overline z'} \! e_M(\overline z \overline z'). }$
\end{itemize}

\item The entries in the real/complex kernel, $\wt K_{2M}(x, z)$, are
  given by
\begin{itemize}
\item ${\displaystyle \wt S_{2M}(x,z) = 
\frac{i e^{-\frac{1}{2}(x-\overline z)^2}}{\sqrt{2\pi}} (\overline z - x) \sqrt{
      \erfc \big(\sqrt{2} \ip z  \big)} e^{-x\overline z} e_M(x
    \overline z);
}$
\item ${\displaystyle \wt S_{2M}(z,x) = 
\frac{e^{-\frac{1}{2}(x-z)^2}}{\sqrt{2\pi}} \sqrt{
      \erfc \big(\sqrt{2} \ip z  \big)} e^{-x z} e_M(x z)}$
\[ \hspace{5cm} +
    \frac{e^{-\frac{1}{4}(z^2 + \overline z^2)}}{\sqrt{2\pi}} \sqrt{\erfc \big(\sqrt{2} \ip z
      \big)} r_M(z,x);
\]
\item ${\displaystyle \wt{DS}_{2M}(x,z) = 
\frac{e^{-\frac{1}{2}(x- z)^2}}{\sqrt{2\pi}} (z - x) \sqrt{
      \erfc \big(\sqrt{2} \ip z  \big)} e^{-x z} e_M(x z);
}$
\item ${\displaystyle \wt{IS}_{2M}(x,z) = 
\frac{e^{-\frac{1}{2}(x-\overline z)^2}}{\sqrt{2\pi}} \sqrt{
      \erfc \big(\sqrt{2} \ip z  \big)} e^{-x \overline z} e_M(x
    \overline z)}$
\[ \hspace{5cm} +
    \frac{e^{-\frac{1}{4}(z^2 + \overline z^2)}}{\sqrt{2\pi}}
    \sqrt{\erfc \big(\sqrt{2} \ip z \big)} r_M(\overline z,x).
\]
\end{itemize}

\end{enumerate}
\end{thm}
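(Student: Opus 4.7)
The plan is to substitute the explicit skew-orthogonal polynomials of \cite{forrester-2007} into the sums (\ref{eq:37})--(\ref{eq:39}), carry out those sums in closed form, and then absorb the non-symmetric weight prefactors by a diagonal gauge transformation of the kernel as described in the paragraph preceding the theorem, thereby arriving at the presentation $\wt K_{2M}$ of the statement.

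First I would evaluate $\epsilon_{\lambda} \wt \pi_n(\gamma)$ explicitly, splitting on whether $\gamma$ is real or not. For $\gamma \in \CnoR$ the definition of $\epsilon_{\lambda}$ is purely algebraic, giving $\epsilon_{\lambda} \wt \pi_n(\gamma) = i\,\wt \pi_n(\overline{\gamma})\,\sgn(\ip{\gamma})$, so no new computation is required. For real $\gamma = x$ the key observation is that $\pi_{2m+1}$ is designed so that $\pi_{2m+1}(y)\,e^{-y^2/2} = -\tfrac{d}{dy}\bigl(y^{2m}\,e^{-y^2/2}\bigr)$, and therefore $\tfrac12\!\int_{\R}\pi_{2m+1}(y)\,e^{-y^2/2}\sgn(y-x)\,dy$ integrates by parts to the boundary value $-x^{2m}e^{-x^2/2}$. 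The even case $\epsilon_{\lambda}\wt\pi_{2m}(x)$ collapses under $t=y^2/2$ to a lower incomplete gamma function, which is exactly the content of $r_M$ in (\ref{eq:31}).

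Second, with these two ingredients in hand I would carry out the telescoping sums (\ref{eq:37})--(\ref{eq:39}) case by case. In the complex/complex case $\wt\pi_n$ and $\epsilon_{\lambda}\wt\pi_n$ are both polynomial multiples of $\phi$, so after pulling out $\phi(\gamma)\phi(\gamma')$ the antisymmetric combinations $\wt\pi_{2m}(\gamma)\wt\pi_{2m+1}(\gamma')-\wt\pi_{2m+1}(\gamma)\wt\pi_{2m}(\gamma')$ (and its conjugates) simplify via the defining recurrence of $\pi_{2m+1}$ to telescope into the Taylor truncations $c_M$, $s_M$ and $e_M$, multiplied by the simple factor $\gamma'-\gamma$ (or its conjugates), producing the three complex/complex formulae directly. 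In the real/real case the same combinatorics produces the $e^{-\frac12(x-x')^2}e^{-xx'}e_M(xx')$ summand from the boundary terms delivered by integration by parts on $\epsilon_{\lambda}\wt\pi_{2m+1}$, while the incomplete-gamma residues assemble into the $r_M$ summand; the real/complex case is a hybrid of the two, since one of the two arguments is real (requiring the integration-by-parts and $r_M$ analysis) and the other is non-real (for which $\epsilon_{\lambda}$ is algebraic).

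Finally I would apply the gauge freedom $\mathbf{K}\mapsto \mathbf{D}\mathbf{K}\mathbf{D}^{\transpose}$ with $\det\mathbf{D}=1$, noted in the paragraph preceding the theorem, choosing $\mathbf{D}$ block-diagonal so that its blocks absorb the manifestly non-symmetric factor $e^{-\gamma^2/4 - \overline{\gamma}^2/4}$ of $\phi$ at each node; what remains is only the symmetric ``difference'' exponentials $e^{-(\gamma-\gamma')^2/2}$, $e^{-(\gamma-\overline{\gamma'})^2/2}$ and the square-root $\erfc$ factors displayed in the theorem. The main obstacle is the real/real case: one must combine the boundary term from integration by parts, which produces a Christoffel--Darboux-like Gaussian piece, with the incomplete-gamma residues and verify that the two contributions separate cleanly into a symmetric $e_M$ summand and an antisymmetric $r_M$ summand with the precise $x,x'$ dependence asserted; once this identity is established, the complex/complex and mixed cases follow from essentially mechanical Taylor-polynomial manipulations and Theorem~\ref{thm:2}.
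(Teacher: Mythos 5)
Your outline follows essentially the same route as the paper's proof: evaluate $\epsilon_{\lambda}$ on the weighted skew-orthogonal polynomials (lower incomplete gamma functions for the even indices, a pure boundary/Gaussian term for the odd ones), telescope the sums (\ref{eq:37})--(\ref{eq:39}) into the truncated exponential $e_M$ plus the $r_M$ remainder, and then strip the node-dependent prefactors by conjugating with a determinant-one diagonal matrix, exactly as the paper does with $\psi(z)=e^{(z^2-\overline{z}^2)/4}$ before invoking Theorem~\ref{thm:2}. Two small corrections to your sketch: the odd-index value is $\epsilon_{\lambda}\widetilde{\pi}_{2m+1}(x)=+x^{2m}e^{-x^2/2}$ (your integration by parts has a sign slip; cf.\ (\ref{eq:20})), and the factor absorbed by the gauge matrix is the unimodular imbalance $e^{(\gamma^2-\overline{\gamma}^2)/4}$, not $e^{-\gamma^2/4-\overline{\gamma}^2/4}$, which is the symmetric Gaussian part of $\phi$ and survives inside the displayed $e^{-\frac{1}{2}(\gamma-\gamma')^2}e^{-\gamma\gamma'}$ factors.
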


Theorem~\ref{thm:3} allows us to derive the $M \rightarrow \infty$
limit of  $\wt K_{2M}$.   
\begin{cor}
\label{cor:1}
Let $x$ and $x'$ be real numbers, and suppose $z$ and $z'$ are 
complex numbers in the open upper half plane. We define
${\displaystyle \wt K = \lim_{M \rightarrow \infty} \wt K_{2M}}$.  Then,
\begin{enumerate}
\item The limiting real/real kernel, $\wt K(x, x')$, is given by 
\[ \wt K(x, x') = 
\begin{bmatrix}
\frac{ 1}{\sqrt{2\pi}} (x' - x) e^{-\frac{1}{2}(x -
     x')^2}  & 
\frac{1}{\sqrt{2 \pi}} e^{-\frac{1}{2}(x -
      x')^2} \\
-\frac{1}{\sqrt{2 \pi}} e^{-\frac{1}{2}(x -
      x')^2}
  & \frac{1}{2} \sgn(x - x') \erfc\left(\frac{| x - x'
      |}{\sqrt{2}} \right)
\end{bmatrix}.
\]
\item The limiting complex/complex kernel, $\wt K(z,z')$, is given by
\begin{align*}
\wt K(z,z') &= \frac{1}{\sqrt{2 \pi}} \sqrt{ \erfc \big(\sqrt{2} 
\ip z  \big)  \erfc \big(\sqrt{2} 
\ip z'  \big) } \\
& \hspace{3cm} \times \begin{bmatrix}
(z'-z)  e^{-\frac{1}{2}(z-z')^2} & i(\overline z' - z)
 e^{-\frac{1}{2}(z-\overline z')^2} \\ 
i(z' - \overline{z})
  e^{-\frac{1}{2}(\overline z-z')^2} &
 -(\overline z' - \overline z)  e^{-\frac{1}{2}(\overline z-\overline z')^2} 
\end{bmatrix}.
\end{align*}
\item The limiting real/complex kernel, $\wt K(x,z)$, is given by
\[
\wt K(x,z) = \frac{1}{\sqrt{2 \pi}} \sqrt{ \erfc \big(\sqrt{2} 
\ip z  \big) } \begin{bmatrix}
(z - x) e^{-\frac{1}{2}(x - z)^2} & i(\overline z -
x)  e^{-\frac{1}{2}(x - \overline z)^2} \\ 
e^{-\frac{1}{2}(x - z)^2} & -i 
e^{-\frac{1}{2}(x - \overline z)^2}
\end{bmatrix}.
\]
\end{enumerate}
\end{cor}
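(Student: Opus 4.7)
The plan is to take the pointwise $M \to \infty$ limit of each of the three blocks of $\widetilde{K}_{2M}$ provided by Theorem~\ref{thm:3}. The only analytic inputs needed are the facts that the Taylor polynomials $e_M$ and $c_M$ converge pointwise to $\exp$ and $\cosh$ on all of $\C$, that the ``error'' term $r_M$ decays to zero, and a single error-function integral evaluation.

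First I would dispose of $r_M$. Since $\gamma(M - 1/2, x'^2/2) \leq \Gamma(M - 1/2)$ and $\Gamma(M-1/2) = \sqrt{\pi}(2M-2)!/(4^{M-1}(M-1)!)$, the definition (\ref{eq:31}) gives
\[
\bigl|r_M(z,x')\bigr| \;\leq\; \frac{\sqrt{\pi}\,|z|^{2M-1}}{2^{M-1/2}\,(M-1)!},
\]
which tends to $0$ for any fixed $(z,x')$. Consequently every summand of the form $\sqrt{\erfc(\sqrt{2}|\Im z|)}\,e^{-\frac{1}{4}(z^2+\overline{z}^2)}r_M(\cdot,\cdot)/\sqrt{2\pi}$ vanishes in the limit. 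Next I would use $e_M(\gamma) \to e^{\gamma}$ to conclude $e^{-\gamma\gamma'}e_M(\gamma\gamma') \to 1$ pointwise for any $\gamma,\gamma'\in\C$. Applying these two facts entrywise to parts~(1)--(3) of Theorem~\ref{thm:3} immediately yields the stated expressions for every entry of $\widetilde{K}(z,z')$, $\widetilde{K}(x,z)$, and for the $(1,1)$, $(1,2)$, $(2,1)$ entries of $\widetilde{K}(x,x')$.

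The only entry that requires any real work is the $(2,2)$ slot of $\widetilde{K}(x,x')$, namely $\lim_M \widetilde{IS}_{2M}(x,x') + \mathcal{E}(x,x')$. Because $c_M \to \cosh$ monotonically (and is dominated by $\cosh(|x|\sqrt{2t})$ on the finite interval $[0,x'^2/2]$), dominated convergence lets me pull the limit inside the integral. Substituting $u = \sqrt{2t}$ rewrites the limit as
\[
\lim_{M\to\infty}\widetilde{IS}_{2M}(x,x') = \frac{\sgn(x')}{2\sqrt{2\pi}} \int_0^{|x'|} \bigl[e^{-(u-x)^2/2} + e^{-(u+x)^2/2}\bigr]\,du \;-\; \bigl(x \leftrightarrow x'\bigr).
\]
Standard $\erf$-identities evaluate each integral to a sum of two error functions, and after using oddness of $\erf$ to combine the four terms, the factors $\sgn(x')$ and $\sgn(x)$ are absorbed; what survives is $\tfrac{1}{2}\erf\bigl((x'-x)/\sqrt{2}\bigr)$. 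Adding $\mathcal{E}(x,x') = \tfrac{1}{2}\sgn(x-x')$ and using $\sgn(x-x')\erf(|x-x'|/\sqrt{2}) = \erf((x-x')/\sqrt{2})$ rearranges this to $\tfrac{1}{2}\sgn(x-x')\erfc(|x-x'|/\sqrt{2})$, as claimed.

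The main technical obstacle is the decay estimate for $r_M$: one must be careful that the polynomial growth in $z^{2M-1}$ is beaten by the inverse factorial, and the cleanest way to see this is via the closed-form identity for $\Gamma(M-1/2)$ given above, which reveals the cancellation with $(2M-2)!$ and leaves a $1/(M-1)!$ decay. Everything else is pointwise limits and a single $\erf$ integration, so no uniformity or delicate estimates beyond what is already contained in Theorem~\ref{thm:3} are needed.
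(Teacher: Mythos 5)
Your proposal is correct and follows essentially the same route as the paper: entrywise pointwise limits of Theorem~\ref{thm:3}, with $e_M \to \exp$, the bound $\gamma(M-\tfrac12,\cdot)\le\Gamma(M-\tfrac12)$ plus Legendre duplication to kill $r_M$, and monotone/dominated convergence followed by an error-function evaluation for the $\widetilde{IS}$ entry. The only cosmetic difference is that you substitute $u=\sqrt{2t}$ and compute the resulting Gaussian integrals directly, where the paper instead cites a closed-form antiderivative for $\int_0^x e^{-t}t^{-1/2}\cosh(a\sqrt t)\,dt$; both lead to the same $\tfrac12\erf\bigl((x'-x)/\sqrt2\bigr)$ and hence to the stated $(2,2)$ entry.
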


We now turn to the kernel of Ginibre's Real Ensemble in the bulk.  The
circular law for $N \times N$ matrices with iid Gaussian entries says
that, when normalized by $N^{-1/2}$ the density of eigenvalues becomes
uniform on the unit disk as $N \rightarrow \infty$ (See
\cite{MR773436} for a proof of this fact when the entries are iid
Gaussian, and \cite{MR1428519} and \cite{2007arXiv0708.2895T} for more
general results).  This gives us the appropriate scaling when
considering the matrix kernel in the bulk.  Specifically, in this
section we will be interested in the large $M$ limit of $\wt K_{2M}(u
\sqrt{2M} + s, u \sqrt{2M} + s')$ where $u$ is a point in the open unit
disk, and $s$ and $s'$ are complex numbers.

When $u$ is real we expect that the limiting kernel under this scaling
should yield $\wt K(s, s')$; indeed this is the case.  When $u$ is
nonreal a new kernel arises.
\begin{thm}
\label{thm:5}
Let $-1 < u < 1$ be a real number, let $r_1, r_2, \ldots, r_{\ell} \in
\R$ and $s_1, s_2, \ldots, s_m$ be in the open upper half plane.  Set,
\begin{align*}
x_j = u \sqrt{2M} + r_j \qquad j=1,2,\ldots, \ell; \qquad \mbox{and}
\qquad 
z_k = u \sqrt{2M} + s_k \qquad k=1,2,\ldots, m. 
\end{align*}
Then,
\[
\lim_{M \rightarrow \infty} R_{\ell,m}(\mathbf{x}, \mathbf{z}) = \Pf \begin{bmatrix}
\wt K(r_j, r_{j'}) & \wt K(r_j, s_{k'} ) \\
\wt K(s_k , r_{j'}) & \wt K(s_k , s_{k'})
\end{bmatrix}; \qquad 
\begin{array}{ll}
j,j'=1,2,\ldots,\ell; \\
k,k'=1,2,\ldots,m,
\end{array} 
\]
where $\wt K$ is given as in Corollary~\ref{cor:1}.
\end{thm}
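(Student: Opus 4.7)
The plan is to reduce Theorem~\ref{thm:5} to an entrywise asymptotic analysis of the kernel $\wt K_{2M}$ from Theorem~\ref{thm:3}. By that theorem the $\ell,m$--correlation is the Pfaffian of a $2(\ell+m)\times 2(\ell+m)$ matrix whose $2\times 2$ blocks are values of $\wt K_{2M}$ at the $x_j$'s and $z_k$'s. Since the Pfaffian is a polynomial in its entries, hence continuous, it suffices to establish the pointwise limit
\[
\wt K_{2M}\!\left(u\sqrt{2M}+a,\;u\sqrt{2M}+a'\right) \;\longrightarrow\; \wt K(a,a')
\]
for each of the three combinations real/real, complex/complex and real/complex, with $\wt K$ as in Corollary~\ref{cor:1}. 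The displayed Pfaffian in Theorem~\ref{thm:5} then follows.

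The central analytic fact is the asymptotics of the truncated exponential. Writing $e_M(\gamma) = e^{\gamma}\, Q(2M-1,\gamma)$ with $Q(a,z)=\Gamma(a,z)/\Gamma(a)$ the normalized upper incomplete gamma function, one has $Q(a,\lambda a)\to 1$ as $a\to\infty$ uniformly on $\lambda \le \lambda_0 < 1$. All arguments of $e_M$ appearing in $\wt K_{2M}$ after the scaling are of the form $\gamma_M = (u\sqrt{2M}+a)(u\sqrt{2M}+a') = 2Mu^2 + O(\sqrt M)$, and since $u^2<1$ the ratio $\gamma_M/(2M-1)$ stays bounded above by some $\lambda_0<1$ for $M$ large. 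Thus $e_M(\gamma_M)\,e^{-\gamma_M}\to 1$ uniformly on compact sets of $(a,a')$. This single asymptotic produces the limits of $\wt S_{2M}$, $\wt{DS}_{2M}$ and $\wt{IS}_{2M}$ in the complex/complex and real/complex cases, as well as the leading $e_M$ piece in the real/real case: prefactors such as $e^{-\frac{1}{2}(a-a')^2}$, $\sqrt{\erfc(\sqrt 2\,|\ip a|)}$ and the polynomial factors $(a'-a)$ are either invariant under the shift or depend only on the bounded local variables, and pass directly to the limit to match Corollary~\ref{cor:1} term by term.

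Two residual pieces require extra work. First, the $r_M$ corrections in $\wt S_{2M}(x,x')$, $\wt S_{2M}(z,x)$ and $\wt{IS}_{2M}(x,z)$ must vanish. Stirling's formula applied to $\frac{2^{M-3/2}}{(2M-2)!}(u\sqrt{2M})^{2M-1}$, combined with $|\gamma(M-\tfrac12,\,\cdot)|\le \Gamma(M-\tfrac12)$ and the Gaussian prefactor $e^{-x^2/2}\sim e^{-Mu^2}$, produces a bound of order $(u^{2}e^{1-u^{2}})^M$ up to polynomial factors; the elementary inequality $v\,e^{1-v}<1$ for $v\in[0,1)$ applied at $v=u^2$ then gives exponential decay. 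Second, $\wt{IS}_{2M}(x,x')$ requires handling $\int_0^{x'^2/2} t^{-1/2}e^{-t}c_M(x\sqrt{2t})\,dt$. Using $c_M(\gamma)=\tfrac12(e_M(\gamma)+e_M(-\gamma))$, the $Q$-asymptotic applies throughout the integration range (since $x\sqrt{2t}\le x|x'|\sim 2Mu^2<2M$), replacing $c_M$ by $\cosh$ in the limit. The substitution $t=s^2/2$ converts the integral to $\sqrt 2\int_0^{|x'|}e^{-s^2/2}\cosh(xs)\,ds$; completing the square produces Gaussians centered at $\pm x$, only one of which lies in the interval $[0,|x'|]$. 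This yields $\sqrt{2\pi}\,\Phi(r'-r)$ from the first term of $\wt{IS}_{2M}$ and, symmetrically, $\sqrt{2\pi}\,\Phi(r-r')$ from the second, so $\wt{IS}_{2M}(x,x')\to\tfrac12\erf((r'-r)/\sqrt 2)$. Adding the shift-invariant term $\mathcal{E}(x,x')=\tfrac12\sgn(r-r')$ collapses the sum to $\tfrac12\sgn(r-r')\erfc(|r-r'|/\sqrt 2)$, matching Corollary~\ref{cor:1}.

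The main obstacle is uniform control of the $Q(2M-1,\cdot)$ asymptotic throughout the integration region in $\wt{IS}_{2M}(x,x')$: one must bound $c_M(\gamma)-\cosh(\gamma)$ uniformly for $\gamma\in[0,x|x'|]$ well enough to justify dominated convergence under the weight $t^{-1/2}e^{-t}$. The remaining bookkeeping, case-splits on the sign of $u$, and specialization to $u=0$ (where $\gamma_M$ is simply bounded) are routine, and Theorem~\ref{thm:5} then follows from continuity of the Pfaffian.
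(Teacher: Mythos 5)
Your proposal follows the paper's proof essentially step for step: entrywise asymptotics of the closed forms in Theorem~\ref{thm:3} plus polynomiality (hence continuity) of the Pfaffian; your normalized incomplete-gamma form of the truncated-exponential asymptotics is the content of the paper's Lemma~\ref{lemma:7}; your Stirling estimate for the $r_M$ corrections is the paper's Lemma~\ref{lemma:8}; and your treatment of the real/real $\wt{IS}$ entry (swap $c_M$ for $\cosh$, evaluate the resulting Gaussian integral, add the shift-invariant $\mathcal{E}$ term) is the paper's computation around (\ref{eq:36}) and (\ref{eq:34}), with the explicit erf antiderivative replaced by completing the square.

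Two points in your sketch are asserted rather than proved, and both close with tools you already invoke. First, in the complex/complex and real/complex entries the argument of $e_M$ is $2Mu^2+O(\sqrt M)$ with an imaginary part of order $\sqrt M$, so the phrase ``the ratio $\gamma_M/(2M-1)$ stays bounded above by $\lambda_0<1$'' is not literally meaningful; what is needed is the complex-argument statement that $e^{-2Mv_M}e_M(2Mv_M)\to 1$ whenever $v_M\to u^2$ with $u$ real, $|u|<1$, which holds because the limit point $u^2$ satisfies $u^2e^{1-u^2}<1$ and is exactly what Lemma~\ref{lemma:7} (stated for complex $v_M$) provides. Second, the step you label the main obstacle is not a new difficulty and does not require dominated convergence over the moving domain: writing $C_M=\cosh-c_M$, one has $e^{-\gamma}C_M(\gamma)\le e^{-\gamma}\sum_{k\ge 2M}\gamma^k/k!=\gamma(2M,\gamma)/\Gamma(2M)$, which tends to $0$ uniformly on $0\le\gamma\le 2M\lambda_0$ with $\lambda_0<1$ --- the same incomplete-gamma smallness you used for $e_M$. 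After the substitution $t=s^2/2$ the discarded term in the first half of $\wt{IS}_{2M}(x,x')$ is, in absolute value, $\frac{1}{\sqrt{2\pi}}\int_0^{|x'|}e^{-(x^2+s^2)/2}C_M(|x|s)\,ds$, and since $x^2+s^2\ge 2|x|s$ this is at most $\frac{|x'|}{\sqrt{2\pi}}\,\gamma(2M,|xx'|)/\Gamma(2M)$, which vanishes because $|xx'|\sim 2Mu^2<2M$; the paper disposes of the same term by playing the tail against the prefactor $e^{-x'^2/2}$. Both routes use $u\neq 0$; as you note, $u=0$ is just Corollary~\ref{cor:1}. With these two estimates supplied, your argument is complete and coincides in substance with the paper's.
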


\begin{thm}
\label{thm:6}
Let $u$ be in the open upper half plane such that $|u| < 1$ and suppose $s_1, s_2, \ldots,
s_m \in \C$.  Set, 
\[
z_k = u \sqrt{2M} + s_k \qquad k=1,2,\ldots, m. 
\]
Then,
\[
\lim_{M \rightarrow \infty} R_{0,m}(-,\mathbf{z}) = \det\left[
\frac{1}{\pi}\exp\left( -\frac{|s_k|^2}{2} - \frac{|s_{k'}|^2}{2} +
  s_{k} \overline{s}_{k'} \right) \right]_{k,k'=1}^m.
\]
\end{thm}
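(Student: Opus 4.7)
The plan is to apply Theorem~\ref{thm:3} with $\ell=0$, yielding
\[
R_{0,m}(-,\mathbf{z}) = \Pf\bigl[\wt K_{2M}(z_k,z_{k'})\bigr]_{k,k'=1}^{m},
\]
and then perform a block-diagonal gauge transformation that preserves the Pfaffian while making each $2\times 2$ block converge. A useful preliminary is the identity $-\tfrac{1}{2}(z-\bar z')^{2}-z\bar z' = -\tfrac{z^{2}}{2}-\tfrac{\bar z'^{2}}{2}$, which rewrites $\wt S_{2M}$ in the complex/complex case as
\[
\wt S_{2M}(z,z') = \frac{i(\bar z'-z)}{\sqrt{2\pi}}\,\sqrt{\erfc(\sqrt{2}\,\ip z)\,\erfc(\sqrt{2}\,\ip{z'})}\,e^{-z^{2}/2-\bar z'^{2}/2}\,e_M(z\bar z').
\]

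For the asymptotics under $z_k = u\sqrt{2M}+s_k$ I would lean on two standard estimates: (i) $\erfc(a)\sim e^{-a^{2}}/(a\sqrt\pi)$ as $a\to\infty$, applied to $a=\sqrt{2}\,\ip{z_k}\sim 2\sqrt{M}\,\ip u$; and (ii) $e^{-w}e_M(w)=1+O(e^{-cM})$ uniformly for $|w|/(2M)$ bounded below $1$, applicable to $w\in\{z_k\bar z_{k'},\,z_k z_{k'},\,\bar z_k\bar z_{k'}\}$ since $|u|^{2}<1$. The crucial observation is an exact cancellation of all $M$-order exponential contributions in $\wt S_{2M}(z_k,z_{k'})$, underpinned by the algebraic identity
\[
-2\,\rp(u^{2}) - 4(\ip u)^{2} + 2|u|^{2} = 0.
\]
Combined with the growing prefactor $(\bar z'-z)\sim -2i\,\ip u\,\sqrt{2M}$ and the denominator $\sqrt{\ip z\cdot\ip{z'}}\sim\sqrt{2M}\,\ip u$ arising from the $\erfc$ expansion, this produces a finite limiting magnitude $1/\pi$. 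A short expansion of the remaining exponent separates it as
\[
\Bigl(-\tfrac{|s_k|^{2}}{2}-\tfrac{|s_{k'}|^{2}}{2}+s_k\bar s_{k'}\Bigr) + \bigl(H(s_k)-H(s_{k'})\bigr),
\]
where $H$ is purely imaginary (collecting the $\sqrt M$-order phase $-2i\sqrt{2M}\,\ip u\cdot\rp s$ and the constant-order piece $-i\,\rp s\cdot\ip s$). By contrast, $\wt{DS}_{2M}(z_k,z_{k'})$ and $\wt{IS}_{2M}(z_k,z_{k'})$ decay like $e^{-4M(\ip u)^{2}}$, because their polynomial prefactors $(z'-z)=s'-s$ and $(\bar z'-\bar z)=\bar s'-\bar s$ are $O(1)$ rather than $O(\sqrt M)$, and the algebraic simplification enjoyed by $\wt S_{2M}$ is unavailable for those entries.

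To absorb the residual gauge without disturbing the Pfaffian, let $\mathbf{D}$ be block-diagonal with $k$-th block $\diag(e^{-H(s_k)},e^{H(s_k)})$. Each block has determinant $1$, so $\det\mathbf{D}=1$ and $\Pf(\mathbf{D}\mathbf{K}\mathbf{D}^{\transpose})=\Pf\mathbf{K}$; moreover $\mathbf{D}$ is unit modulus because $H$ is purely imaginary, so it cannot rescue any of the decaying entries. In the conjugated block, the $(1,2)$ entry converges to $\tfrac{1}{\pi}\exp(-|s_k|^{2}/2-|s_{k'}|^{2}/2+s_k\bar s_{k'})$, the $(2,1)$ entry to minus the same expression with $s_k,s_{k'}$ swapped, and the $(1,1)$ and $(2,2)$ entries to $0$. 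Since the Pfaffian is a polynomial in its entries, $\lim_{M\to\infty}\Pf\mathbf{K}$ equals the Pfaffian of the limiting block matrix, each of whose $2\times 2$ blocks has the form $\begin{pmatrix}0 & b(s_k,s_{k'})\\ -b(s_{k'},s_k) & 0\end{pmatrix}$. An elementary Pfaffian-to-determinant identity (verifiable by expanding the Pfaffian sum over fixed-point-free involutions and recognizing the column permutation expansion of a determinant) reduces this to $\det\bigl[b(s_k,s_{k'})\bigr]_{k,k'=1}^{m}$, which is the claimed formula.

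The principal obstacle is the precise bookkeeping of exponential factors: the $M$-order cancellation in $\wt S_{2M}$ requires combining contributions from $\erfc(\sqrt 2\,\ip z)\erfc(\sqrt 2\,\ip{z'})$, from $e^{-z^{2}/2-\bar z'^{2}/2}$, and from $e_M(z\bar z')$, and the separation of the purely imaginary gauge $H(s_k)-H(s_{k'})$ from the Ginibre exponent $-|s|^{2}/2-|s'|^{2}/2+s\bar s'$ must be done with some care. The decay of $\wt{DS}_{2M}$ and $\wt{IS}_{2M}$ and the uniform control on $e^{-w}e_M(w)-1$ for $w$ in a compact neighborhood of $2M|u|^{2}$ are comparatively straightforward once one has identified the correct algebraic cancellations.
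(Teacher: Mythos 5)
Your strategy is the paper's own: represent $R_{0,m}$ via Theorem~\ref{thm:3} as a Pfaffian, strip off the divergent purely imaginary phase by conjugating with a determinant-one, unit-modulus block-diagonal matrix, show that in each $2\times 2$ block the $\wt S$ entries converge to the Ginibre kernel while $\wt{DS}$ and $\wt{IS}$ vanish, and finish with the Pfaffian-to-determinant identity. The gap is your estimate (ii). It is not true that $e^{-w}e_M(w)=1+O(e^{-cM})$ uniformly for $|w|/(2M)$ bounded below $1$: writing $w=2Mv$, the correction $e^{-w}\bigl(e^{w}-e_M(w)\bigr)$ has modulus of order $e^{2M(1-\rp{v}+\log|v|)}/\sqrt{M}$ (this is the Szeg\H{o}-type asymptotic for partial sums of the exponential that Lemma~\ref{lemma:7} imports from the literature), and this tends to $0$ only when $|v\,e^{1-v}|<1$. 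For the $\wt S$ entry the argument is $z_k\bar z_{k'}\approx 2M|u|^2$, so $v\approx|u|^2\in(0,1)$ and your claim is fine; but for $\wt{DS}$ and $\wt{IS}$ the arguments are $z_kz_{k'}\approx 2Mu^2$ and $\bar z_k\bar z_{k'}\approx 2M\bar u^2$, and for non-real $u$ this can fail badly: e.g.\ $u=0.9i$ gives $v=-0.81$ and $|v\,e^{1-v}|=0.81\,e^{1.81}>1$, so $e^{-w}e_M(w)$ grows exponentially instead of tending to $1$. Consequently your assertion that $\wt{DS}_{2M}$ and $\wt{IS}_{2M}$ decay like $e^{-4M(\ip{u})^2}$ is not justified by your lemmas, and is not even the correct rate in that regime.

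The conclusion does survive, but it needs the finer input. By Lemma~\ref{lemma:7} the correction to $e^{-w}e_M(w)$ is of size $e^{2M(1-u^2)}u^{4M}/\sqrt{M}$ (up to constants), and multiplying by the factor $e^{-4M(\ip{u})^2}$ contributed by the two $\erfc$ square roots gives modulus $e^{2M(1-|u|^2+2\log|u|)}$, which tends to $0$ because $1-|u|^2+2\log|u|<0$ for $0<|u|<1$; the $O(\sqrt{M})$ terms in the exponents do not disturb this. That combined-exponent check is exactly how the paper kills $\wt{DS}_{2M}$ and $\wt{IS}_{2M}$. With this repair the remainder of your argument goes through as in the paper; indeed your purely imaginary gauge $H$ (which also absorbs the constant-order phase $-i\,\rp{s}\,\ip{s}$) is a slightly more careful bookkeeping of the residual phase than the paper's $\eta_M$, and the final reduction of the Pfaffian of blocks $\begin{pmatrix}0 & b(s_k,s_{k'})\\ -b(s_{k'},s_k) & 0\end{pmatrix}$ to $\det[b(s_k,s_{k'})]$ is the identity the paper cites.
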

\begin{rem}
The function
\[
\kappa(s,s') = \frac{1}{\pi}\exp\left( -\frac{|s|^2}{2} - \frac{|s'|^2}{2} +
  s \overline{s'} \right) 
\]
is the limiting kernel of Ginibre's complex ensemble.  Thus, the
limiting correlation functions in the bulk of Ginibre's real ensemble
off the real line is identical to the limiting correlation functions
in the bulk of Ginibre's complex ensemble.  
\end{rem}

\section{Proofs}

\subsection{The Proof of Lemma~\ref{lemma:4} and Lemma \ref{lemma:1}}
\label{sec:proof-lemma-refl}
In the case of the real asymmetric ensembles $Y = \C$ and $b = 1$.  In
the case of the Hermitian ensembles $Y = \R$ and $b = \sqrt{\beta}$.
We start with 
\[
\la q_n | q_{n'} \ra_{w(\lambda + \eta)} = \la \widetilde{q}_n |
\widetilde{q}_{n'} \ra_{\lambda + \eta} = \int_{Y}
\big( \widetilde{q}_n(\gamma) \, \epsilon_{\lambda +
  \eta}\widetilde{q}_{n'}(\gamma) - \epsilon_{\lambda +
  \eta}\widetilde{q}_n(\gamma) \, \widetilde{q}_{n'}(\gamma) \big) \, d(\lambda +
\eta)(\gamma). 
\]
An easy calculation reveals that this is equal to
\begin{align*}
  \la {q_n} | q_{n'} \ra_{w \lambda} &+ \frac{2}{b} \sum_{t=1}^T c_t
  \big( \widetilde{q}_n(y_t) \epsilon_{\lambda}
  \widetilde{q}_{n'}(y_t) - \widetilde{q}_{n'}(y_t) \epsilon_{\lambda}
  \widetilde{q}_n(y_u) \big) \\ & \hspace{4cm}- \frac{2}{b}
  \sum_{t=1}^T \sum_{t'=1}^T c_t c_{t'} \widetilde{q}_n(y_t)
  \widetilde{q}_{n'}(y_{t'}) \, \mathcal{E}(y_t, y_{t'}).
\end{align*}
Next we define $\mathbf{A}$ to be the $Nb \times 2T$ matrix given by
\[
\mathbf{A} :=  \left[\sqrt{\frac{2 c_t}{b}} \, \widetilde{q}_n(y_t) \quad
\sqrt{\frac{2 c_t}{b}} \,\epsilon_{\lambda} 
\widetilde{q}_n(y_t) \right]; \qquad  n = 0, 1, \ldots, Nb-1
\quad \mbox{and} \quad t = 1, 2, 
\ldots T,
\]
and the $2T \times 2T$ matrix $\mathbf{B}$,
\[
\mathbf{B} := -\mathbf{J} +  \begin{bmatrix} 
\sqrt{c_t c_{t'}} \, \mathcal{E}(y_t, y_{t'}) & 0 \\ 0 & 0
\end{bmatrix}; \qquad t,t'=1,2, \ldots, T.
\]
We define $\mathbf{C}$ to be the $Nb \times N$b matrix given by
$\mathbf{C} = (\mathbf{U}_{\mathbf{q}}^{w\lambda})^{\transpose}$; the
$n,n'$ entry of $\mathbf{C}$ is $\mu_{n,n'}$.  A bit of matrix algebra
reveals that 
\begin{equation}
\label{eq:12}
\frac{Z^{w(\lambda + \eta)}}{Z^w} = \frac{\Pf(
  \mathbf{C}^{-\mathsf{T}} - \mathbf{A B A}^{\mathsf{T}} )}{\Pf(
  \mathbf{C}^{-\mathsf{T}})} =  \frac{\Pf(
  \mathbf{B}^{-\mathsf{T}} - \mathbf{A}^{\mathsf{T}} \mathbf{B A} )}{\Pf(
  \mathbf{B}^{-\mathsf{T}})},
\end{equation}
where the second equality comes from the Pfaffian Cauchy-Binet formula
(see \ref{app:B}).  Notice that
\[
\mathbf{B}^{-\transpose} = - \mathbf{J} -    \begin{bmatrix} 
0 & 0 \\
0 &  \sqrt{c_t c_{t'}} \, \mathcal{E}(y_t, y_{t'}) 
\end{bmatrix}; \qquad t,t'=1,2, \ldots, T.
\]
And from Lemma~\ref{lemma:6}, $\Pf( \mathbf{B})^{\transpose} =
\Pf(-\mathbf{J}) = (-1)^T$.  A bit more matrix algebra reveals,  
\[
\mathbf{A}^{\transpose} \mathbf{C A} =  \begin{bmatrix} 
\sqrt{c_t c_{t'}} \, DS_N(y_t, y_{t'}) & \sqrt{c_t c_{t'}} \,
S_N(y_t, y_{t'}) \\ 
-\sqrt{c_t c_{t'}} \, S_N(y_{t'}, y_{t}) & \sqrt{c_t c_{t'}} \,
IS_N(y_t, y_{t'}) 
\end{bmatrix}; \qquad t,t'=1,2,\ldots T.
\]
Using these facts and simplifying (\ref{eq:12}) we find
\[
\frac{Z^{w(\lambda + \eta)}}{Z^w} = (-1)^T \Pf\big( -\mathbf{J} - [
  \sqrt{c_t c_{t'}} K_N(y_t, y_{t'})]\big); \qquad t,t'=1,2,\ldots,T,
\]
and the Lemma follows by using the fact that if $\mathbf{E}$ is an
antisymmetric $2T \times 2T$ matrix, then $\Pf(-\mathbf{E}) = (-1)^T
\Pf(\mathbf{E}).$

\subsection{The Proof of Lemma~\ref{lemma:5} and Lemma~\ref{lemma:2}} 
\label{sec:proof-lemma-refl-1}
We start with
\begin{equation}
\label{eq:28}
\frac{1}{L! M! 2^M} \int_{\R^L} \int_{\C^M} \Omega_{L,M}(\bs{\alpha}, \bs{\beta})
\, d(\lambda+\eta)_L(\bs{\alpha}) \,
d(\lambda+\eta)_{2M}(\bs{\beta}).
\end{equation}
Notice that in the case of the Hermitian ensembles that this is equal
to $Z^{w(\lambda + \eta)}$ when $L = N$ and $M = 0$, and the proof of
Lemma~\ref{lemma:2} follows from the proof recorded here by setting every
instance of $M$ to 0.

First we write
\begin{equation}
\label{eq:27}
d(\lambda + \eta)_L(\bs{\alpha}) = \prod_{j=1}^L
\big(d\lambda_1(\alpha_j) + d\eta_1(\alpha_j) \big) = \sum_{\ell = 0}^L
\sum_{\mf{u} \in \mf{I}_{\ell}^L}
d\eta_{\ell}(\bs{\alpha}_{\mf{u}})
d\lambda_{L-\ell}(\bs{\alpha}_{\mf{u}'}),
\end{equation}
where given $\mf{t} \in \mf{I}_n^W$, we define $\mf{t}'$ to be the
unique element in $\mf{I}_{W-n}^W$ whose range is disjoint from
$\mf{t}$.  Notice that since $\mf{u}'$ appears in the summand on the
right hand side of (\ref{eq:27}), the inner sum is not actually empty
when $\ell = 0$; in this situation the summand is equal to
$d\lambda_L(\bs{\alpha})$.  

Similarly,
\[
d(\lambda + \eta)_{2M}(\bs{\beta}) = \sum_{m = 0}^M
\sum_{\mf{v} \in \mf{I}_{m}^M}
d\eta_{2m}(\bs{\beta}_{\mf{v}})
d\lambda_{2(M-m)}(\bs{\beta}_{\mf{v}'}). 
\]
Thus, (\ref{eq:28}) equals
\begin{align*}
&\sum_{\ell = 0}^L \sum_{m = 0}^M \sum_{\mf{u} \in \mf{I}_{\ell}^L}  
\sum_{\mf{v} \in \mf{I}_{m}^M} \frac{1}{L! M! 2^M} 
\int\limits_{\R^L} \int\limits_{\C^M} \Omega_{L,M}(\bs{\alpha}, \bs{\beta})
d\eta_{\ell}(\bs{\alpha}_{\mf{u}})
d\lambda_{L-\ell}(\bs{\alpha}_{\mf{u}'}) \, d\eta_{2m}(\bs{\beta}_{\mf{v}})
d\lambda_{2(M-m)}(\bs{\beta}_{\mf{v}'}). 
\end{align*}
We can relabel the $\alpha$ and $\beta$ in the integrand in any manner
we wish, and in particular we may make the integrand independent of
$\mf{u}$ and $\mf{v}$.  In particular, if we set $\mf{i} \in
\mf{I}_n^W$ to be the identity function on $\{1,2,\ldots,n\}$, and
since the cardinality of $\mf{I}_n^W$ is ${W \choose n}$, we find that
(\ref{eq:28}) is equal to 
\begin{align}
&   \sum_{\ell = 0}^L \sum_{m = 0}^M \frac{1}{\ell! (L-\ell)! m! (M -
  m)! 2^M} \nonumber \\
& \quad \times \int\limits_{\R^{L-\ell}} \int\limits_{\C^{M-m}} \bigg\{
\int_{\R^{\ell}} \int_{\C^m} \Omega_{L,M}(\bs{\alpha}, \bs{\beta})
 \, d\eta_{\ell}(\bs{\alpha}_{\mf{i}})
 \, d\eta_{2m}(\bs{\beta}_{\mf{i}}) \bigg\}
 d\lambda_{L-\ell}(\bs{\alpha}_{\mf{i}'}) \,
 d\lambda_{2(M-m)}(\bs{\beta}_{\mf{i}'}).   \label{eq:30}
\end{align}
Now,
\[
d\eta_{\ell}(\bs{\alpha}_{\mf{i}}) = \prod_{j=1}^{\ell}
  \eta_1(\alpha_j) = \prod_{j=1}^{\ell} \sum_{u=1}^U a_u
  \, d\delta(\alpha_j - x_u). 
\]
We may exchange the sum and the integral on the right hand side of
this expressions by using the set, $\mf{F}_{\ell}^U$ of all functions
from $\{1, 2, \ldots, \ell\}$ into $\{1,2,\ldots,U\}$. Specifically,
\[
d\eta_{\ell}(\bs{\alpha}_{\mf{i}}) = \sum_{\mf{u} \in \mf{F}_{\ell}^U}
\bigg\{\prod_{j=1}^{\ell} a_{\mf{u}(j)}  \, d\delta(\alpha_j -
x_{\mf{u}(j)})\bigg\}, 
\]
and similarly,
\[
d\eta_{2m}(\bs{\beta}_{\mf{i}}) = \sum_{\mf{v} \in \mf{F}_{m}^V}
\bigg\{ \prod_{k=1}^{m} b_{\mf{v}(k)}\, d\widehat{\delta}(\beta_k -
z_{\mf{v}(k)})\bigg\}. 
\]
Thus,
\begin{align}
  &\int_{\R^{\ell}} \int_{\C^m} \Omega_{L,M}(\bs{\alpha}, \bs{\beta})
   \,   d\eta_{\ell}(\bs{\alpha}_{\mf{i}}) \,
  d\eta_{2m}(\bs{\beta}_{\mf{i}})  \nonumber \\ & \qquad = \sum_{\mf{u} \in
    \mf{F}_{\ell}^U} \sum_{\mf{v} \in \mf{F}_{m}^V} \int_{\R^{\ell}}
  \int_{\C^m} \Omega_{L,M}(\bs{\alpha}, \bs{\beta})
  \,\bigg\{\prod_{j=1}^{\ell} a_{\mf{u}(j)} \, 
  d\delta(\alpha_j - x_{\mf{u}(j)})\bigg\} \bigg\{ \prod_{k=1}^{m}
  b_{\mf{v}(k)}\, d\widehat{\delta}(\beta_k - z_{\mf{v}(k)}) \bigg\} \nonumber \\
  & \qquad = \sum_{\mf{u} \in \mf{F}_{\ell}^U} \sum_{\mf{v} \in
    \mf{F}_{m}^V} \bigg\{ \prod_{j=1}^{\ell} a_{\mf{u}(j)}
  \prod_{k=1}^{m} b_{\mf{v}(k)} \bigg\} 2^m \Omega_{L,M}(\mathbf{x}_{\mf{u}}
  \vee  \bs{\alpha}_{\mf{i}'}, \mathbf{z}_{\mf{v}} \vee
  \bs{\beta}_{\mf{i}'}). \label{eq:29}
\end{align}
Notice that if $\mf{u}$ or $\mf{v}$ is not one-to-one then
$|\Delta(\mathbf{x}_{\mf{u}} \vee  \bs{\alpha}_{\mf{i}},
\mathbf{z}_{\mf{v}} \vee  \bs{\beta}_{\mf{i}})| = 0$.  We may consequently
replace the sums over $\mf{F}_{\ell}^U$ and $\mf{F}_k^V$ with their
respective subsets of one-to-one functions.  Moreover, since
$\Omega_{L,M}$ is symmetric in the coordinates of each of its
arguments, we may replace each one-to-one function in these sums with
the increasing function with the same range so long as we compensate
by multiplying by $\ell!$ and $m!$.  Lemma~\ref{lemma:5} follows from
the definition of $R_{\ell,m,L,M}$ by substituting (\ref{eq:29}) into
(\ref{eq:30}).  Lemma~\ref{lemma:2} follows from the fact that $R_n =
R_{n,0,N,0}/Z^w$.

\subsection{The Proof of Theorem~\ref{thm:3} and Corollary~\ref{cor:1}}  

It shall be convenient to introduce the following variants of $S_{2M}$
and $DS_{2M}$: 
\[
\widehat S_{2M}(\gamma, \gamma') := \sum_{m=0}^{M-1} \frac{\pi_{2m}(\gamma)
  \epsilon_{\lambda} \wt \pi_{2m+1}(\gamma') - \pi_{2m+1}(\gamma)
  \epsilon_{\lambda} \wt \pi_{2m}(\gamma')}{(2m)!},
\]
and
\[
\widehat{DS}_{2M}(\gamma,\gamma') := \sum_{m=0}^{M-1} \frac{
  \pi_{2m}(\gamma) \pi_{2m+1}(\gamma') -  \pi_{2m+1}(\gamma)
  \pi_{2m}(\gamma')}{(2m)!},  
\]
The following lemma gives a closed form for these functions.

\begin{lemma}
\label{lemma:3} Let $x$ be a real number, and suppose $z$ and $z'$
are complex numbers.
\begin{enumerate}
\item $\widehat S_{2M}(z, x) =  \phi(x) e_M(z x) +
  r_M(z,x),$ 
  where $r_M$ is given in (\ref{eq:31});
\item $\widehat{DS}_{2M}(z,z') = (z'-z) e_M(zz').$
\end{enumerate}
\end{lemma}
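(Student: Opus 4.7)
The plan is to reduce everything to the recursion
\[
\mathcal{I}_n(x) - (n-1)\mathcal{I}_{n-2}(x) = x^{n-1} e^{-x^2/2}, \qquad
\mathcal{I}_n(x) := \tfrac{1}{2}\int_{\R} y^n e^{-y^2/2} \sgn(y-x)\, dy,
\]
obtained via a single integration by parts (and noting $\mathcal{I}_n(x) = \epsilon_\lambda(y^n e^{-y^2/2})(x)$ on the real line). The point of the choice $\pi_{2m+1}(\gamma) = \gamma^{2m+1} - 2m\,\gamma^{2m-1}$ is precisely that this recursion forces
\[
\epsilon_\lambda \wt\pi_{2m+1}(x) \;=\; \mathcal{I}_{2m+1}(x) - 2m\,\mathcal{I}_{2m-1}(x) \;=\; x^{2m} e^{-x^2/2} \;=\; \phi(x)\, x^{2m}
\]
on the real line, so that the half of $\widehat{S}_{2M}$ coming from the odd-index $\epsilon_\lambda$-terms collapses immediately to $\phi(x)\,c_M(zx)$.

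Part (2) is then a short algebraic check: substituting $\pi_{2m}(z) = z^{2m}$ and $\pi_{2m+1}(z) = z^{2m+1} - 2m\,z^{2m-1}$ into the definition of $\widehat{DS}_{2M}$ and combining cross terms factors out $(z'-z)$, leaving $\sum_{m=0}^{M-1}\bigl[(zz')^{2m}/(2m)! + (zz')^{2m-1}/(2m-1)!\bigr]$ (with the $m=0$ term in the second piece vanishing by convention), which is exactly $c_M(zz') + s_M(zz') = e_M(zz')$.

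For part (1), what remains after the odd-index collapse is $-\sum_{m=0}^{M-1}\pi_{2m+1}(z)\mathcal{I}_{2m}(x)/(2m)!$. I would expand $\pi_{2m+1}(z)/(2m)! = z^{2m+1}/(2m)! - z^{2m-1}/(2m-1)!$ and shift the index of the second sum by one; this produces a telescoping arrangement in which the interior differences $\mathcal{I}_{2m+2}(x)/(2m+1)! - \mathcal{I}_{2m}(x)/(2m)!$ reduce, via the $\mathcal{I}$-recursion, to $x^{2m+1}\phi(x)/(2m+1)!$. Summing these assembles $\phi(x)\,s_M(zx)$, while the single unpaired endpoint contributes $-z^{2M-1}\mathcal{I}_{2M-2}(x)/(2M-2)!$. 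Combining with the $c_M$-piece already in hand gives
\[
\widehat{S}_{2M}(z,x) \;=\; \phi(x)\, e_M(zx) \;-\; \frac{z^{2M-1}\,\mathcal{I}_{2M-2}(x)}{(2M-2)!}.
\]

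The main obstacle is the final identification of this boundary term with $r_M(z,x)$, which amounts to showing $\mathcal{I}_{2M-2}(x) = -2^{M-3/2}\,\sgn(x)\,\gamma\!\bigl(M - \tfrac{1}{2},\, x^2/2\bigr)$. I would establish this by induction on $M$: the base case $M=1$ reduces to $\mathcal{I}_0(x) = -\int_0^x e^{-y^2/2}\,dy$, which matches after the substitution $t = u^2/2$ in the defining integral of the incomplete gamma; the inductive step lines up the $\mathcal{I}$-recursion with the standard functional equation $\gamma(s+1, y) = s\,\gamma(s, y) - y^s e^{-y}$. Inserting the resulting closed form for $\mathcal{I}_{2M-2}$ back into the boundary term reproduces $r_M(z,x)$ exactly and completes (1).
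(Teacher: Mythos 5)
Your proof is correct and takes essentially the same route as the paper: collapse the odd-index $\epsilon_{\lambda}$-terms to $x^{2m}e^{-x^2/2}$, telescope the remaining sum with a first-order recursion (your integration-by-parts identity for $\mathcal{I}_n$ is the paper's incomplete-gamma functional equation $\gamma(a+1,x)=a\gamma(a,x)-x^a e^{-x}$ after the identification $\mathcal{I}_{2m}(x)=-2^{m-1/2}\sgn(x)\,\gamma(m+\frac{1}{2},\frac{x^2}{2})$), and match the lone surviving boundary term with $r_M$, with part (2) being the identical algebraic manipulation. The only organizational difference is that you defer the incomplete-gamma evaluation to the single term $\mathcal{I}_{2M-2}$ (proved by induction), whereas the paper evaluates $\epsilon_{\lambda}\wt\pi_{n}$ in incomplete-gamma form for all $n$ at the outset.
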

\begin{proof}
First we compute $\epsilon_{\lambda} \wt \pi_{2m}$ and
$\epsilon_{\lambda} \wt \pi_{2m+1}$.  We start by noticing,
\[
\epsilon_{\lambda} \wt g(x) = \frac{1}{2} \int_{\infty}^{\infty}
g(y) \sgn(y - x) \, dy = -\frac{1}{2} \int_{-\infty}^x g(y) \,
dy + \frac{1}{2} \int_{x}^{\infty} g(y) \, dy.
\]
When $g(y) = e^{-y^2/2} y^n$, we may evaluate the latter two
integrals in terms of the incomplete gamma functions.  
\[
\epsilon_{\lambda} \wt g(x) = \left\{
\begin{array}{ll}
-2^{(n-1)/2} \sgn(x) \cdot \gamma\!\left( \frac{n+1}{2},
  \frac{x^2}{2} \right) & \quad \mbox{if } n \mbox{ is even;} \\ & \\
2^{(n-1)/2} \Gamma\!\left( \frac{n+1}{2},
  \frac{x^2}{2} \right) & \quad \mbox{if } n \mbox{ is odd.} \\
\end{array}
\right.
\]
We immediately conclude that 
\begin{equation}
\label{eq:18}
\epsilon_{\lambda} \wt \pi_{2m}(x) = -2^{m-1/2} \sgn(x) \cdot
\gamma\!\left( m+ \frac{1}{2}, \frac{x^2}{2} \right),
\end{equation}
and
\begin{equation}
\label{eq:20}
\epsilon_{\lambda} \wt \pi_{2m+1}(x) = 2^m \left[\Gamma\!\left(m+1,
    \frac{x^2}{2}\right) - m \Gamma\!\left(m,
    \frac{x^2}{2}\right) \right] = x^{2m} e^{-x^2/2},
\end{equation}
where in the second equality we used the fact that $\Gamma(a+1, x) = a
\Gamma(a,x) + x^a e^{-x}$. 

Using (\ref{eq:18}) and (\ref{eq:20}), we may write
\begin{align*}
&\widehat S_{2M}(z,x) = \phi(x) c_M(z x) \\
& \quad + \sgn(x) \left\{
  \sum_{m=0}^{M-1} \frac{2^{m-1/2}}{(2m)!} z^{2m+1} \gamma\!\left( m+
    \frac{1}{2}, \frac{x^2}{2} \right) - \sum_{m=1}^{M-1}
  \frac{2^{m-1/2}}{(2m-1)!} z^{2m-1} \gamma\!\left( m+ \frac{1}{2},
    \frac{x^2}{2} \right) \right\}.
\end{align*}
Next, we use the fact that
\begin{equation}
\label{eq:21}
\gamma(a+1,x) = a \gamma(a,x) - x^a e^{-x},
\end{equation}
so that the second sum in this expression becomes 
\[
- \sum_{m=1}^{M-1} \frac{2^{(m-1)-1/2}}{(2(m-1))!} z^{2(m-1)+1}
\gamma\!\left( (m-1)+ \frac{1}{2}, \frac{x^2}{2} \right) +
\phi(x) \sum_{m=1}^{M-1} \frac{z^{2m-1} |x|^{2m-1}}{(2m-1)!}.
\]
Consequently,
\begin{align*}
\widehat S_{2M}(z,x) &= \phi(x) \big( c_M(z x) + \sgn(x)
s_M(z |x|) \big) + \frac{2^{M - 3/2}}{(2M-2)!} \sgn(x)
z^{2M-1} \, \gamma\!\left(M - \frac{1}{2}, \frac{x^2}{2} \right) \\
&= \phi(x) e_M(z x) + \frac{2^{M - 3/2}}{(2M-2)!} \sgn(x)
z^{2M-1} \, \gamma\!\left(M - \frac{1}{2}, \frac{x^2}{2} \right).
\end{align*}

Turning to $\widehat{DS}_{2M}$,
\begin{align*}
  \widehat{DS}_{2M}(z,z') &= \sum_{m=0}^{M-1} \frac{z^{2m}(z'^{2m+1} - 2m
    z'^{2m-1}) - (z^{2m+1} - 2m z^{2m-1})  z'^{2m}}{(2m)!} \\
  &= \sum_{m=0}^{M-1} \frac{z^{2m} z'^{2m+1} - z^{2m+1} z'^{2m}}{(2m)!}
  + \sum_{m=1}^{M-1} \frac{z^{2m-1} z'^{2m} - z^{2m}
    z'^{2m-1}}{(2m-1)!}  \\
  &= (z' - z) \left\{ \sum_{m=0}^{M-1} \frac{z^{2m} z'^{2m}}{(2m)!} +
    \sum_{m=1}^{M-1}
    \frac{z^{2m-1} z'^{2m-1}}{(2m-1)!} \right\} \\
  & = (z'-z) e_M(z'z). \qedhere
\end{align*}
\end{proof}

With a closed form for $\widehat S_{2M}$ and $\widehat{DS}_{2M}$ in hand, we are
ready to prove Theorem~\ref{thm:3}.
\begin{proof}[Proof of Theorem~\ref{thm:3}]
From Lemma~\ref{lemma:3} we have that 
\[
S_{2M}(x, x') = \frac{\phi(x)}{\sqrt{2\pi}} \widehat S_{2M}(x, x') =
\frac{\phi(x) \phi(x')}{\sqrt{2\pi}} e_M(x x') +
\frac{\phi(x)}{\sqrt{2\pi}} r_M(x,x') 
\]
and
\[
DS_{2m}(x, x') = \frac{\phi(x) \phi(x')}{\sqrt{2\pi}}
\widehat{DS}_{2M}(x,x') = \frac{\phi(x) \phi(x')}{\sqrt{2\pi}} (x'-x) e_M(xx').
\]
Now,
\[
\phi(x) \phi(x') = e^{-\frac{1}{2}(x^2 + x'^2)} =
e^{-\frac{1}{2}(x-x')^2} e^{-xx'},
\]
and therefore,
\[
S_{2M}(x, x') = \frac{e^{-\frac{1}{2}(x-x')^2}}{\sqrt{2\pi}} e^{-xx'} 
e_M(x x') + \frac{\phi(x)}{\sqrt{2\pi}} r_M(x,x'),
\]
and
\[
DS_{2M}(x, x') = \frac{e^{-\frac{1}{2}(x-x')^2}}{\sqrt{2\pi}} (x'-x) e^{-xx'}
e_M(x x') 
\]

The computation of $IS_{2M}(x, x')$ is a bit more involved.
From~(\ref{eq:39}), (\ref{eq:18}) and (\ref{eq:20}), we see
\begin{align*}
IS_{2M}(x, x') &= 
\frac{1}{2 \sqrt{\pi}} \sum_{m=0}^{M-1}
\frac{2^m}{(2m)!} \cdot  \gamma\!\left(m + \frac{1}{2},
  \frac{x'^2}{2} \right) \sgn(x') x^{2m} e^{-x^2/2}
\\ & \hspace{3cm} -
\frac{1}{2 \sqrt{\pi}} \sum_{m=0}^{M-1}
\frac{2^m}{(2m)!} \cdot  \gamma\!\left(m + \frac{1}{2},
  \frac{x^2}{2} \right) \sgn(x) x'^{2m} e^{-x'^2/2}.
\end{align*}
$IS_{2M}(x, x')$ is clearly skew-symmetric in its arguments;
looking at the first sum in this expression we thus find,
\begin{align*}
  & \frac{1}{2 \sqrt{\pi}} \sum_{m=0}^{M-1} \frac{2^m}{(2m)!}
  \sgn(x') x^{2m} e^{-x^2/2}
  \int_0^{x'^2/2} t^{m-1/2} e^{-t} \, dt  \\
  & \qquad = \frac{1}{2 \sqrt{\pi}} e^{-x^2/2} \sgn(x')
  \int_0^{x'^2/2} \frac{e^{-t}}{\sqrt{t}} \left\{\sum_{m=0}^{M-1}
    \frac{2^m}{(2m)!}  t^{m} x^{2m}\right\} \, dt,
\end{align*}
where on the left hand side we have replaced the lower incomplete
gamma function with its integral definition, and on the right hand
side we have exploited the linearity of the integral. The sum on the
right hand side of this equation is equal to $c_M(x \sqrt{2 t})$, and
thus
\begin{align*}
& IS_{2M}(x,x') = \\
& \quad    \frac{e^{-x^2/2}}{2\sqrt{\pi}} \sgn(x')
    \int_0^{x'^2/2} \frac{e^{-t}}{\sqrt{t}} c_M(x \sqrt{2 t})
    \, dt - \frac{e^{-x'^2/2}}{2\sqrt{\pi}} \sgn(x)
    \int_0^{x^2/2} \frac{e^{-t}}{\sqrt{t}} c_M(x' \sqrt{2 t})
    \, dt.
\end{align*}

Turning to the complex/complex entries of $K_{2M}$, if $z$ is assumed
to be in the open upper half plane then $\epsilon_{\lambda} \pi_{n}(z) = i
\pi_{n}(\overline{z})$.  From this we see  
that $S_{2M}(z,z') = i \phi(z) \phi(z') \widehat{DS}_{2M}(z, z')$,
$DS_{2M}(z,z') = \phi(z) \phi(z') \widehat{DS}_{2M}(z, z')$ and $IS_{2M}(z,z')
= -\phi(z) \phi(z') \widehat{DS}_{2M}(\overline{z}, \overline{z}')$. 

Next, we define
\[
\psi(z) = e^{\frac{1}{4}(z^2 - \overline z^2)}.
\]
Notice that
\[
e^{-\frac{1}{4}(z^2 + \overline{z}^2)} e^{-\frac{1}{4}(z^2 +
  \overline z'^2)} = \psi(z) \psi(z') e^{-\frac{1}{2}(z-z')^2} e^{-zz'},
\]
and thus 
\[
\phi(z) \phi(z') = \psi(z) \psi(z') e^{-\frac{1}{2}(z-z')^2} \sqrt{ \erfc
  \big(\sqrt{2} \; \ip z  \big)  \erfc \big(\sqrt{2} \;
\ip z'  \big) } e^{-zz'}.
\]
Using this and Lemma~\ref{lemma:3}, we conclude that 
\[
S_{2M}(z,z') = \psi(z) \psi(\overline z') \frac{i
  e^{-\frac{1}{2}(z-\overline z')^2}}{\sqrt{2\pi}} (\overline z'-z) \sqrt{ \erfc 
  \big(\sqrt{2} \ip z  \big)  \erfc \big(\sqrt{2} \ip z'  \big) }
e^{-z\overline z'} e_M(z\overline z'),
\]
\[
DS_{2M}(z,z') = \psi(z) \psi(z')
\frac{e^{-\frac{1}{2}(z-z')^2}}{\sqrt{2\pi}} (z'-z) \sqrt{ \erfc
  \big(\sqrt{2} \ip z  \big)  \erfc \big(\sqrt{2} 
\ip z'  \big) } e^{-zz'} e_M(zz'),
\]
and
\[
S_{2M}(z,z') = \psi(\overline z) \psi(\overline z') \frac{-
  e^{-\frac{1}{2}(\overline z-\overline z')^2}}{\sqrt{2\pi}}
(\overline z'-z \overline ) \sqrt{ \erfc  
  \big(\sqrt{2} \ip z  \big)  \erfc \big(\sqrt{2} \ip z'  \big) }
e^{-\overline z\overline z'} e_M(\overline z\overline z').
\]

Lastly we look at the real/complex entries of $K_{2M}$.  As in all
other cases, 
\[
DS_{2M}(x, z) = \frac{\phi(x) \phi(z)}{\sqrt{2\pi}}
\widehat{DS}_{2M}(x, z),
\]
and it is easily verified that
\[
\phi(x) \phi(z) = \psi(z) e^{-\frac{1}{2}(x - z)^2} \sqrt{ \erfc
  \big(\sqrt{2} \; \ip z  \big)} e^{-xz}.
\]
Thus,
\[
DS_{2M}(x,z) = \psi(z) \frac{e^{-\frac{1}{2}(x - z)^2}}{\sqrt{2\pi}}
(z-x) \sqrt{ \erfc \big(\sqrt{2} \; \ip z  \big)} e^{-xz} e_M(xz). 
\]
Since,
\[
S_{2M}(x, z) = i \frac{\phi(x) \phi(z)}{\sqrt{2\pi}}
\widehat{DS}_{2M}(x, 
\overline z), \qquad \qquad S_{2M}(z, x) =
\frac{\phi(z)}{\sqrt{2\pi}} \widehat S_{2M}(z, x).
\]
and
\[
IS_{2M}(x, z) = -i \frac{\phi(z)}{\sqrt{2\pi}} \widehat S_{2M}(z,x),
\]
It follows that
\[
S_{2M}(x,z) = 
\psi(\overline z) \frac{i e^{-\frac{1}{2}(x-\overline
    z)^2}}{\sqrt{2\pi}} (\overline z - x) \sqrt{ 
      \erfc \big(\sqrt{2} \ip z  \big)} e^{-x\overline z} e_M(x
    \overline z),
\]
\begin{align*}
& S_{2M}(z,x) = 
\psi(z) \left\{\frac{e^{-\frac{1}{2}(x-z)^2}}{\sqrt{2\pi}} \sqrt{
      \erfc \big(\sqrt{2} \ip z  \big)} e^{-x z} e_M(x z) \right. \\
& \hspace{6cm} \left. + \frac{e^{-\frac{1}{4}(z^2 + \overline z^2)}}{\sqrt{2\pi}} \sqrt{\erfc
    \big(\sqrt{2} \ip z \big)} r_M(z,x) \right\}
\end{align*}
and
\begin{align*}
& {IS}_{2M}(x,z) = \psi(\overline{z}) \left\{
\frac{e^{-\frac{1}{2}(x-\overline z)^2}}{\sqrt{2\pi}} 
\sqrt{\erfc \big(\sqrt{2} \ip z  \big)} e^{-x \overline z} e_M(x
    \overline z) \right. \\
& \hspace{6cm} \left. + \frac{e^{-\frac{1}{4}(z^2 + \overline z^2)}}{\sqrt{2\pi}}
  \sqrt{\erfc \big(\sqrt{2} \ip z \big)} r_M(\overline z,x) \right\}.
\end{align*}

Clearly, $\psi(x) = \psi(x') = 1$, and thus we find that 
\[
K_N(\gamma, \gamma') = \begin{bmatrix}
\psi(\gamma) & 0 \\
0 & \psi(\overline \gamma) 
\end{bmatrix} \wt K_N(\gamma, \gamma') \begin{bmatrix}
\psi(\gamma') & 0 \\
0 & \psi(\overline \gamma') 
\end{bmatrix}.
\]
It follows that, if we define $\wt{\mathbf K}$ to be the matrix 
\[
\wt{\mathbf K} = \begin{bmatrix}
\wt K_N(x_j, x_{j'}) & \wt K_N(x_j, z_{k'} ) \\
\wt K_N(z_k , x_{j'}) & \wt K_N(z_k , z_{k'})
\end{bmatrix}; \qquad 
\begin{array}{ll}
j,j'=1,2,\ldots,\ell; \\
k,k'=1,2,\ldots,m,
\end{array}
\]
and $\mathbf{D}$ to be the diagonal matrix 
\[
\mathbf{D} = \mathrm{diag}\big(\psi(x_1), \psi(\overline{x_1}), \ldots,
\psi(x_{\ell}), \psi(\overline{x_{\ell}}), \psi(z_1),
\psi(\overline{z_1}) \ldots, \psi(z_m), \psi(\overline{z_m})\big),
\]
then 
\[
R_{\ell, m}(\mathbf x, \mathbf z) = \Pf( \mathbf{D} \wt{\mathbf{K}} \mathbf{D}).
\]
But, since $\psi(\overline z) = \psi(z)^{-1}$, we have that $\det
\mathbf D = 1$, and $R_{\ell, m}(\mathbf x, \mathbf z) = \Pf
\wt{\mathbf{K}}$ as claimed. 
\end{proof}

\begin{proof}[Proof of Corollary~\ref{cor:1}]

We first make use of the fact that 
\[
\lim_{M \rightarrow \infty} e_M(z) = e^z
\]
pointwise on $\C$.  This simplifies all terms in the kernel except the 
$IS_{2M}$ term.  It remains to show that $r_M(z, x) \rightarrow
0$ as $M \rightarrow \infty,$ and that
\begin{equation}
\label{eq:32}
\frac{1}{2} \sgn(x - x') + \lim_{M \rightarrow \infty}
IS_{2M}(x, x') = \frac{1}{2} \sgn(x - x')
\erfc\left(\frac{| x - x' |}{\sqrt{2}} \right).
\end{equation}
The first of these facts is easily seen by noting that $\gamma(M -
1/2, x^2/2) < \Gamma(M-1/2)$, and hence 
\[
|r_M(z,x)| < \frac{|z|^{2M-1} \; 2^{M-3/2} \;
  \Gamma(M-1/2)}{\Gamma(2M-1)} 
=  \frac{|z|^{2M-1} \; 2^{-M+3/2} \; \sqrt{\pi} \;
  \Gamma(2M-2)}{\Gamma(2M-1) \Gamma(M-1)},
\]
where we arrive at the latter equality we have used Legendre's
duplication formula on the $\Gamma(M - 1/2)$ term.  Thus,
\[
|r_M(z,x)| < \frac{|z|^{2M-1}}{\Gamma(M-1)} \cdot \frac{2^{-M+3/2} \;
  \sqrt{\pi}}{(2M-2)},
\]
and it is easy to see that both terms in this product go to 0 as $M
\rightarrow \infty$, independent of the value of $z$. 

To establish (\ref{eq:32}) we start with 
\begin{equation}
\label{eq:35}
I_M(x, x') := \int_0^{x'^2/2} \frac{e^{-t}}{\sqrt{t}}
c_M(x \sqrt{2 t}) \, dt
\end{equation}
Since the terms in $c_M$ are all positive, from the Monotone
Convergence Theorem, 
\begin{align}
I(x, x') :=& \lim_{M \rightarrow \infty} I_M(x, x') =
\int_0^{x'^2/2} \frac{e^{-t}}{\sqrt{t}} \cosh(x \sqrt{2 t}) \,
dt \nonumber \\
=& \frac{\sqrt{\pi}}{2} e^{x'^2/2} \left[ 
\erf\left( \frac{|x'| + x}{\sqrt{2}} \right) - 
\erf\left( \frac{|x'| - x}{\sqrt{2}} \right)
\right].
\label{eq:34}
\end{align}
The latter equality follows from the fact that
\[
\int_0^x \frac{e^{-t}}{\sqrt{t}} \cosh(a \sqrt{t}) \, dt =
\frac{\sqrt \pi}{2} e^{a^2/4} \left[ 
\erf\left( \frac{a}{2} + \sqrt{x} \right) - 
\erf\left( \frac{a}{2} - \sqrt{x} \right)
\right],
\]
which can be verified via differentiation.

Now, 
\begin{align*}
\lim_{M \rightarrow \infty} IS_{2M}(x, x') &= \frac{1}{2
    \sqrt \pi} \left\{ e^{-x^2/2} \sgn(x') I(x, x') -
    e^{-x'^2/2} \sgn(x) I(x',x)
  \right\} \\
&= \frac{1}{4} \left[\sgn(x') \erf\left( \frac{|x'| +
        x}{\sqrt{2}} \right) - \sgn(x') \erf\left(
      \frac{|x'| - x}{\sqrt{2}} \right) \right. \\
& \hspace{3cm}\left. - \sgn(x)
    \erf\left( \frac{x' + |x|}{\sqrt{2}} \right) + 
\sgn(x) \erf\left( \frac{x' - |x|}{\sqrt{2}} \right)
 \right]  \\
&= -\frac{1}{2} \erf\left( \frac{x - x'}{\sqrt{2}} \right).
\end{align*}
It follows that (\ref{eq:32}) can be written as
\[
\frac{1}{2}\sgn(x - x') -\frac{1}{2} \erf\left( \frac{x -
    x'}{\sqrt{2}} \right) = \frac{1}{2} \sgn(x - x') -
\frac{1}{2} \sgn(x - x') \erf\left( \frac{|x -
    x'|}{\sqrt{2}} \right),
\]
where we have exploited the fact that $\erf$ is an odd function.  We
arrive at the form for $IS_{2M}$ stated in the corollary using the
fact that $\erfc = 1 - \erf$.  
\end{proof}

\subsection{The Proof of Theorem~\ref{thm:5} and Theorem~\ref{thm:6}} 

In order to prove Theorems~\ref{thm:5} and \ref{thm:6}, it is
necessary to investigate the asymptotics of the partial sums the
exponential function.   
\begin{lemma}
\label{lemma:7}
Let $u \neq \pm 1$ be a complex number, and let $( v_m
)_{m=1}^{\infty}$ be a sequence of complex numbers satisfying 
\[
v_M = u^2 + O\left(1/\sqrt{M}\right) \qquad \mbox{as} \qquad M
\rightarrow \infty.
\]
Then, as $M \rightarrow \infty$,
\[
e^{-2M v_M} e_M(2M v_M) \sim 1 - \frac{e^{-2(1-u^2)}}{2 \pi u^2(1-u^2)}
\cdot \frac{e^{2M(1-u^2)} u^{4M}}{\sqrt{M}}.
\]
In particular, when $-1 < u < 1$ is real,  
\[
\lim_{M \rightarrow \infty} e^{-2M v_M} e_M(2M v_M) = 1.
\]
\end{lemma}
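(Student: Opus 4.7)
The plan starts from the truncation-remainder identity
\[
e^{-z} e_M(z) = 1 - e^{-z}\sum_{m=2M-1}^{\infty}\frac{z^m}{m!}, \qquad z = 2M v_M,
\]
which follows at once from the definition $e_M(z)=\sum_{m=0}^{2M-2}z^m/m!$. The task thus reduces to understanding the remainder
\[
R_M := e^{-2Mv_M}\sum_{m=2M-1}^{\infty}\frac{(2Mv_M)^m}{m!} = e^{-z}\,\frac{z^{2M-1}}{(2M-1)!}\sum_{k=0}^{\infty}\prod_{j=1}^{k}\frac{z}{2M-1+j},
\]
and I would treat the $k$-sum and the prefactor separately.

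The $k$-sum is handled by dominated convergence. For each fixed $k$ the inner product converges to $u^{2k}$, while the uniform bound $\prod_{j=1}^{k}|z|/(2M-1+j)\le r^k$ for some $r\in(|u|^2,1)$ (valid for all large $M$ since $|u|<1$) supplies the dominator, yielding
\[
\sum_{k=0}^{\infty}\prod_{j=1}^{k}\frac{z}{2M-1+j}\longrightarrow\frac{1}{1-u^2}.
\]
For the prefactor I would apply Stirling together with the limit $(2M/(2M-1))^{2M-1}\to e$ to obtain
\[
e^{-z}\,\frac{z^{2M-1}}{(2M-1)!}\sim\frac{v_M^{2M-1}\,e^{2M(1-v_M)}}{\sqrt{4\pi M}}.
\]
A Taylor expansion of $h(t):=\log t+1-t$ about $t=u^2$, using the hypothesis $v_M=u^2+O(1/\sqrt{M})$, then converts the exponent $(2M-1)\log v_M - 2Mv_M$ into its leading value $(2M-1)\log u^2 - 2Mu^2$ plus a bounded perturbation. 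Combining this with the geometric-series limit and the replacement $v_M^{-1}\to u^{-2}$ produces the advertised formula, the constant $e^{-2(1-u^2)}/(2\pi u^2(1-u^2))$ emerging from collecting these bounded pieces.

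The principal difficulty is the bookkeeping in this last step: $e^{-2Mv_M}$ and $v_M^{2M-1}$ are each extremely sensitive to the perturbation $v_M-u^2$, and one must verify that when the two are combined the exponentially large pieces cancel, leaving only a bounded multiplicative correction that the statement absorbs into the prefactor. Once the main asymptotic has been established, the ``in particular'' clause for real $-1<u<1$ is immediate: writing $e^{2M(1-u^2)}u^{4M}=(u^2 e^{1-u^2})^{2M}$ and noting that $s\mapsto se^{1-s}$ is strictly increasing from $0$ to $1$ on $[0,1]$ shows that $(u^2 e^{1-u^2})^{2M}$ decays exponentially to $0$; hence $R_M\to 0$ and therefore $e^{-2Mv_M}e_M(2Mv_M)\to 1$.
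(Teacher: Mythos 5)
Your route is genuinely different from the paper's: the paper disposes of the lemma by writing $2Mv_M=(2M-2)w_M$ with $w_M=v_M+v_M/(M-1)$ and then citing the known asymptotics for partial sums of the exponential (Carpenter--Varga--Waldvogel), whereas you reprove the needed asymptotics from scratch. The self-contained part of your argument is correct: the remainder identity, the dominated-convergence evaluation of the tail-ratio sum as $1/(1-u^2)$, and the Stirling estimate $e^{-z}z^{2M-1}/(2M-1)!\sim v_M^{2M-1}e^{2M(1-v_M)}/\sqrt{4\pi M}$ all hold. Note, however, that your geometric domination needs $|v_M|\le r<1$ eventually, i.e.\ $|u|<1$, which is not part of the stated hypothesis ($u\neq\pm1$ complex); it is, to be fair, the only regime in which the lemma is applied.

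The genuine gap is precisely the step you flag as ``bookkeeping,'' and it does not go through as described. Writing $\epsilon_M=v_M-u^2=O(1/\sqrt M)$, one has
\[
\bigl[(2M-1)\log v_M-2Mv_M\bigr]-\bigl[(2M-1)\log u^2-2Mu^2\bigr]
=2M\epsilon_M\Bigl(\frac{1}{u^2}-1\Bigr)+O\bigl(M\epsilon_M^2\bigr)+O(\epsilon_M),
\]
and the linear term is of size $\sqrt M$ (it would cancel only at the excluded point $u^2=1$). So replacing $v_M^{2M-1}e^{-2Mv_M}$ by the corresponding expression in $u$ costs a factor $e^{O(\sqrt M)}$, which is not a ``bounded multiplicative correction'' and cannot be absorbed into the prefactor. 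Moreover, even in the clean case $v_M=u^2$ exactly, carrying your computation to the end gives the correction term $\dfrac{(u^2e^{1-u^2})^{2M}}{2\sqrt{\pi M}\,u^2(1-u^2)}$, which differs from the displayed one by the constant $e^{-2(1-u^2)}/\sqrt{\pi}$; so the advertised constant does not in fact ``emerge'' from collecting bounded pieces, and your route cannot produce the formula exactly as stated. What your argument honestly establishes is the intermediate asymptotic $e^{-2Mv_M}e_M(2Mv_M)=1-\frac{v_M^{2M-1}e^{2M(1-v_M)}}{2\sqrt{\pi M}\,(1-u^2)}\bigl(1+o(1)\bigr)$, with the correction expressed in $v_M$ rather than $u$. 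That version does yield the second assertion (for real $|u|<1$ the exponent $2M\log\bigl|v_Me^{1-v_M}\bigr|$ tends to $-\infty$, since $\log(u^2e^{1-u^2})<0$ dominates the $O(\sqrt M)$ error), and it is all that the later applications of the lemma actually use; but to obtain a correction term written purely in terms of $u$, as in the statement, you would need $v_M-u^2=o(1/M)$, or else you must leave $v_M$ in place.
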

\begin{proof}
Set $v = v_M$.  We start by writing $2Mv = v(2M-2) + 2v$.  Thus,
\[
e^{-2M v} e_M(2M v) = \exp\left(-(2M-2)\left(v +
    \frac{v}{M-1}\right) \right) e_M\left((2M-2)\left(v +
    \frac{v}{M-1}\right) \right).
\]
We write 
\[
w = w_M = v + \frac{v}{M-1}.
\]
Clearly $w_M = u^2 + O(1/\sqrt M)$.  Under this hypothesis, and since
$u^2 \neq 1$ , equations (2.9), (2.15) and (1.7) of \cite{MR1113918}
imply that 
\[
e^{-(2M-2) w} e_M((2M-2) w) \sim 1 - \frac{e^{-2(1-u^2)}}{2 \pi u^2(1-u^2)}
\cdot \frac{e^{2M(1-u^2)} u^{4M}}{\sqrt{M}}. \qedhere
\]
The second statement of the lemma follows from the fact that if $u$ is
real and $|u| < 1$, then 
\[
e^{2M(1-u^2)} u^{4M} = e^{2M(1 - u^2 + 2 \log|u|)},
\]
and $1 - u^2 + 2 \log|u|$ is negative when $u$ is in $(-1,1)$.
\end{proof}

\begin{lemma}
\label{lemma:8}
Suppose $u,s \in \R$ with $|u| < 1$ and suppose $s' \in \C$.  Let
\begin{align*}
& R_M = \exp\left(-Mu^2 -u\sqrt{2M} \rp{s'} \right)
\frac{2^{M-3/2}}{(2M-2)!} (u \sqrt{2M} + s')^{2M-1} \\
& \hspace{6cm} \times \gamma\left( M - \frac{1}{2}, M u^2 + u \sqrt{2M} s +
  \frac{s^2}{2} \right). 
\end{align*}
Then, ${\displaystyle \lim_{M \rightarrow \infty} R_M = 0}.$ 
\end{lemma}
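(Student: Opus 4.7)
The plan is to estimate $|R_M|$ crudely by replacing the incomplete gamma function with its saturated value $\Gamma(M-\tfrac12)$, and then to show that the resulting expression is, up to factors growing only exponentially in $\sqrt{M}$, bounded by $e^{M(1-u^2+2\log|u|)}$, which decays super-polynomially because $1-u^2+2\log|u|<0$ for $u\in(-1,1)\setminus\{0\}$. The case $u=0$ reduces to the bound for $r_M(s',s)$ already established in the proof of Corollary~\ref{cor:1}, so I would dispose of it first and thereafter assume $u\neq 0$.

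The first step is the crude monotonicity bound
\[
\gamma\!\left(M-\tfrac12,\,Mu^2+u\sqrt{2M}\,s+\tfrac{s^2}{2}\right)\;\le\;\Gamma(M-\tfrac12).
\]
Next I would apply Legendre's duplication formula in the form $\Gamma(M-1)\Gamma(M-\tfrac12)=2^{3-2M}\sqrt{\pi}\,\Gamma(2M-2)$ (exactly as in the proof of Corollary~\ref{cor:1}) to rewrite
\[
\frac{2^{M-3/2}\,\Gamma(M-\tfrac12)}{(2M-2)!}\;=\;\frac{2^{3/2-M}\sqrt{\pi}}{(2M-2)\,\Gamma(M-1)}.
\]
Then I would expand the modulus of the polynomial factor,
\[
\bigl|u\sqrt{2M}+s'\bigr|^{2M-1}=\bigl(2Mu^2\bigr)^{M-1/2}\cdot\Bigl(1+\tfrac{2u\sqrt{2M}\,\operatorname{Re} s'+|s'|^2}{2Mu^2}\Bigr)^{M-1/2}.
\]
The second factor equals $\exp\bigl(\sqrt{2M}\,\operatorname{Re}(s')/u+O(1)\bigr)$ as $M\to\infty$, as one sees by taking the logarithm and expanding.

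At this point I would combine everything with Stirling, $\Gamma(M-1)\sim\sqrt{2\pi/M}\,M^{M-1}e^{-M+1}$, so that $M^{M-1/2}/\bigl((2M-2)\Gamma(M-1)\bigr)\sim e^{M}/(2\sqrt{2\pi})$ up to a universal constant. After assembling the factors and cancelling powers of $2$, the bound becomes
\[
|R_M|\;\le\;C\cdot e^{-Mu^2}\,e^{-u\sqrt{2M}\,\operatorname{Re} s'}\,e^{\sqrt{2M}\,\operatorname{Re}(s')/u}\,e^{M}\,u^{2M-1}
\;=\;C'\,e^{M(1-u^2+2\log|u|)}\,e^{\sqrt{2M}\,\operatorname{Re}(s')(1/u-u)},
\]
with $C,C'$ depending on $u,s,s'$ but not on $M$. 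Since $1-u^2+2\log|u|<0$ on $(-1,1)\setminus\{0\}$ (as noted in the proof of Lemma~\ref{lemma:7}, the function vanishes to second order at $u=\pm 1$ with strictly negative second derivative), the exponential decay of order $e^{-cM}$ swamps the subexponential growth $e^{O(\sqrt{M})}$, and $R_M\to 0$.

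The main obstacle is purely bookkeeping: making sure every factor of $2^M$, $M^M$, and $e^M$ is accounted for, and that the linear-in-$\sqrt{M}$ exponents arising from $e^{-u\sqrt{2M}\,\operatorname{Re} s'}$ and from the expansion of $|u\sqrt{2M}+s'|^{2M-1}$ do not conspire to ruin the decay. They do not, because both are of order $e^{O(\sqrt{M})}$ and the decisive factor $e^{M(1-u^2+2\log|u|)}$ is of order $e^{-cM}$ for some $c>0$ depending on $u$. No new analytic ingredient beyond Stirling and Legendre duplication is required, and one could alternatively phrase the estimate so as to invoke Lemma~\ref{lemma:7}, but the direct Stirling estimate is shorter.
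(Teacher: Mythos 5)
Your proposal is correct and follows essentially the same route as the paper: bound the lower incomplete gamma by $\Gamma(M-\tfrac12)$, apply Legendre duplication, estimate the polynomial factor, use Stirling, and observe that the decisive factor $e^{M(1-u^2+2\log|u|)}$ decays since $1-u^2+2\log|u|<0$ on $(-1,1)$. If anything you are slightly more careful than the paper, since you explicitly track the $e^{O(\sqrt{M})}$ contribution from $|u\sqrt{2M}+s'|^{2M-1}$ and dispose of the $u=0$ case separately.
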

\begin{proof}
Clearly,
\[
\gamma\left( M - \frac{1}{2}, M u^2 + u \sqrt{2M} s +  \frac{s^2}{2}
\right)  \leq \Gamma\left(M-\frac{1}{2} \right) = \sqrt{\pi} 2^{-2M+3}
\frac{\Gamma(2M-2)}{\Gamma(M-1)},
\]
where the second equation follows from Legendre's duplication formula
for $\Gamma$.  Now,
\[
|u \sqrt{2M} + s'|^{2M-1} \sim |u|^{2M-1} 2^{M - 1/2} M^{M-1/2}.  
\]
Thus,
\[
|R_M| \sim \sqrt{\pi} |u|^{2M-1} \frac{M^{M+1/2}}{M!}
\exp\left(-Mu^2\right) \exp\big(-u\sqrt{2M} s\big).
\]
Using Stirling's approximation for $M!$, we find
\begin{align*}
& |R_M| \sim \frac{\sqrt 2}{2} |u|^{2M-1} \exp\big(M(1-u^2)\big)
\exp\big(-u\sqrt{2M}s\big)  \\
& \hspace{3cm} = \frac{\sqrt 2}{2 |u|} \exp\big(M(2 \log|u| + 1 -
u^2)\big) \exp\big(-u\sqrt{2M}s\big).
\end{align*}
We observe that $2 \log|u| + 1 - u^2 < 0$ since $|u|<1$, and hence
$R(M) \rightarrow 0$ as $M \rightarrow \infty$. 
\end{proof}

\begin{proof}[Proof of Theorem~\ref{thm:6}]
  Let $u$ be a point in the open upper half plane with modulus less
  than 1, and suppose $s$ and $s'$ are complex numbers.  For all but
  finitely many values of $M$, $z = u\sqrt{2M} + s$ and $z' = u
  \sqrt{2M} + s'$ are in $\C_{\ast}$.  Thus, in this case, we need
  only consider the asymptotics of the complex/complex kernel under
  these substitutions.

We will make use of the fact that if $x$ is a real number,
\[
\erfc(x) = \frac{1}{\sqrt{\pi}} \Gamma\left( \frac{1}{2}, x^2 \right)
\sim \frac{e^{-x^2}}{\sqrt{\pi} | x|}.
\]
Consequently,
\begin{equation}
\label{eq:33}
\sqrt{\erfc\big(\sqrt{2}\ip{u \sqrt{2M} + s}\big)} \sim \frac{\exp\big(-2M
  \ip{u}^2-2\sqrt{2M} \ip u \ip s - \ip s^2\big)}{\sqrt{2 \ip u}
  \sqrt[4]{M \pi}}.
\end{equation}

Now, by Theorem~\ref{thm:3},
\begin{align*}
& \wt{DS}_{2M}(u\sqrt{2M} + s, u\sqrt{2M} + s') = \frac{(s' - s)}{\sqrt{2
    \pi}} e^{-\frac{1}{2}(s - s')^2} \\
& \qquad \times \sqrt{\erfc\big(\sqrt{2}\ip{u \sqrt{2M} + s}\big)
  \erfc\big(\sqrt{2}\ip{u \sqrt{2M} + s'}\big) }\\
& \qquad \times \exp\!\left(-2 M u^2 - u \sqrt{2M}(s + s') - ss' \right)
e_M\!\left(2 M u^2 + u \sqrt{2M}(s + s') + ss' \right). \nonumber
\end{align*}
Therefore, by Lemma~\ref{lemma:7} and (\ref{eq:33}),
\begin{align*}
& \wt{DS}_{2M}(u\sqrt{2M} + s, u\sqrt{2M} + s') \sim \frac{(s' - s)}{\sqrt{2
    \pi}} e^{-\frac{1}{2}(s - s')^2} \\
& \hspace{4cm} \times 
\frac{e^{-4M \ip u^2} e^{-2 \sqrt{2M} \ip u (\ip s + \ip{s'})}
  e^{- \ip s^2 - \ip{s'}^2} 
}{2 \ip u \sqrt{M \pi}}
\\
& \hspace{4cm} \times 
\left(1 - \frac{e^{-2(1-u^2)}}{2 \pi u^2(1-u^2)}
\cdot \frac{e^{2M(1-u^2)} u^{4M}}{\sqrt{M}}\right). 
\end{align*}
It is easily seen that 
\[
\lim_{M \rightarrow \infty}\frac{e^{-4M \ip u^2} e^{-2 \sqrt{2M} \ip u (\ip s + \ip{s'})}
}{2 \ip u \sqrt{M \pi}} = 0,
\]
and,
\[
\left| e^{-4M \ip u^2} e^{2M(1-u^2)} u^{4M} \right| = e^{2M(1 - |u|^2
  + 2 \log|u|)}.
\]
Since $|u| < 1$, we have $1 - |u|^2 + 2 \log|u| < 0$, and therefore
\[
\lim_{M \rightarrow \infty} e^{-2 \sqrt{2M} \ip u (\ip s + \ip{s'})}
\left| e^{-4M \ip u^2} e^{2M(1-u^2)} u^{4M} \right|  = 0.
\]
We conclude that 
\[
\lim_{M \rightarrow \infty} \wt{DS}_{2M}(u\sqrt{2M} + s, u\sqrt{2M} +
s') = 0.
\]
And, since $\wt{IS}_{2M}(z,z') = -\wt{DS}_{2M}(\overline z, \overline
z')$, 
\[
\lim_{M \rightarrow \infty} \wt{IS}_{2M}(u\sqrt{2M} + s, u\sqrt{2M} +
s') = 0.
\]

Turning to $\wt{S}_{2M}(u\sqrt{2M} + s, u\sqrt{2M} + s')$, we set 
\[
\eta_M(s) = \exp\left(-2i \sqrt{2M} \ip u \rp s \right).
\]
From Theorem~\ref{thm:3}, we have 
\begin{align*}
&\eta_M(s) \eta_M(-s') \wt{S}_{2M}(u\sqrt{2M} + s, u\sqrt{2M} + s') =
\frac{-i}{\sqrt{2 \pi}} 
\left[ 2i \sqrt{2M} \ip{u}  + (s - \overline s') \right] \\
& \qquad \times \exp\big( 2 i \sqrt{2M} \ip u \rp s 
\big) \exp \big(- 2 i \sqrt{2M} \ip u \rp{s'} \big) \\
& \qquad \times \exp\left(4M (\ip{u})^2 - 2i\sqrt{2M} \ip{u} (s -
\overline{s}') - \frac{1}{2}(s - \overline s')^2  \right) \nonumber \\
& \qquad \times \sqrt{ \erfc\big| 2 \sqrt{M} \ip u + \sqrt{2} \ip s
  \big| \erfc\big| 2 \sqrt{M} \ip u + \sqrt{2} \ip{s'} \big|
} \nonumber \\
& \qquad \times \exp\left(-2M|u|^2 - (s\overline u + \overline s'
u)\sqrt{2M} - s \overline s'\right) e_M\left(2M|u|^2 + (s\overline u + \overline s'
u)\sqrt{2M} + s \overline s'\right). \nonumber
\end{align*}
From Lemma~\ref{lemma:7} and (\ref{eq:33}) we find that 
\begin{align*}
  \eta_M(s) \eta_M(-s') \wt{S}_{2M}(u\sqrt{2M} + s, u\sqrt{2M} + s')
  &\sim \frac{1}{\pi} \exp\left(-\frac{1}{2}(s-\overline s')^2 -\ip s^2 - \ip
      {s'}^2\right) \\
  & = \frac{1}{\pi} \exp\left( -\frac{|s|^2}{2} - \frac{|s'|^2}{2} +
    s \overline{s}' \right).
\end{align*}

Next, we set $\mathbf{D}$ to be the $2m \times 2m$ diagonal matrix given by 
\[
\mathbf{D}_M = \mathrm{diag}\big( \eta_M(s_1), \eta_M(-s_1), \ldots,
\eta_M(s_m), \eta(-s_M) \big),
\]
noting that $\det \mathbf{D}_M = 1$.  It follows that 
\begin{align*}
& \lim_{M \rightarrow \infty} R_{0,m}(-, \mathbf{z}) = \lim_{M   
  \rightarrow \infty} \Pf \left( \mathbf{D}_M \big[\wt K_{2M}(s_k ,
  s_{k'}) \big]_{k,k'=1}^m \mathbf{D}_M^{-1} \right) \\
& \qquad \qquad = \Pf \begin{bmatrix}
0 & \frac{1}{\pi} \exp\left( -\frac{|s_k|^2}{2} - \frac{|s_{k'}|^2}{2} +
    s_k \overline{s}_{k'} \right) \\
-\frac{1}{\pi} \exp\left( -\frac{|s_k|^2}{2} - \frac{|s_{k'}|^2}{2} +
    s_k \overline{s}_{k'} \right) & 0
\end{bmatrix}_{k,k'=1}^m \\
& \qquad \qquad = \det\left[
\frac{1}{\pi}\exp\left( -\frac{|s_k|^2}{2} - \frac{|s_{k'}|^2}{2} +
  s_{k} \overline{s}_{k'} 
\right) \right]_{k,k'=1}^m,
\end{align*}
where the last equation follows from Section 4.6 of
\cite{sinclair-2007}.   
\end{proof}

\begin{proof}[Proof of Theorem~\ref{thm:5}]
First we consider the case where $s$ and $s'$ are both in the open upper
half plane.  From Theorem~\ref{thm:3}, 
\begin{align*}
&\wt{S}_{2M}(2M)(u\sqrt{2M} + s, u\sqrt{2M} + s') = \frac{ie^{-\frac{1}{2}(s
  - \overline{s}')^2}}{\sqrt{2\pi}}(\overline s' - s)
\sqrt{\erfc\big(\sqrt{2} \ip s \big) \erfc\big(\sqrt{2} \ip{s'} \big)} \\
& \hspace{2cm} \times \exp\big(-2Mu^2 - u \sqrt{2M}(s + \overline s') -
s \overline s'\big) e_M\big(2Mu^2 + u \sqrt{2M}(s + \overline s') +
s\overline s'\big), \\
&\wt{DS}_{2M}(2M)(u\sqrt{2M} + s, u\sqrt{2M} + s') = \frac{e^{-\frac{1}{2}(s
  - s')^2}}{\sqrt{2\pi}}(s' - s)
\sqrt{\erfc\big(\sqrt{2} \ip s \big) \erfc\big(\sqrt{2} \ip{s'} \big)} \\
& \hspace{2cm} \times \exp\big(-2Mu^2 - u \sqrt{2M}(s + s') -
s s'\big) e_M\big(2Mu^2 + u \sqrt{2M}(s + s') + s s'\big),
\end{align*}
and
\begin{align*}
&\wt{IS}_{2M}(2M)(u\sqrt{2M} \!+\! s, u\sqrt{2M} \!+\! s') =
\frac{-e^{-\frac{1}{2}(\overline s
  - \overline{s}')^2}}{\sqrt{2\pi}}(\overline s' - \overline s)
\sqrt{\erfc\big(\sqrt{2} \ip s \big) \erfc\big(\sqrt{2} \ip{s'} \big)} \\
& \hspace{2cm} \times \exp\big(-2Mu^2 - u \sqrt{2M}(\overline s +
\overline s') - \overline s \overline s'\big) e_M\big(2Mu^2 + u
\sqrt{2M}(\overline s + \overline s') + \overline s\overline s'\big).
\end{align*}
By Lemma~\ref{lemma:7}, these converge to the appropriate entries of
the complex/complex kernel $\wt K(s,s')$ as $M \rightarrow \infty$.   

Next we turn to the case where $u$ and $r$ are real, and $s$ is in
the open upper half plane.  In this case, Theorem~\ref{thm:3} yields, 
\begin{align*}
&\wt{S}_{2M}(u\sqrt{2M} + r, u \sqrt{2M} + s) = \frac{i
  e^{-\frac{1}{2}(r - \overline s)^2}}{\sqrt{2\pi}} (\overline s -
r) \sqrt{\erfc\big(\sqrt{2} \ip{s}\big)} \\
& \qquad \quad \times \exp\big(-2Mu^2 - u \sqrt{2M}(r + \overline s) -
r \overline s\big) e_M\big(2Mu^2 + u \sqrt{2M}(r + \overline s) +
r\overline s\big), \\
&\wt{S}_{2M}(u\sqrt{2M} + s, u \sqrt{2M} + r) = \frac{
  e^{-\frac{1}{2}(r - s)^2}}{\sqrt{2\pi}}  \sqrt{\erfc\big(\sqrt{2} \ip{s}\big)} \\
& \qquad \quad \times \exp\big(-2Mu^2 - u \sqrt{2M}(r + s) - r
s\big) e_M\big(2Mu^2 + u \sqrt{2M}(r + s) + r s\big), \\
& \quad + \frac{1}{\sqrt{2\pi}} \sqrt{\erfc\big(\sqrt{2}
  \ip{s} \big)} e^{-\frac{1}{4}(s^2 + \overline s^2)} \exp\big(
-Mu^2 - u\sqrt{2M} \rp{s} \big) \\
& \qquad \quad \times \frac{2^{M-3/2}}{(2M-1)!} \sgn(u \sqrt{2M} + r)
(u\sqrt{2M} + s)^{2M-1} \gamma\left(M - \frac{1}{2}, Mu^2 +
  u\sqrt{2M}r + \frac{r^2}{2} \right),
\end{align*}
and thus, by Lemmas~\ref{lemma:7} and \ref{lemma:8}, 
\[
\lim_{M \rightarrow \infty} \wt{S}_{2M}(u\sqrt{2M} + r, u \sqrt{2M} + s) = \frac{i
  e^{-\frac{1}{2}(r - \overline s)^2}}{\sqrt{2\pi}} (\overline s -
r) \sqrt{\erfc\big(\sqrt{2} \ip{s}\big)},
\]
and
\[
\lim_{M \rightarrow \infty} \wt{S}_{2M}(u\sqrt{2M} + s, u \sqrt{2M} + r) = \frac{
  e^{-\frac{1}{2}(r - s)^2}}{\sqrt{2\pi}}  \sqrt{\erfc\big(\sqrt{2}
  \ip{s}\big)}.
\]
The limiting values for $\wt{DS}_{2M}(u\sqrt{2M} + r, u \sqrt{2M} +
s)$ and $\wt{IS}_{2M}(u\sqrt{2M} + r, u \sqrt{2M} + s)$ follow from
this as well, since 
\[
\wt{DS}_{2M}(z,z') = -i\wt{S}_{2M}(z,\overline z') \qquad \mbox{and}
\qquad 
\wt{IS}_{2M}(z,z') = i \wt{S}_{2M}(\overline z', z). 
\]

Finally, we turn to the case where $u$, $r$ and $r'$ are all real.
Here, Theorem~\ref{thm:3} implies that
\begin{align*}
& \wt{S}_{2M}(u\sqrt{2M} + r, u \sqrt{2M} + r') =
\frac{1}{\sqrt{2\pi}} e^{-\frac{1}{2}(r-r')^2} \\ 
& \quad \quad \times \exp\big(-2Mu^2 - u\sqrt{2M}(r + r') - r r' \big) 
e_M\big(2Mu^2 + u\sqrt{2M}(r + r') + r r' \big) \\
& \quad + \frac{1}{\sqrt{2\pi}}  e^{-\frac{1}{2}r^2} \exp\big(
-Mu^2 - u\sqrt{2M} r \big) \\
& \quad \quad \times \frac{2^{M-3/2}}{(2M-1)!} \sgn(u \sqrt{2M} + r')
(u\sqrt{2M} + r)^{2M-1} \gamma\left(M - \frac{1}{2}, Mu^2 +
  u\sqrt{2M}r' + \frac{r'^2}{2} \right).
\end{align*}
and
\begin{align*}
& \wt{DS}_{2M}(u\sqrt{2M} + r, u \sqrt{2M} + r') = \frac{(r' -
  r)}{\sqrt{2\pi}} e^{-\frac{1}{2}(r-r')^2} \\
& \qquad \quad \times \exp\big(-2Mu^2 - u\sqrt{2M}(r + r') - r r' \big) 
e_M\big(2Mu^2 + u\sqrt{2M}(r + r') + r r' \big),
\end{align*}
From Lemmas~\ref{lemma:7} and \ref{lemma:8}, we see that  
\[
\lim_{M \rightarrow \infty} \wt{S}_{2M}(u\sqrt{2M} + r, u \sqrt{2M} + r') =
\frac{1}{\sqrt{2\pi}} e^{-\frac{1}{2}(r-r')^2},
\]
and 
\[
\lim_{M \rightarrow \infty} \wt{DS}_{2M}(u\sqrt{2M} + r, u \sqrt{2M} +
r') = \frac{(r' - r)}{\sqrt{2\pi}} e^{-\frac{1}{2}(r-r')^2}.
\]

All that remains to show is 
\[
\lim_{M \rightarrow \infty} \wt{IS}_{2M}(u\sqrt{2M} + r, u \sqrt{2M} +
r') = \frac{1}{2} \sgn(r' - r) \erfc\left( \frac{|r -
    r'|}{\sqrt{2}}\right). 
\]
First we write
\begin{align*}
\wt{IS}_{2M}(x,x') &= \frac{e^{-x^2/2}}{2\sqrt{\pi}} \sgn(x')\left\{
  I(x,x') - \int_0^{x'^2/2} \frac{e^{-t}}{\sqrt t} C_M(x\sqrt{2t}) \,
  dt \right\} \\
& \qquad - \frac{e^{-x'^2/2}}{2\sqrt{\pi}} \sgn(x) \left\{
  I(x',x) - \int_0^{x^2/2} \frac{e^{-t}}{\sqrt t} C_M(x'\sqrt{2t}) \,
  dt \right\},
\end{align*}
where $I(x,x')$ is given as in (\ref{eq:35}) and $C_M = \cosh - c_M$.
That is,
\begin{align}
& \wt{IS}_{2M}(x,x') = \frac{e^{-x^2/2}}{2\sqrt{\pi}} \sgn(x')  I(x,x')
- \frac{e^{-x'^2/2}}{2\sqrt{\pi}} \sgn(x)  I(x',x) \nonumber \\
& \qquad + \frac{e^{-x'^2/2}}{2\sqrt{\pi}} \sgn(x) \int_0^{x^2/2} \frac{e^{-t}}{\sqrt
  t} C_M(x'\sqrt{2t}) \,  dt - \frac{e^{-x^2/2}}{2\sqrt{\pi}} \sgn(x') 
\int_0^{x'^2/2} \frac{e^{-t}}{\sqrt t} C_M(x\sqrt{2t}) \,
  dt. \label{eq:36}
\end{align}
Making the substitutions $x = u\sqrt{2M} + r$ and $x' = u\sqrt{2M} +
r'$, and assuming that $M$ is sufficiently large, (\ref{eq:34}) yields 
\[
\lim_{M \rightarrow \infty}\frac{e^{-x^2/2}}{2\sqrt{\pi}} \sgn(x')
I(x,x') =  \left\{
\begin{array}{ll}
\frac{1}{4} \erfc\left(\frac{r - r'}{\sqrt{2}} \right) & \quad
\mbox{if} \; u > 0; \\ & \\
-\frac{1}{4} \erfc\left(\frac{r' - r}{\sqrt{2}} \right) & \quad
\mbox{if} \; u < 0.
\end{array}
\right.
\]
Consequently,
\begin{align*}
& \lim_{M \rightarrow \infty}\left\{\frac{e^{-x^2/2}}{2\sqrt{\pi}} \sgn(x')  I(x,x')
- \frac{e^{-x'^2/2}}{2\sqrt{\pi}} \sgn(x)  I(x',x) \right\} \\
& \hspace{6cm} =
\frac{1}{4} \erfc\left(\frac{r-r'}{\sqrt 2} \right) - \frac{1}{4}
\erfc\left(\frac{r'-r}{\sqrt 2} \right)  \\
& \hspace{6cm} = -\frac{1}{2} \erf\left(\frac{r-r'}{\sqrt 2} \right).
\end{align*}
Thus, if we can show that the second line of (\ref{eq:36}) goes to 0
as $M \rightarrow \infty$, we will have 
\begin{align*}
\lim_{M \rightarrow \infty} \left\{ \frac{1}{2} \sgn(x - x') +
  IS_{2M}(x,x') \right\} &= \frac{1}{2} \sgn(r - r') - \frac{1}{2}
\erf\left(\frac{r-r'}{\sqrt 2} \right) \\ &= \frac{1}{2} \sgn(r - r')
\erfc\left(\frac{|r-r'|}{\sqrt{2}} \right) 
\end{align*}
as desired.  

We thus consider
\[
\frac{e^{-x'^2/2}}{2\sqrt{\pi}} \sgn(x) \int_0^{x^2/2} \frac{e^{-t}}{\sqrt
  t} C_M(x'\sqrt{2t}) \,  dt.
\]
Clearly, for any $v > 0$,
\[
C_M(v) \leq \sum_{m=2M-1}^{\infty} \frac{v^m}{m!} = \frac{\gamma(2M,
  v)}{\Gamma(2M)} \leq 1.
\]
Thus,
\[
\int_0^{x^2/2} \frac{e^{-t}}{\sqrt
  t} C_M(x'\sqrt{2t}) \,  dt \leq \int_0^{x^2/2} \frac{e^{-t}}{\sqrt
  t} \,  dt = \sqrt{\pi} \erf\left(\frac{|x|}{\sqrt{2}}\right) \leq
\sqrt{\pi}.
\]
It follows that
\[
\frac{e^{-x'^2/2}}{2\sqrt{\pi}} \int_0^{x^2/2} \frac{e^{-t}}{\sqrt t}
C_M(x'\sqrt{2t}) \,  dt \leq \frac{e^{-x'^2/2}}{2},
\]
and making the substitution $x' = u\sqrt{2M} + r'$, this goes to 0 as
$M \rightarrow \infty$.
\end{proof}

\appendix

\newcommand{\appsection}[1]{\let\oldthesection\thesection
  \renewcommand{\thesection}{Appendix \oldthesection}
  \section{#1}\let\thesection\oldthesection}

\appsection{Correlation Functions for $\beta=1$ and $\beta=4$ Hermitian
  Ensembles}
\label{app:A}

In this appendix we will use the Pfaffian Cauchy-Binet Formula (see
\ref{app:B}) in order to derive the correlation functions of the
$\beta=1$ and $\beta=4$ Hermitian ensembles.  We will keep the
exposition brief, but will introduce all notation necessary for this
appendix to be read independently from the main body of the paper.  We
reuse much of the notation from main body of the paper so that we may
also reuse the same proofs.  For convenience, $N$ will be a fixed even
integer; similar results are true for odd integers.  

Given a Borel measure $\nu$ on $\R$ we define the associated partition
function to be 
\[
Z^{\nu} := \frac{1}{N!} \int_{\R^N} |\Delta(\bs{\gamma}) | \,
d\nu_N(\bs{\gamma}),
\]
where $\Delta(\bs{\gamma})$ is the Vandermonde determinant in the
variables $\gamma_1, \gamma_2, \ldots, \gamma_N$ and $\nu_N$ is the
product measure of $\nu$ on $\R^N$.  When $\beta=1$ we define the
function $\mathcal{E}: \R^2 \rightarrow \{-\frac{1}{2},0,\frac{1}{2}\}$ and the
operator $\epsilon_{\nu}$ on $L^2(\nu)$ by
\[
\mathcal{E}(\gamma, \gamma') := \frac{1}{2} \sgn(\gamma - \gamma')
\qquad \mbox{and} \qquad
\epsilon_{\nu}g(\gamma) := \int_{\R^2} g(y) \mathcal{E}(y, \gamma) \, d\nu(y).
\]
When $\beta=4$ we define $\mathcal{E}(\gamma, \gamma') := 0$ and
$\epsilon_{\nu}g(y) := g'(y)$.  We use $\epsilon_{\nu}$ to define the
skew-symmetric bilinear form $\la \cdot | \cdot \ra_{\nu}$ on
$L^2(\nu)$ given by
\[
\la g | h \ra_{\nu} := \int_{\R} \big( g(\gamma) \epsilon_{\nu} h(\gamma) -
\epsilon_{\nu}g(\gamma) h(\gamma) \big) d\nu(\gamma).
\]
\begin{thm}
Let $b := \sqrt{\beta}$, and let $\mathbf{q}$ be a family of $Nb$ monic
polynomials such that $\deg q_n = n$.  Then,
\[
Z^{\nu} = \frac{(Nb)!}{N!} \Pf \mathbf{U}_{\mathbf{q}}^{\nu},
\]
where $\mathbf{U}_{\mathbf{q}}^{\nu} = [ \la q_n | q_{n'} \ra_{\nu} ];
\quad n,n'=0,1,\ldots,Nb-1$.  
\end{thm}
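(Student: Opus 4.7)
My plan is to derive both the $\beta = 1$ and $\beta = 4$ cases from de Bruijn-type identities which, in turn, reduce to the Pfaffian Cauchy-Binet formula of \ref{app:B}. This mirrors the proof strategy employed for Theorem~\ref{thm:1} in the real asymmetric setting: the integrand $|\Delta(\bs{\gamma})|$ is rewritten as a determinant in the $q_n$ (and, for $\beta = 4$, also their derivatives), and Pfaffian Cauchy-Binet converts the iterated integral directly into $\Pf \mathbf{U}^\nu_{\mathbf{q}}$.

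For $\beta = 1$, I would first use the triangular relation between $\mathbf{q}$ and the standard power basis to write $\Delta(\bs{\gamma}) = \det[q_n(\gamma_m)]_{n,m = 0}^{N-1}$. Restricting the integration in $Z^\nu$ to the Weyl chamber $\gamma_1 < \gamma_2 < \cdots < \gamma_N$ removes the absolute value and cancels the factor $1/N!$. Applying the Pfaffian Cauchy-Binet formula with skew kernel $\mathcal{E}(y, \gamma) = \tfrac{1}{2}\sgn(y - \gamma)$ (the $\beta = 1$ choice of $\epsilon_\nu$) converts the chamber integral into $\Pf[\la q_n | q_{n'} \ra_\nu]$. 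Here $Nb = N$, so the combinatorial prefactor $(Nb)!/N!$ equals $1$ and the claim follows.

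For $\beta = 4$, the analogous identity rests on a confluent Vandermonde expansion: the integrand, interpreted so as to match the symplectic joint density, is expressed (up to a global sign forced by the triangularity of $\mathbf{q}$) as
\[
\det\!\bigl[\, q_n(\gamma_m) \;\big|\; q_n'(\gamma_m) \,\bigr],
\]
with $n$ ranging over $\{0, 1, \ldots, 2N-1\}$ and columns grouped in pairs indexed by $m \in \{1, \ldots, N\}$. Feeding this into Pfaffian Cauchy-Binet, now with the derivative $\epsilon_\nu g = g'$ as the skew kernel, yields a Pfaffian of size $2N \times 2N$ whose entries are precisely the $\beta = 4$ inner products $\la q_n | q_{n'} \ra_\nu = \int (q_n q_{n'}' - q_n' q_{n'}) \, d\nu$. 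The combinatorial factor $(2N)!/N!$ arises naturally from the mismatch between the $2N$ row indices of the confluent determinant and the $N$ integration variables.

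The main obstacle will be the $\beta = 4$ case. For $\beta = 1$ the argument is essentially de Bruijn's classical identity and is routine once the chamber symmetrization is performed. For $\beta = 4$ one must carefully justify the confluent Vandermonde rewriting for a general monic family $\mathbf{q}$, verify that Pfaffian Cauchy-Binet applies in the same form when $\epsilon_\nu$ is a differential operator rather than an integral operator against $\sgn$, and track the prefactor $(2N)!/N!$ through the chain of manipulations. The payoff of unifying the two cases under the single parameter $b = \sqrt{\beta}$ is that a single application of Pfaffian Cauchy-Binet — with the appropriate skew kernel substituted — handles both simultaneously.
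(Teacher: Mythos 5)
Your overall plan---rewrite $|\Delta(\bs{\gamma})|$ (respectively the $\beta=4$ density) as a determinant in the monic family $\mathbf{q}$ (respectively the confluent determinant with columns $q_n(\gamma_m)$ and $q_n'(\gamma_m)$) and then integrate it into a Pfaffian of the skew inner products $\la q_n | q_{n'} \ra_{\nu}$---is exactly the de Bruijn route, which is all the paper does (its proof is the single sentence that the theorem follows from de Bruijn's identities). But the mechanism you assign to the crucial step is wrong. The Pfaffian Cauchy--Binet formula of \ref{app:B} is a purely finite-dimensional ratio identity,
\[
\frac{\Pf(\mathbf{C}^{-\transpose}-\mathbf{A}^{\transpose}\mathbf{B}\mathbf{A})}{\Pf(\mathbf{C}^{-\transpose})}
=\frac{\Pf(\mathbf{B}^{-\transpose}-\mathbf{A}\mathbf{C}\mathbf{A}^{\transpose})}{\Pf(\mathbf{B}^{-\transpose})},
\]
containing no integrals and no kernel; there is no slot into which a ``skew kernel'' $\mathcal{E}(y,\gamma)=\tfrac{1}{2}\sgn(y-\gamma)$ or $\epsilon_{\nu}g=g'$ can be substituted. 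In the paper it is used only in Section~\ref{sec:proof-lemma-refl}, to pass from the $Nb\times Nb$ Pfaffian to the $2T\times 2T$ Fredholm Pfaffian in Lemmas~\ref{lemma:4} and~\ref{lemma:1}---that is, \emph{after} the partition-function identity you are trying to prove is already in hand. It cannot ``convert the chamber integral into $\Pf[\la q_n|q_{n'}\ra_{\nu}]$''; to invoke it you would already need Pfaffian evaluations of the integrals in question, which is circular. The step you actually need is de Bruijn's integration identity itself: expand the determinant over the symmetric group, integrate term by term against $\nu_N$, and regroup permutations into perfect matchings (equivalently, a minor-summation argument), which is what the paper's citation refers to.

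Separately, your accounting of the prefactor in the $\beta=4$ case is asserted rather than derived. Carrying out the pairing argument for the confluent determinant gives
\[
\int_{\R^N}\det\big[\,q_n(\gamma_m)\ \ q_n'(\gamma_m)\,\big]_{n=0,\ldots,2N-1;\ m=1,\ldots,N}\,d\nu_N(\bs{\gamma})
= N!\,\Pf\big[\la q_n | q_{n'} \ra_{\nu}\big]_{n,n'=0}^{2N-1},
\]
where $\la q_n|q_{n'}\ra_{\nu}=\int (q_nq_{n'}'-q_n'q_{n'})\,d\nu$: the $N!$ counts the orderings of the $N$ column pairs, and nothing in this computation produces a factor $(2N)!$. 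So the constant $(Nb)!/N!$ does not simply ``arise naturally from the mismatch between the $2N$ row indices and the $N$ integration variables''; you must fix precisely how $Z^{\nu}$ is normalized in the $\beta=4$ case and track the constant through the confluent rewriting explicitly, since as written your outline yields a different prefactor than the one in the statement. The $\beta=1$ half of your argument (unitriangularity giving $\Delta=\det[q_n(\gamma_m)]$, Weyl-chamber symmetrization, identification of $\int\!\int \sgn(y-x)q_n(x)q_{n'}(y)\,d\nu\,d\nu$ with $\la q_n|q_{n'}\ra_{\nu}$, and prefactor $1$) is fine once the integration step is attributed to de Bruijn rather than to \ref{app:B}.
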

This theorem follows from de Bruijn's identities \cite{MR0079647}.  

We set $\lambda$ to be Lebesgue measure on $\R$.  If there is some
Borel measurable function $w: \R \rightarrow [0,\infty)$ so that $\nu
= w\lambda$ (that is, $d\nu/d\lambda = w$) then we define $Z^w :=
Z^{\nu}$.  Clearly,
\[
Z^w = \frac{1}{N!} \int_{\R^N} \Omega_N(\bs{\gamma}) \,
d\lambda_N(\bs{\gamma}) \qquad \mbox{where} \qquad
\Omega_N(\bs{\gamma}) := \bigg\{\prod_{n=1}^N w(\gamma_n) \bigg\}
|\Delta(\bs{\gamma})|^{\beta}.
\]
We may specify an ensemble of Hermitian matrices by demanding that
its joint probability density function is given by $\Omega_N$.  The
$n$th correlation function of this ensemble is then defined to be
$R_n: \R^n \rightarrow [0,\infty)$ where
\begin{equation}
\label{eq:13}
R_n(\mathbf{y}) := \frac{1}{Z^w} \cdot \frac{1}{(N-n)!}
\int_{\R^{N-n}} \Omega_N(\mathbf{y} \vee \bs{\gamma}) \,
d\lambda_{N-n}(\bs{\gamma}),
\end{equation}
where $\mathbf{y} \vee \bs{\gamma} \in \R^N$ is the vector formed by
concatenating the vectors $\mathbf{y} \in \R^n$ and $\mathbf{\gamma}
\in \R^{N-n}$.  By definition, $R_0 = 1$.  Here we take (\ref{eq:13})
as the definition of the $n$th correlation function; one can use
the point process formalism to show that this definition is consistent
with the definition derived in that manner.  See \cite{MR2185737} for
details.  

We set $\mu_{n,n'}$ to be the $n,n'$ entry of
$(\mathbf{U}_{\mathbf{q}}^{w\lambda})^{-\transpose}$, and we define
$\widetilde{q_n} := w q_n$.   Using this notation we define the
functions $S_N, IS_N$ and $DS_N : \R^2 \rightarrow \R$ by
\[
S_N(\gamma, \gamma') := \frac{2}{b} \sum_{n,n'=0}^{Nb-1} \!\!
\mu_{n,n'} \, \widetilde{q}_n(\gamma) \, \epsilon_{\lambda}
\widetilde{q}_{n'}(\gamma'),  \qquad
IS_N (\gamma, \gamma') := \frac{2}{b} \sum_{n,n'=0}^{Nb-1} \!\!
\mu_{n,n'} \, \epsilon_{\lambda} \widetilde{q}_n(\gamma) \, 
\epsilon_{\lambda} \widetilde{q}_{n'}(\gamma')
\]
and
\[
DS_N (\gamma, \gamma') := \frac{2}{b} \sum_{n,n'=0}^{Nb-1} 
\mu_{n,n'} \, \widetilde{q}_n(\gamma) \, \widetilde{q}_{n'}(\gamma').
\]
The matrix kernel of our ensemble is then defined to be
\[
K_N(\gamma, \gamma') := \begin{bmatrix}
DS_N(\gamma, \gamma') & S_N(\gamma, \gamma') \\
-S_N(\gamma', \gamma) & IS_N(\gamma, \gamma') + \mathcal{E}(\gamma,
\gamma'). 
\end{bmatrix}
\]

\begin{thm}
\label{thm:4}
\[
R_n(\mathbf{y}) = \Pf\big[ K_N(\mathbf{y}_j, \mathbf{y_{j'}} ) \big];
\qquad j,j'=1,2,\ldots,n.
\]
\end{thm}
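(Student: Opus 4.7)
The plan is to imitate the proof of Theorem~\ref{thm:2} essentially verbatim, with all non-real complex contributions dropped and with the parameter $b=\sqrt{\beta}$ absorbed into the row/column count of the matrices. Concretely, I would introduce a Borel measure on $\R$ of the form $d\eta(\gamma) = \sum_{t=1}^T c_t\,d\delta(\gamma - y_t)$, where the $y_t$ are distinct real numbers (with $T > N$) and the $c_t$ are indeterminates, and then compute the ratio $Z^{w(\lambda+\eta)}/Z^w$ in two ways. Equating the coefficients of the monomials $\prod_{j=1}^n c_{u(j)}$ for $\mathfrak{u}\in\mathfrak{I}_n^T$ on the two sides will then yield the Pfaffian formula for $R_n$.

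The first expansion is the Hermitian analog of Lemma~\ref{lemma:4}. Starting from the first theorem of the appendix applied to the measure $w(\lambda+\eta)$, I would expand $\la q_n\mid q_{n'}\ra_{w(\lambda+\eta)}$ as $\la q_n\mid q_{n'}\ra_{w\lambda}$ plus a rank-$2T$ correction, write $\mathbf{U}^{w(\lambda+\eta)}_{\mathbf{q}} = \mathbf{U}^{w\lambda}_{\mathbf{q}} - \mathbf{A}\mathbf{B}\mathbf{A}^{\transpose}$ for appropriate $\mathbf{A}$ (of size $Nb\times 2T$) and $\mathbf{B}$ (of size $2T\times 2T$ built from $\mathbf{J}$ and the $\mathcal{E}(y_t,y_{t'})$ block), and then apply the Pfaffian Cauchy-Binet identity from Appendix~\ref{app:B} to flip the size of the Pfaffian from $Nb$ to $2T$. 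Computing $\mathbf{A}^{\transpose}\mathbf{C}\mathbf{A}$ with $\mathbf{C} = (\mathbf{U}^{w\lambda}_{\mathbf{q}})^{-\transpose}$ directly produces the $2\times 2$ blocks whose entries are $\sqrt{c_tc_{t'}}\,DS_N(y_t,y_{t'})$, $\pm\sqrt{c_tc_{t'}}\,S_N(y_t,y_{t'})$, and $\sqrt{c_tc_{t'}}\,IS_N(y_t,y_{t'})$. After the sign bookkeeping $\Pf(-\mathbf{E}) = (-1)^T\Pf\mathbf{E}$, this gives
\[
\frac{Z^{w(\lambda+\eta)}}{Z^w} = \Pf\bigl(\mathbf{J} + \bigl[\sqrt{c_tc_{t'}}\,K_N(y_t,y_{t'})\bigr]\bigr).
\]
This is exactly the $M=0$ specialization of the argument in Section~\ref{sec:proof-lemma-refl}, which is why the paper foreshadows proving Lemmas~\ref{lemma:4} and \ref{lemma:1} together.

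The second expansion is the Hermitian analog of Lemma~\ref{lemma:5}: expanding $d(\lambda+\eta)_N(\bs{\gamma}) = \sum_{n=0}^N \sum_{\mathfrak{u}\in\mathfrak{I}_n^N} d\eta_n(\bs{\gamma}_{\mathfrak{u}})\,d\lambda_{N-n}(\bs{\gamma}_{\mathfrak{u}'})$, exchanging sums and integrals, and recognizing the inner integral as $Z^w\,R_n(\mathbf{y}_{\mathfrak{u}})$ by (\ref{eq:13}). This again is a direct specialization of the argument in Section~\ref{sec:proof-lemma-refl-1}, obtained by setting every occurrence of $M$ to $0$ and discarding the $b$ and $\eta_2$ terms. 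The result is
\[
\frac{Z^{w(\lambda+\eta)}}{Z^w} = \sum_{n=0}^N \sum_{\mathfrak{u}\in\mathfrak{I}_n^T} \Bigl\{\prod_{j=1}^n c_{\mathfrak{u}(j)}\Bigr\} R_n(\mathbf{y}_{\mathfrak{u}}).
\]

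To match the two expansions, I would invoke Lemma~\ref{lemma:6} to rewrite $\Pf(\mathbf{J}+[\sqrt{c_tc_{t'}}\,K_N(y_t,y_{t'})])$ as $\sum_{n,\mathfrak{u}}\prod_j c_{\mathfrak{u}(j)}\cdot\Pf[K_N(y_{\mathfrak{u}(j)},y_{\mathfrak{u}(j')})]$, with the $\sqrt{c_tc_{t'}}$ factors coalescing into a clean $\prod c_{\mathfrak{u}(j)}$ because each $t$ appears in exactly two rows/columns of the $2S\times 2S$ submatrix. Equating coefficients of $\prod_{j=1}^n c_{\mathfrak{u}(j)}$ in the two series (which is legitimate since the $c_t$ are indeterminates) delivers Theorem~\ref{thm:4}. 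The only step I expect to need mild care is the $\beta=4$ case of the first expansion: since there $\epsilon_\nu g = g'$ is a differential, not an integral, operator, the rewriting $\la q_n\mid q_{n'}\ra_{w(\lambda+\eta)} - \la q_n\mid q_{n'}\ra_{w\lambda}$ takes a slightly different form, and I would need to verify that the rank-$2T$ factorization $\mathbf{A}\mathbf{B}\mathbf{A}^{\transpose}$ still holds with the $\mathcal{E}\equiv 0$ convention; once this is checked the Cauchy-Binet step and the rest of the argument go through identically.
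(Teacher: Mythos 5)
Your proposal follows the paper's own argument essentially verbatim: the same point-mass perturbation $\eta$, the same two expansions of $Z^{w(\lambda+\eta)}/Z^w$ (the Pfaffian Cauchy--Binet route of Lemma~\ref{lemma:1}, proved in Section~\ref{sec:proof-lemma-refl} with $Y=\R$, $b=\sqrt{\beta}$, and the measure-expansion route of Lemma~\ref{lemma:2}, proved in Section~\ref{sec:proof-lemma-refl-1} by setting $M=0$), followed by Lemma~\ref{lemma:6} and coefficient matching. This is correct and is exactly the paper's proof, including your remark that the $\beta=4$ case only changes the form of the $\epsilon$-operator while the rank-$2T$ factorization and the rest of the argument are unaffected.
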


Our proof of this theorem begins by setting $\eta$ to be the measure
on $\R$ given by
\[
d \eta(\gamma) = \sum_{t=1}^T c_t \, d \delta(\gamma - y_t),
\]
where $y_1, y_2, \ldots, y_T$ are real numbers and $c_1, c_2, \ldots,
c_T$ are indeterminants and $\delta$ is the probability measure on
$\R$ with point mass at 0.  We will assume that $T \geq N$.  As with
Theorem~\ref{thm:2} in the main body of this paper, we will prove
Theorem~\ref{thm:4} by expanding $Z^{w(\lambda + \eta)}/Z^w$ in two
different ways and then equating the coefficients of certain products
of $c_1, c_2, \ldots, c_T$.  
\begin{lemma}
\label{lemma:1}
\[
\frac{Z^{w(\lambda + \eta)}}{Z^w} = \Pf \big( \mathbf{J} + \big[ \sqrt{c_t
  c_{t'}} K_N(y_t, y_{t'}) \big] \big); \qquad t,t'=1,2,\ldots,T,
\]
where $\mathbf{J}$ is defined to be the $2T \times 2T$ matrix
consisting of $2 \times 2$ blocks given by
\[
\mathbf{J} := \left[
\delta_{t,t'} \begin{bmatrix}
0 & 1 \\
-1 & 0 
\end{bmatrix}
\right];
\qquad t,t = 1,2,\ldots, T.
\]
\end{lemma}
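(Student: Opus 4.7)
The plan is to run exactly the argument laid out for Lemma~\ref{lemma:4} in Section~\ref{sec:proof-lemma-refl}, specialized to the Hermitian setting where the main body's conventions $Y = \C$, $b = 1$ become $Y = \R$, $b = \sqrt{\beta}$. I would begin by invoking the theorem immediately preceding Lemma~\ref{lemma:1} to write $Z^{w(\lambda+\eta)} = \frac{(Nb)!}{N!}\Pf \mathbf{U}^{w(\lambda+\eta)}_{\mathbf{q}}$ and $Z^w = \frac{(Nb)!}{N!}\Pf \mathbf{U}^{w\lambda}_{\mathbf{q}}$, so that the task reduces to computing the Pfaffian ratio $\Pf\mathbf{U}^{w(\lambda+\eta)}_{\mathbf{q}}/\Pf\mathbf{U}^{w\lambda}_{\mathbf{q}}$.

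The next step would be to expand each entry $\la q_n|q_{n'}\ra_{w(\lambda+\eta)}$ in the indeterminates $c_t$. For $\beta=1$ one uses $\epsilon_{w(\lambda+\eta)} g(\gamma) = \epsilon_{w\lambda}g(\gamma) + \sum_t c_t w(y_t) g(y_t)\mathcal{E}(y_t,\gamma)$ together with $d(\lambda+\eta)=d\lambda+d\eta$ in the outer integral. After grouping terms and using antisymmetry of $\mathcal{E}$ to combine the two ways a single point mass can enter (either through $\epsilon$ or through $d\eta$), the difference $\la q_n|q_{n'}\ra_{w(\lambda+\eta)} - \la q_n|q_{n'}\ra_{w\lambda}$ splits into a piece linear in $c_t$ involving $\widetilde q_n(y_t)\epsilon_\lambda\widetilde q_{n'}(y_t)$ terms and a bilinear piece of the form $c_tc_{t'}\widetilde q_n(y_t)\widetilde q_{n'}(y_{t'})\mathcal{E}(y_t,y_{t'})$. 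For $\beta=4$ the operator $\epsilon_\nu g = g'$ is independent of $\nu$ and $\mathcal{E}\equiv 0$, so the bilinear piece drops out and only the linear piece survives; the two cases can therefore be handled in parallel.

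This identity can then be packaged as the matrix equation $\mathbf{U}^{w(\lambda+\eta)}_{\mathbf{q}} = \mathbf{C}^{-\transpose} - \mathbf{A}\mathbf{B}\mathbf{A}^{\transpose}$, with $\mathbf{C} = (\mathbf{U}^{w\lambda}_{\mathbf{q}})^{\transpose}$, the $Nb\times 2T$ matrix $\mathbf{A}$ whose $(n,t)$-blocks are $\bigl(\sqrt{2c_t/b}\,\widetilde q_n(y_t),\sqrt{2c_t/b}\,\epsilon_\lambda\widetilde q_n(y_t)\bigr)$, and $\mathbf{B} = -\mathbf{J} + \bigl[\sqrt{c_tc_{t'}}\mathcal{E}(y_t,y_{t'})\bigr]$ placed in the $(1,1)$ slot of each $2\times 2$ block. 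Dividing through by $\Pf\mathbf{C}^{-\transpose}$ and applying the Pfaffian Cauchy--Binet formula of Appendix~\ref{app:B} then converts the ratio into $\Pf(\mathbf{B}^{-\transpose} - \mathbf{A}^{\transpose}\mathbf{C}\mathbf{A})/\Pf\mathbf{B}^{-\transpose}$, trading the $Nb\times Nb$ Pfaffian for a $2T\times 2T$ one.

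The final step is a direct computation. One checks that $\mathbf{B}^{-\transpose}$ equals $-\mathbf{J}$ with the $\sqrt{c_tc_{t'}}\mathcal{E}(y_t,y_{t'})$ term moved from the $(1,1)$ slot to the $(2,2)$ slot of each $2\times 2$ block, and that $\mathbf{A}^{\transpose}\mathbf{C}\mathbf{A}$ assembles, via the defining series for $S_N$, $DS_N$, and $IS_N$, the block $\sqrt{c_tc_{t'}}\bigl[\begin{smallmatrix}DS_N(y_t,y_{t'})&S_N(y_t,y_{t'})\\-S_N(y_{t'},y_t)&IS_N(y_t,y_{t'})\end{smallmatrix}\bigr]$. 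Combining these and absorbing the migrated $\mathcal{E}$ contribution into the kernel reconstructs $-\mathbf{J}-[\sqrt{c_tc_{t'}}K_N(y_t,y_{t'})]$; Lemma~\ref{lemma:6} gives $\Pf\mathbf{B}^{-\transpose}=\Pf(-\mathbf{J})=(-1)^T$, and the identity $\Pf(-\mathbf{E})=(-1)^T\Pf\mathbf{E}$ on any $2T\times 2T$ antisymmetric matrix clears the leftover sign. The main obstacle I expect is the bookkeeping in the expansion step: tracking the signs and $2/b$ factors produced by the double-sum antisymmetrization in $t,t'$ using the skew-symmetry of $\mathcal{E}$, and checking that the resulting packaging is simultaneously valid for $\beta=1$ (with nontrivial $\mathcal{E}$) and $\beta=4$ (where the $\mathcal{E}$ block silently vanishes).
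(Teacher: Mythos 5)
Your proposal is correct and is essentially the paper's own argument: the paper proves Lemma~\ref{lemma:1} together with Lemma~\ref{lemma:4} in Section~\ref{sec:proof-lemma-refl}, using the same expansion of $\la q_n|q_{n'}\ra_{w(\lambda+\eta)}$ with the unified $2/b$ factors, the same matrices $\mathbf{A}$, $\mathbf{B}$, $\mathbf{C}$, the same appeal to the Pfaffian Cauchy--Binet formula of \ref{app:B}, and the same final sign bookkeeping via $\Pf(\mathbf{B}^{-\transpose})=(-1)^T$ and $\Pf(-\mathbf{E})=(-1)^T\Pf(\mathbf{E})$. One small correction (mirroring a slip in the paper's own text): take $\mathbf{C}=(\mathbf{U}^{w\lambda}_{\mathbf{q}})^{-\transpose}$, whose entries are $\mu_{n,n'}$, rather than $(\mathbf{U}^{w\lambda}_{\mathbf{q}})^{\transpose}$, so that $\mathbf{C}^{-\transpose}=\mathbf{U}^{w\lambda}_{\mathbf{q}}$ makes the packaging $\mathbf{U}^{w(\lambda+\eta)}_{\mathbf{q}}=\mathbf{C}^{-\transpose}-\mathbf{A}\mathbf{B}\mathbf{A}^{\transpose}$ literally true and so that $\mathbf{A}^{\transpose}\mathbf{C}\mathbf{A}$ indeed assembles $S_N$, $DS_N$, $IS_N$ from their defining series in $\mu_{n,n'}$.
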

Lemma~\ref{lemma:1} is proved in Section~\ref{sec:proof-lemma-refl}.  

For each $N \geq 0$ we define $\mf{I}_n^T$ to be the
set of increasing functions from $\{1,2,\ldots,n\}$ into
$\{1,2,\ldots,T\}$.  Given a vector $\mathbf{y} \in \R^T$ and an element $\mf{t} \in
\mf{I}_n^T$, we define the vector $\mathbf{y}_{\mf{t}} \in \R^n$ by
$\mathbf{y}_{\mf{t}} = \{y_{\mf{t}(1)}, y_{\mf{t}(2)}, \ldots,
y_{\mf{t}(n)}\}$.  
\begin{lemma}
\label{lemma:2}
\begin{equation}
\label{eq:15}
\frac{Z^{w(\lambda + \eta)}}{Z^w} = 1 + \sum_{n=1}^N \sum_{\mf{t} \in
  \mf{I}_n^T} \bigg\{ \prod_{j=1}^T c_{\mathbf{t}(j)} \bigg\}
  R_n(\mathbf{y}_{\mf{t}}).
\end{equation}
\end{lemma}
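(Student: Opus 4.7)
The plan is to mimic the proof of Lemma~\ref{lemma:5} given in \S\ref{sec:proof-lemma-refl-1}, specialized to the Hermitian situation by setting every instance of $M$ to zero; indeed the appendix already flags that the argument recorded there subsumes this one. Concretely, I would start from
\[
\frac{Z^{w(\lambda+\eta)}}{Z^w} = \frac{1}{Z^w}\cdot\frac{1}{N!}\int_{\R^N}\Omega_N(\bs{\gamma})\,d(\lambda+\eta)_N(\bs{\gamma}),
\]
and expand the product measure by distributing the binomial factor
\[
d(\lambda+\eta)_N(\bs{\gamma}) = \prod_{j=1}^N\big(d\lambda(\gamma_j)+d\eta(\gamma_j)\big)
= \sum_{n=0}^N\sum_{\mf{u}\in\mf{I}_n^N} d\eta_n(\bs{\gamma}_{\mf{u}})\,d\lambda_{N-n}(\bs{\gamma}_{\mf{u}'}),
\]
exactly as in (\ref{eq:27}). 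The symmetry of $\Omega_N$ in its coordinates lets us replace the sum over $\mf{u}\in\mf{I}_n^N$ by the factor $\binom{N}{n}$, after relabeling the variables so that the $\eta$-block corresponds to the identity embedding.

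Next I would substitute $d\eta = \sum_{t=1}^T c_t\, d\delta(\,\cdot\,-y_t)$ coordinate-by-coordinate, obtaining for the $n$-fold product
\[
d\eta_n(\bs{\gamma}_{\mf{i}}) = \sum_{\mf{u}\in\mf{F}_n^T}\bigg\{\prod_{j=1}^n c_{\mf{u}(j)}\,d\delta(\gamma_j - y_{\mf{u}(j)})\bigg\},
\]
where $\mf{F}_n^T$ is the set of all functions from $\{1,\dots,n\}$ into $\{1,\dots,T\}$. Evaluating these point masses against $\Omega_N$ yields $\Omega_N(\mathbf{y}_{\mf{u}}\vee\bs{\gamma}_{\mf{i}'})$ in the integrand, at which point the definition (\ref{eq:13}) of $R_n$ would let me recognize the remaining $\bs{\gamma}_{\mf{i}'}$-integral as $(N-n)!\, Z^w\, R_n(\mathbf{y}_{\mf{u}})$ (up to the combinatorial factors that conspire with $\binom{N}{n}/N!$ to produce exactly $1/(N-n)!\cdot 1/n!$ times the remaining ingredients).

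The only subtle point, and the one I would handle with care, is pruning $\mf{F}_n^T$ down to the set of strictly increasing maps $\mf{I}_n^T$. If $\mf{u}$ fails to be injective then two of the points $y_{\mf{u}(j)}$ coincide, producing a Vandermonde factor that vanishes; hence we may replace $\mf{F}_n^T$ by its injective subset. Then, since $\Omega_N$ is symmetric in its first $n$ coordinates, each increasing $\mf{u}\in\mf{I}_n^T$ represents $n!$ injective functions with the same range and the same contribution, and this factor of $n!$ cancels one of the $n!$'s generated along the way. Collecting the surviving factors of $\binom{N}{n}$, $n!$, $1/N!$ and $1/(N-n)!$, and pairing them with $R_n$, yields precisely the right-hand side of (\ref{eq:15}), with the $n=0$ term contributing the additive $1$. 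The main obstacle is purely bookkeeping: confirming that all combinatorial prefactors collapse correctly, and keeping the indexing conventions on $\mf{u}$, $\mf{u}'$, $\mf{i}$ and $\mf{i}'$ consistent with those already set up in \S\ref{sec:proof-lemma-refl-1}.
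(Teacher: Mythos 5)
Your proposal is correct and is essentially the paper's own argument: the paper proves Lemma~\ref{lemma:2} precisely by running the proof of Lemma~\ref{lemma:5} with every instance of $M$ set to $0$ (expand $d(\lambda+\eta)_N$, use symmetry of $\Omega_N$, substitute the point masses, prune non-injective maps via the vanishing Vandermonde, pass to increasing maps at the cost of $n!$, and identify $R_n = R_{n,0,N,0}/Z^w$). Your bookkeeping of the factors $\binom{N}{n}$, $n!$, $1/N!$, $1/(N-n)!$ matches the paper's.
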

The proof of Lemma~\ref{lemma:2} is given in
Section~\ref{sec:proof-lemma-refl-1}.  

Finally, we set $\mathbf{K}$ to be the $2T \times 2T$ block matrix
given by 
\[
\mathbf{K} := \big[ \sqrt{c_t c_{t'}} \, K_N(y_t, y_{t'}) \big]; \qquad
t,t'=1,2,\ldots, T.
\]
From the formula for the Pfaffian of the sum of two antisymmetric
matrices (see Lemma~\ref{lemma:6}) and Lemma~\ref{lemma:1} we have that
\begin{equation}
\label{eq:16}
\frac{Z^{w(\lambda + \eta)}}{Z^w} = \Pf[ \mathbf{J} + \mathbf{K} ] = 1
+ \sum_{n=1}^T \sum_{\mf{t} \in  \mf{I}_n^T} \Pf \mathbf{K}_{\mf{t}},
\end{equation}
where for each $\mf{t} \in \mf{I}_n^T$, $\mathbf{K}_{\mf{t}}$ is the
$2n \times 2n$ antisymmetric matrix given by
\[
\mathbf{K}_{\mf{t}} = [ \sqrt{c_{\mf{t}(j)} c_{\mf{t}(j')}}K_N(y_{\mf{t}(j)}, y_{\mf{t}(j')})]; \qquad j,j'=1,2,\ldots,S.
\]
Finally,
\[
\Pf \mathbf{K}_{\mf{t}} = \bigg\{ \prod_{j=1}^n c_{\mf{t}(j)} \bigg\}
\Pf[ K_N(y_{\mf{t}(j)}, y_{\mf{t}(j')}) ]; \qquad j,j'=1,2,\ldots,n,
\]
and Theorem~\ref{thm:4} follows from Lemma~\ref{lemma:2} by comparing
coefficients of $c_1 c_2 \cdots c_n$ in (\ref{eq:15}) and (\ref{eq:16}).

\appsection{The Pfaffian Cauchy-Binet Formula}
\label{app:B}

\begin{thm}[Rains]
Suppose $\mathbf{B}$ and $\mathbf{C}$ are respectively $2J \times 2J$
and $2K \times 2K$ antisymmetric matrices with non-zero Pfaffians.
Then, given any $2J \times 2K$ matrix $\mathbf{A}$,
\[
\frac{\Pf(
  \mathbf{C}^{-\mathsf{T}} - \mathbf{A}^{\mathsf{T}} \mathbf{B A} )}{\Pf(
  \mathbf{C}^{-\mathsf{T}})} =  \frac{\Pf(
  \mathbf{B}^{-\mathsf{T}} - \mathbf{A C A}^{\mathsf{T}} )}{\Pf(
  \mathbf{B}^{-\mathsf{T}})}.
\]
\end{thm}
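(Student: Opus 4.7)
The plan is to construct a single $2(J+K) \times 2(J+K)$ antisymmetric block matrix whose Pfaffian can be computed in two different ways via Schur complement (block $\mathbf{L} \mathbf{D} \mathbf{L}^{\transpose}$) decompositions, yielding the two sides of the asserted identity. The key underlying fact is that for a square matrix $\mathbf{L}$ and an antisymmetric matrix $\mathbf{D}$ of the same size, $\Pf(\mathbf{L} \mathbf{D} \mathbf{L}^{\transpose}) = \det(\mathbf{L}) \, \Pf(\mathbf{D})$; in particular, conjugation by a unit block triangular matrix preserves the Pfaffian.

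First I would introduce the block matrix
\[
\mathbf{M} := \begin{pmatrix} \mathbf{B}^{-\transpose} & \mathbf{A} \\ -\mathbf{A}^{\transpose} & \mathbf{C}^{-\transpose} \end{pmatrix},
\]
and verify that it is antisymmetric. Since $\mathbf{B}$ and $\mathbf{C}$ are antisymmetric, so are $\mathbf{B}^{-\transpose} = -\mathbf{B}^{-1}$ and $\mathbf{C}^{-\transpose} = -\mathbf{C}^{-1}$, from which $\mathbf{M}^{\transpose} = -\mathbf{M}$ follows immediately, and in particular $\Pf(\mathbf{M})$ is defined.

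Next I would carry out two LDU decompositions of $\mathbf{M}$. Taking
\[
\mathbf{L}_1 := \begin{pmatrix} \mathbf{I} & 0 \\ -\mathbf{A}^{\transpose} (\mathbf{B}^{-\transpose})^{-1} & \mathbf{I} \end{pmatrix}, \qquad
\mathbf{D}_1 := \begin{pmatrix} \mathbf{B}^{-\transpose} & 0 \\ 0 & \mathbf{C}^{-\transpose} - \mathbf{A}^{\transpose} \mathbf{B} \mathbf{A} \end{pmatrix},
\]
a direct block computation (using $(\mathbf{B}^{-\transpose})^{-1} = \mathbf{B}^{\transpose} = -\mathbf{B}$) verifies that $\mathbf{M} = \mathbf{L}_1 \mathbf{D}_1 \mathbf{L}_1^{\transpose}$; since $\det(\mathbf{L}_1) = 1$ and $\mathbf{D}_1$ is block diagonal antisymmetric,
\[
\Pf(\mathbf{M}) = \Pf(\mathbf{B}^{-\transpose}) \, \Pf\!\bigl(\mathbf{C}^{-\transpose} - \mathbf{A}^{\transpose} \mathbf{B} \mathbf{A}\bigr).
\]
Symmetrically, taking
\[
\mathbf{L}_2 := \begin{pmatrix} \mathbf{I} & \mathbf{A} (\mathbf{C}^{-\transpose})^{-1} \\ 0 & \mathbf{I} \end{pmatrix}, \qquad
\mathbf{D}_2 := \begin{pmatrix} \mathbf{B}^{-\transpose} - \mathbf{A} \mathbf{C} \mathbf{A}^{\transpose} & 0 \\ 0 & \mathbf{C}^{-\transpose} \end{pmatrix},
\]
the same kind of computation gives $\mathbf{M} = \mathbf{L}_2 \mathbf{D}_2 \mathbf{L}_2^{\transpose}$ and hence
\[
\Pf(\mathbf{M}) = \Pf\!\bigl(\mathbf{B}^{-\transpose} - \mathbf{A} \mathbf{C} \mathbf{A}^{\transpose}\bigr) \, \Pf(\mathbf{C}^{-\transpose}).
\]
Equating the two expressions for $\Pf(\mathbf{M})$ and dividing through by $\Pf(\mathbf{B}^{-\transpose}) \, \Pf(\mathbf{C}^{-\transpose})$ (both nonzero by hypothesis) yields the claimed identity.

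The only delicate step is the block computation verifying $\mathbf{M} = \mathbf{L}_i \mathbf{D}_i \mathbf{L}_i^{\transpose}$; the signs must be tracked carefully because $(\mathbf{B}^{-\transpose})^{-1} = -\mathbf{B}$ rather than $\mathbf{B}$, and similarly for $\mathbf{C}$. Once these sign conventions are settled, however, the rest is routine, and the use of $\Pf(\mathbf{L} \mathbf{D} \mathbf{L}^{\transpose}) = \det(\mathbf{L}) \Pf(\mathbf{D})$ with $\det(\mathbf{L}_i) = 1$ produces the factorizations directly. I expect no genuine obstacle; the whole proof is essentially a Pfaffian-flavored Schur complement computation.
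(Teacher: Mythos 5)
Your proposal is correct (I checked both block factorizations, including the sign bookkeeping with $(\mathbf{B}^{-\mathsf{T}})^{-1}=-\mathbf{B}$ and $(\mathbf{C}^{-\mathsf{T}})^{-1}=-\mathbf{C}$) and it is essentially the paper's own proof: the paper conjugates the same $2(J+K)\times 2(J+K)$ antisymmetric block matrix (with $\mathbf{A}$ replaced by $-\mathbf{A}$, which is immaterial) by the same two unit block-triangular matrices and their transposes, then applies $\Pf(\mathbf{E}\mathbf{D}\mathbf{E}^{\mathsf{T}})=\Pf(\mathbf{D})\det\mathbf{E}$ and the multiplicativity of the Pfaffian over direct sums. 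No substantive difference.
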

\begin{proof}
Let $\mathbf{I}_{2K}$ and $\mathbf{I}_{2J}$ be respectively the $2K
  \times 2K$ and $2J \times 2J$ identity matrices, and let
  $\mathbf{O}$ be the $2J \times 2K$ matrix whose entries are all 0.
  Then, an easy calculation shows that
\[
\begin{bmatrix}
\mathbf{I}_{2J} & \mathbf{O} \\
\mathbf{A}^{\transpose} \mathbf{B} & \mathbf{I}_{2K} 
\end{bmatrix}
\begin{bmatrix}
\mathbf{B}^{-\transpose} & -\mathbf{A} \\
\mathbf{A}^{\transpose} & \mathbf{C}^{-\transpose} 
\end{bmatrix}
\begin{bmatrix}
\mathbf{I}_{2J} & -\mathbf{B A} \\
\mathbf{O}^{\transpose} & \mathbf{I}_{2K} 
\end{bmatrix}
=
\begin{bmatrix}
\mathbf{B}^{-\transpose} & \mathbf{O} \\
\mathbf{O}^{\transpose} & \mathbf{C}^{-\transpose} -
\mathbf{A}^{\transpose} \mathbf{B A} 
\end{bmatrix},
\]
and similarly
\[
\begin{bmatrix}
\mathbf{I}_{2J} & -\mathbf{AC} \\
\mathbf{O}^{\transpose} & \mathbf{I}_{2K} 
\end{bmatrix}
\begin{bmatrix}
\mathbf{B}^{-\transpose} & -\mathbf{A} \\
\mathbf{A}^{\transpose} & \mathbf{C}^{-\transpose} 
\end{bmatrix}
\begin{bmatrix}
\mathbf{I}_{2J} & \mathbf{O} \\
\mathbf{C} \mathbf{A}^{\transpose} & \mathbf{I}_{2K} 
\end{bmatrix}
=
\begin{bmatrix}
\mathbf{B}^{-\transpose} -
\mathbf{A C A}^{\transpose}  & \mathbf{O} \\
\mathbf{O}^{\transpose} & \mathbf{C}^{-\transpose} 
\end{bmatrix}.
\]
Now, if $\mathbf{D}$ and $\mathbf{E}$ are $2N \times 2N$ matrices and
$\mathbf{D}$ is antisymmetric, then it is well known that
$\Pf(\mathbf{E D E}^{\transpose}) = \Pf \mathbf{D} \cdot \det
\mathbf{E}$.  From which we conclude that 
\[
\Pf \begin{bmatrix}
\mathbf{B}^{-\transpose} & \mathbf{O} \\
\mathbf{O}^{\transpose} & \mathbf{C}^{-\transpose} -
\mathbf{A}^{\transpose} \mathbf{B A} 
\end{bmatrix} = \Pf\begin{bmatrix}
\mathbf{B}^{-\transpose} -
\mathbf{A C A}^{\transpose}  & \mathbf{O} \\
\mathbf{O}^{\transpose} & \mathbf{C}^{-\transpose} 
\end{bmatrix}.
\]
The theorem follows since the Pfaffian of the direct sum of two even
rank antisymmetric matrices is the product of the Pfaffians of the two
matrices.  That is
\[
\Pf(\mathbf{B}^{-\transpose}) \Pf(\mathbf{C}^{-\transpose} -
\mathbf{A}^{\transpose} \mathbf{B A}) = \Pf(\mathbf{B}^{-\transpose} -
\mathbf{A C A}^{\transpose})\Pf(\mathbf{C}^{-\transpose} ). \qedhere
\]

\end{proof}

\bibliography{bibliography}

\noindent\rule{4cm}{.5pt}

\vspace{.25cm}
\noindent {\sc \small Alexei Borodin}\\
{\small California Institute of Technology \\
Pasadena, California \\
email: {\tt borodin@caltech.edu}}

\vspace{.25cm}
\noindent {\sc \small Christopher D.~Sinclair}\\
{\small Max Planck Institut f\"ur Mathematik \\
Bonn, Germany \\
email: {\tt christopher.sinclair@colorado.edu}}

\end{document}